\documentclass[dvipsnames]{amsart}

\usepackage{amsfonts, amssymb, amsmath, amsthm}
\usepackage{mathrsfs}                                                         % Math symbols
\usepackage{setspace}                                                         % Spacing
\usepackage{geometry}                                                         % Spacing
\usepackage{slashed}
\usepackage{xcolor}                                                           % Text coloring
\usepackage{graphicx}                                                         % Graphics
\usepackage{comment}                                                          % Comments

% Theorem environments
\newtheorem{theorem}{Theorem}
\newtheorem{corollary}[theorem]{Corollary}

\newtheorem{proposition}[theorem]{Proposition}
\newtheorem{definition}[theorem]{Definition}
\newtheorem{remark}[theorem]{Remark}

\newtheorem{problem}[theorem]{Problem}

% Numbering conventions
\numberwithin{equation}{section}
\numberwithin{theorem}{section}

% Formatting styles
                                            % Overline
                                           % Underline
\newcommand{\mf}[1]{\mathfrak{#1}}                                            % Mathfrak
\newcommand{\mc}[1]{\mathcal{#1}}                                             % Mathcal
\newcommand{\ms}[1]{\mathsf{#1}}                                              % Math Sans serif
                                              % Math Roman
\newcommand{\mi}[1]{\mathscr{#1}}                                             % Mathscr

% Number system commands
                                                   % Natural numbers
                                                   % Integers
\newcommand{\R}{\mathbb{R}}                                                   % Real numbers
                                                   % Complex numbers
\newcommand{\Sph}{\mathbb{S}}                                                 % Unit sphere

% Vertical quantities
\newcommand{\gv}{\mathsf{g}}                                                  % Vertical metric
\newcommand{\Lv}{\mathsf{m}}                                                  % Vertical second fundamental form
                                                  % Singular second fundamental form
                                                  % Vertical Riemannian metric
                                                  % Vertical curvature
\newcommand{\wv}{\mathsf{w}}                                                  % Weyl curvature decomposition
\newcommand{\Dv}{\mathsf{D}}                                                  % Vertical Levi-Civita connection
                                         % Vertical Christoffel symbols

% Boundary quantities
\newcommand{\gm}{\mf{g}}                                                      % Boundary metric, zero-order (boundary)
                                                      % Boundary Riemannian metric
\newcommand{\Dm}{\mf{D}}                                                      % Boundary Levi-Civita connection
                                                      % Boundary curvature, zero-order (boundary)
                                                    % Boundary Ricci curvature
                                                    % Boundary scalar curvature

% Additional boundary quantities
\newcommand{\gb}[1]{\gm^{\scriptscriptstyle (#1)}}                            % Metric, boundary expansion term
\newcommand{\gbc}[1]{\bar{\gm}^{\scriptscriptstyle (#1)}}                     % Barred metric, boundary expansion term
                                                   % Boundary null convexity coefficient

% Mixed quantities
\newcommand{\nablam}{\breve{\nabla}}                                          % Mixed derivative
                                                % Mixed vertical derivative
\newcommand{\Boxm}{\breve{\Box}}                                              % Mixed wave operator

% Schematic and coordinate notations
                                             % Schematic notation
                                         % Asymptotic notation
                                                 % Multi-index
                                           % Multi-index, with deletion
                                     % Multi-index, with replacement

% Text formatting
\hyphenation{Schwarz-schild}
\hyphenation{pseudo-convexity}

\allowdisplaybreaks
\setcounter{tocdepth}{1}

\begin{document}

\title[Correspondence and Unique Continuation]{Bulk-boundary correspondences and unique continuation in asymptotically Anti-de Sitter spacetimes}

\author{Arick Shao}
\address{School of Mathematical Sciences\\
Queen Mary University of London\\
London E1 4NS\\
United Kingdom}
\email{a.shao@qmul.ac.uk}

\begin{abstract}
This article surveys the research presented by the author at the MATRIX Institute workshop \emph{Hyperbolic Differential Equations in Geometry and Physics} in April 2022.
The work is centred about establishing rigorous mathematical statements toward the AdS/CFT correspondence in theoretical physics, in particular in dynamical settings.
The contents are mainly based on the recent paper \cite{hol_shao:uc_ads_eve} with G.\ Holzegel that proved a unique continuation result for the Einstein-vacuum equations from asymptotically Anti-de Sitter (aAdS) conformal boundaries.
We also discuss some preceding results \cite{chatz_shao:uc_ads_gauge, hol_shao:uc_ads, hol_shao:uc_ads_ns, mcgill_shao:psc_aads, shao:aads_fg}, in particular novel Carleman estimates for wave equations on aAdS spacetimes, which laid the foundations toward the main results of \cite{hol_shao:uc_ads_eve}.
\end{abstract}

\maketitle

\section{Introduction} \label{sec.intro}

One of the most influential research directions in theoretical physics over the past three decades is the \emph{AdS/CFT correspondence}.
AdS/CFT---first formulated by Maldacena \cite{malda:ads_cft_0, malda:ads_cft}---roughly posits a \emph{correspondence between gravitational dynamics in asymptotically Anti-de Sitter spacetimes and conformal field theories on their boundaries}; see also the works of Gubser, Klebanov, and Polyakov \cite{gubs_kleban_polyak:ads_cft} and Witten \cite{witt:ads_cft} for further early developments.
AdS/CFT is also often associated with the term \emph{holography} \cite{thoo:holo_grav}, as the boundary theory is of one dimension lower than the gravitational theory.

In spite of extensive research and discussions in physics, surprisingly few mathematical statements pertaining to the AdS/CFT correspondence have been rigorously proven, or even formulated.
Furthermore, nearly all rigorous results have been within stationary backgrounds (see for instance \cite{alex_baleh_nach:inv_area, and_herz:uc_ricci, and_herz:uc_ricci_err, biq:uc_einstein, witt_yau:ads_cft}), whereas dynamical (time-dependent) settings are almost entirely untreated.

The works described in this article represent an ongoing programme to establish rigorous mathematical theory related to the AdS/CFT correspondence.
The aim is to rigorously state and prove statements along this direction, as well as to better understand the mechanisms behind such correspondences.
In particular, the key results here will be formulated in terms of unique continuation for the Einstein equations.
Also, although discussions of AdS/CFT in physics extend well beyond classical relativistic settings, here we will restrict our attention to within general relativity, as this provides us a physical theory with firmly established mathematical foundations.
 
\subsection{Anti-de Sitter Spacetime}

In Einstein's theory of general relativity, spacetime is modeled as an $( n + 1 )$-dimensional Lorentzian manifold $( \mi{M}, g )$; as a matter of convention, we assume $g$ has signature $( -, +, \dots, + )$.
In this article, we will only consider spacetimes that do not involve matter fields, so that $g$ satisfies the \emph{Einstein-vacuum equations} (\emph{EVE}),
\begin{equation}
\label{eq.eve_pre} \operatorname{Ric} [g] = \frac{ 2 \Lambda }{ n - 1 } \, g \text{,}
\end{equation}
with $\operatorname{Ric} [g]$ denoting the Ricci curvature associated to $g$, and $\Lambda \in \R$ being the \emph{cosmological constant}.
Furthermore, here we are concerned with the case in which $\Lambda < 0$.

\begin{remark}
While classical relativity lies in $3+1$ dimensions, here we work in any dimension, since many settings of interest within the AdS/CFT correspondence concern other dimensions.
\end{remark}

\emph{Anti-de Sitter}, or \emph{AdS}, \emph{spacetime} is the analogue of Minkowski spacetime in the case of negative cosmological constant.
More specifically, it is the maximally symmetric solution of the EVE
\begin{equation}
\label{eq.eve} \operatorname{Ric} [g] = -n \, g \text{,} \qquad \Lambda := \frac{ -n (n-1) }{2} < 0 \text{.}
\end{equation}
While there are many ways to describe AdS spacetime, one useful global representation is as the manifold $( \R_t \times \R^n_x, g_\textrm{AdS} )$, where the metric is given in polar coordinates by
\begin{equation}
\label{eq.ads} g_\textrm{AdS} := ( 1 + r^2 )^{-1} dr^2 - ( 1 + r^2 ) dt^2 + r^2 \mathring{\gamma} \text{,}
\end{equation}
with $\mathring{\gamma}$ being the unit round metric on $\Sph^{n-1}$.

\begin{remark}
One can rescale \eqref{eq.ads} to obtain a solution of \eqref{eq.eve_pre} for any $\Lambda < 0$.
The particular normalization $\smash{ \Lambda = \frac{ -n (n-1) }{2} }$ is convenient for simplifying various constants.
\end{remark}

For our purposes, it will be useful to express \eqref{eq.ads} using an inverted radius $\rho$, defined by
\begin{equation}
\label{eq.ads_rho} 4 r := \rho^{-1} ( 2 + \rho ) ( 2 - \rho ) \text{,} \qquad \rho \in ( 0, 2 ] \text{.}
\end{equation}
Observe that $\rho \searrow 0$ corresponds to ``infinity" $r \nearrow \infty$, while $\rho \nearrow 2$ corresponds to the centre $r \searrow 0$.
A direct computation shows that \eqref{eq.ads_rho} transforms \eqref{eq.ads} into the form
\begin{equation}
\label{eq.ads_fg} g_\textrm{AdS} = \frac{1}{ \rho^2 } \left[ d \rho^2 + ( - dt^2 + \mathring{\gamma}_{n-1} ) - \frac{1}{2} \rho^2 ( dt^2 + \mathring{\gamma}_{n-1} ) + \frac{1}{16} \rho^4 ( - dt^2 + \mathring{\gamma}_{n-1} ) \right] \text{,}
\end{equation}
with $\rho$ being a conformal boundary-defining function.
In anticipation of upcoming terminology, we will refer to \eqref{eq.ads_fg} as a \emph{Fefferman-Graham} (or \emph{FG}) \emph{gauge} for the AdS metric.

In particular, ignoring for the moment the conformal factor $\rho^{-2}$ in \eqref{eq.ads_fg}, we can then formally associate at ``$\rho = 0$" a timelike boundary for AdS spacetime:
\begin{equation}
\label{eq.ads_boundary} ( \mi{I}_\textrm{AdS}, \gm_\textrm{AdS} ) := ( \R_t \times \Sph^{n-1}, -dt^2 + \mathring{\gamma} ) \text{.}
\end{equation}
We refer to \eqref{eq.ads_boundary} as a \emph{conformal boundary}, or \emph{conformal infinity}, of AdS spacetime.

Similar to how one can study asymptotically flat spacetimes that only behave like Minkowski spacetime ``at infinity", we can similarly expand our outlook from AdS spacetime to \emph{asymptotically AdS} (\emph{aAdS}) \emph{spacetimes}.
Very roughly, aAdS spacetimes are those with similar qualitative properties ``at infinity" as AdS spacetime, in particular having a timelike conformal boundary.
We will provide a more precise description of aAdS spacetimes in the subsequent section.

\begin{remark}
A more well-known conformal transformation of AdS spacetime embeds AdS spacetime into half the Einstein cylinder, $\R_t \times \Sph^{n-1}_+$, with metric
\[
g_\textrm{AdS} = \frac{1}{ \cos^2 \theta } [ - dt^2 + d \theta^2 + ( \sin^2 \theta ) \, \mathring{\gamma} ] \text{.}
\]
However, the FG gauge will be more useful, as it extends directly to the study of aAdS metrics.
\end{remark}

\subsection{Unique Continuation}

Informally, our main objective is to address the following question, representing a purely classical statement in support of the AdS/CFT correspondence:

\begin{problem} \label{prb.correspondence_1}
Is there some one-to-one correspondence between (a) asymptotically AdS solutions $( \mi{M}, g )$ of the EVE \eqref{eq.eve} (representing ``gravitational dynamics"), and (b) appropriate data prescribed at the conformal boundary $\mi{I}$ (representing the ``conformal field theory")?
\end{problem}

If we lack additional foresight from partial differential equations (PDEs), then a reasonable attempt at Problem \ref{prb.correspondence_1} might be to \emph{solve \eqref{eq.eve} given appropriate Cauchy (i.e.\ Dirichlet and Neumann) data on $\mi{I}$}.
However, this approach immediately runs into fundamental issues, since the EVE is primarily hyperbolic (in the appropriate gauges).
As a result, \eqref{eq.eve} becomes ill-posed when data is prescribed on a timelike hypersurface, such as the conformal boundary $( \mi{I}, \gm )$.
In particular, given Cauchy data on $\mi{I}$, a solution to \eqref{eq.eve} may not exist, much less depend continuously on the data.

\begin{remark}
In contrast, the appropriate well-posed problem for the EVE in aAdS settings is the initial-boundary value problem, in which Cauchy data is prescribed on a spacelike hypersurface, and \emph{only one} piece of boundary data (e.g.\ Dirichlet, Neumann, or Robin boundary condition) is imposed on the conformal boundary.
For further discussions, see \cite{enc_kam:aads_wave, enc_kam:aads_hol, frie:aads_conf, vasy:wave_aads, warn:wave_aads}.
\end{remark}

We are thus forced to attack Problem \ref{prb.correspondence_1} differently.
The key idea is that we instead formulate Problem \ref{prb.correspondence_1} as a \emph{unique continuation} problem---this is a classical PDE question for ill-posed settings that has an extensive history and literature; see, e.g., \cite{cald:unique_cauchy, carl:uc_strong, holmg:uc_anal, hor:lpdo4} for classical results.
In short, for unique continuation problems, one avoids solving the PDE altogether and instead asks:\ \emph{if the solution exists, then must it be uniquely determined by the prescribed data}?

In light of this, we can restate Problem \ref{prb.correspondence_1} as a unique continuation problem for the EVE:

\begin{problem} \label{prb.correspondence_2}
If two aAdS spacetimes $( \mi{M}, g )$ and $( \bar{\mi{M}}, \bar{g} )$ solving the Einstein-vacuum equations \eqref{eq.eve} have the same conformal boundary data, then must $g$ and $\bar{g}$ be isometric?
\end{problem}

Very informally, the answer to Problem \ref{prb.correspondence_2}---the main result of the joint work \cite{hol_shao:uc_ads_eve} with Holzegel---is \emph{``yes, assuming certain conditions on the conformal boundary $( \mi{I}, \gm )$ hold"}.
The goal of the upcoming sections is to work toward a precise formulation of Problem \ref{prb.correspondence_2} and a precise statement of the main result of \cite{hol_shao:uc_ads_eve}.
Moreover, we will discuss the main ideas behind the proof of this result.

\subsection{Organisation of the Article}

In Section \ref{sec.aads}, we give a precise description of aAdS spacetimes, the setting of our analysis, and we introduce the notion of vertical tensor fields, the main quantities that we will work with.
Next, Section \ref{sec.main} is dedicated to the precise statements of our main result (see Theorem \ref{thm.correspondence}) and its corollaries, as well as the geometric assumptions required for these results to hold.
In Section \ref{sec.carleman}, we discuss recently established Carleman estimates for tensorial wave equations in aAdS spacetimes, which is the main technical innovation behind our results.
Section \ref{sec.proof} introduces some of the main ideas behind the proof of our main result.
Finally, in Section \ref{sec.open}, we briefly discuss some additional questions of interest that are related to the contents of this article.

\subsection{Acknowledgments}

The author thanks the MATRIX Institute for holding the workshop, which provided a wonderful environment for discussions.
Special thanks go to Jesse Gell-Redman, Andrew Hassell, Todd Oliynik, and Volker Schlue for their efforts organising the workshop, as well as to Sakis Chatzikaleas, Gustav Holzegel, Alex McGill, and Simon Guisset for the collaborations and discussions that made this research possible.
Furthermore, a considerable portion of this work was supported by EPSRC grant EP/R011982/1, \emph{Unique continuation for geometric wave equations, and applications to relativity, holography, and controllability}.

\section{Asymptotically AdS Spacetimes} \label{sec.aads}

While we had previously characterized aAdS spacetimes as ``roughly like AdS near and at the conformal boundary", we will need a much more precise definition if we are to state precise results.
In this section, we provide the mathematical setup for studying various relevant quantities on aAdS backgrounds, both near and at the conformal boundary.

\subsection{The Fefferman-Graham Gauge}

Similar to the AdS metric, one can apply a similar transformation to metrics that are only asymptotically like AdS into a FG gauge, characterized by a boundary defining function $\rho$ that is both normalised and fully decoupled from the other components.
As a result, we will, as a matter of convenience, \emph{define} the aAdS spacetimes that we consider in terms of such FG gauges.
We refer to these backgrounds as \emph{FG-aAdS segments}, representing an appropriate portion of an aAdS spacetime near the conformal boundary.

\begin{remark}
See \cite{gra:vol_renorm} for a treatment of FG gauges in asymptotically hyperbolic manifolds, that is, the Riemannian analogue of our setting.
In fact, the argument in \cite{gra:vol_renorm} for reducing more general metrics into FG gauges extends directly to Lorentzian, aAdS settings.
\end{remark}

\begin{definition} \label{def.aads}
Let $( \mi{I}, \gm )$ be a $n$-dimensional Lorentzian manifold, and let $\rho_0 > 0$.
We say that $( \mi{M} := ( 0, \rho_0 ]_\rho \times \mi{I}, g )$ is an \emph{FG-aAdS segment}, with \emph{conformal infinity} $( \mi{I}, \gm )$, if $g$ has the form
\begin{equation}
\label{eq.aads} g = \rho^{-2} ( d \rho^2 + \gv ) \text{,}
\end{equation}
where $\gv := \gv ( \rho )$, $\rho \in ( 0, \rho_0 ]$, is a smooth family of Lorentzian metrics on $\mi{I}$ such that $\gv$ has a continuous limit $\gv (0) := \gm$ as $\rho \searrow 0$.
Furthermore:
\begin{itemize}
\item We refer to the form \eqref{eq.aads} as the \emph{Fefferman-Graham} (\emph{FG}) \emph{gauge condition}.

\item We say $( \mi{M}, g )$ is \emph{vacuum} iff $g$ also satisifes the EVE \eqref{eq.eve}.
\end{itemize}
\end{definition}

Observe in particular that AdS spacetime can itself be formulated as a vacuum FG-aAdS segment, with the standard conformal infinity \eqref{eq.ads_boundary}.
In particular, the FG gauge \eqref{eq.ads_fg} for the AdS metric is precisely of the form \eqref{eq.aads}.
Similarly, the usual Schwarzschild-AdS and Kerr-AdS spacetimes can also be expressed as vacuum FG-aAdS segments, at least near the conformal boundary.
More generally, a large class of vacuum FG-aAdS segments with conformal infinity \eqref{eq.ads_boundary} can be constructed by solving initial-boundary value problems for the EVE; see \cite{enc_kam:aads_hol, frie:aads_conf}.

\begin{remark}
Note that Definition \ref{def.aads} allows for very general conformal boundary topologies.
In particular, we can consider settings with flat conformal boundaries,
\begin{equation}
\label{eq.flat_boundary} \mi{I}_\textrm{pAdS} := \R_t \times \R^n \text{,} \qquad \mi{I}_{tAdS} := \R_t \times \mathbb{T}^n \text{.}
\end{equation}
These are realized as conformal boundaries for the \emph{planar} and \emph{toric AdS spacetimes}.
\end{remark}

\begin{remark}
We refer to $\gv$, which fully describes the spacetime geometry, as the \emph{vertical metric}.
We will discuss vertical tensor fields more generally at the end of this section.
\end{remark}

\subsection{Fefferman-Graham Expansions}

We now turn our attention to vacuum FG-aAdS segments, and can immediately ask the following:\ \emph{if $( \mi{M}, g )$ is also vacuum, then what structure does this impose on $g$ at the conformal boundary}?
The answer comes from the seminal work of Fefferman and Graham \cite{fef_gra:conf_inv, fef_gra:amb_met} on the ambient construction.
While these concerned extending vacuum metrics from null cones, this has since been widely adapted to aAdS settings in the physics literature.

In short, when $( \mi{M}, g )$ is vacuum, one can derive a formal series expansion for $g$ near $\rho=0$:
\begin{align}
\label{eq.aads_exp} \gv ( \rho ) &= \begin{cases} \gb{0} + \gb{2} \rho^2 + \dots + \gb{n-1} \rho^{n-1} + \gb{n} \rho^n + \dots & n \text{ odd,} \\ \gb{0} + \gb{2} \rho^2 + \dots + \gb{n-2} \rho^{n-2} + \gb{\star} \rho^n \log \rho + \gb{n} \rho^n + \dots & n \text{ even.} \end{cases}
\end{align}
The above is commonly known as the \emph{Fefferman-Graham expansion} for $g$.
The $\gb{k}$'s and $\gb{\star}$ are covariant symmetric rank-$2$ tensor fields on $\mi{I}$, with the anomalous logarithmic coefficient $\gb{\star}$ only appearing when the boundary dimension $n$ is even.
Note the leading coefficient $\gb{0} = \gm$ is simply the conformal boundary metric.
Moreover, \eqref{eq.eve} implies that \emph{all the coefficients $\gb{k}$ for $0 < k < n$---as well as $\gb{\star}$ when $n$ is even---are determined locally by $\gm$ and its derivatives}.

\begin{remark}
We will use the notations $\gm$ and $\gb{0}$ interchangeably, depending on context.
\end{remark}

The situation changes once we reach $\gb{n}$.
Here, one can derive from \eqref{eq.eve} that both the divergence and the trace of $\gb{n}$ are determined by $\gm$ and its derivatives.
On the other hand, the remaining components of $\gb{n}$ are free, meaning that they are not formally determined by the EVE.

Finally, the formal expansion \eqref{eq.aads_exp} can be further continued beyond $\gb{n}$, with all subsequent coefficients formally determined by the pair $( \gb{0}, \gb{n} )$ alone.
Also, when $n$ is even, the expansion remains polyhomogeneous beyond $\gb{n}$, involving additional powers of $\log \rho$.

Now, when $( \gb{0}, \gb{n} )$ is real-analytic, Kichenassamy \cite{kichen:fg_log} showed, via Fuchsian techniques, that \eqref{eq.aads_exp} converges near $\rho = 0$ to a vacuum aAdS metric.
However, analyticity is a far too restrictive condition, as this is not satisfied by generic aAdS metrics (e.g.\ arising from an initial-boundary value problem), for which the expansion \eqref{eq.aads_exp} is merely formal.

In these non-analytic settings, in which \eqref{eq.aads_exp} fails to converge, one still expects that a \emph{partial FG expansion} holds, up to some finite order.
(An apt analogy would be applying Taylor's theorem as opposed to Taylor series.)
This was rigorously proved by the author in \cite{shao:aads_fg}---more specifically, assuming minimal regularity conditions, $\gv$ retains the form \eqref{eq.aads_exp}, but only up to $n$-th order.
We state a simplified version of this result below (see \cite[Theorem 3.3]{shao:aads_fg} for the full statement):

\begin{theorem}[\cite{shao:aads_fg}] \label{thm.aads_fg}
Let $( \mi{M}, g )$ be a vacuum FG-aAdS segment, with $n > 1$.
Moreover, suppose:
\begin{itemize}
\item $\gv$ is locally bounded in $C^{ n + 4 }$ in the $\mi{I}$-directions, up to the conformal boundary.

\item $\mi{L}_\rho \gv$ is locally bounded in $C^0$ in the $\mi{I}$-directions, up to the conformal boundary.
\end{itemize}
Then, $\gv$ satisfies the following partial FG expansion near $\rho = 0$:
\begin{equation}
\label{eq.aads_fg} \gv = \begin{cases} \sum_{ k = 0 }^{ \frac{ n - 1 }{2} } \gb{ 2k } \rho^{ 2 k } + \gb{n} \rho^n + \ms{o} ( \rho^n ) & \text{$n$ odd,} \\ \sum_{ k = 0 }^{ \frac{ n - 2 }{2} } \gb{ 2k } \rho^{ 2 k } + \gb{\star} \rho^n \log \rho + \gb{n} \rho^n + \ms{o} ( \rho^n ) & \text{$n$ even.} \end{cases}
\end{equation}
Furthermore, the coefficients of the expansion satisfies the following:
\begin{itemize}
\item $\gb{2k}$, for any $0 \leq 2 k < n$, is determined by $\gm$ and its derivatives:
\begin{equation}
\label{eq.fg_low} \gb{2k} = \mc{R}_n^{ (2k) } ( \partial^{ \leq 2k } \gm ) \text{.}
\end{equation}

\item In particular, $\gb{0} = \gm$, and $-\gb{2}$ is the Schouten tensor for $\gm$ whenever $n \geq 3$:
\begin{equation}
\label{eq.fg_schouten} -\gb{2} = \mc{P} [ \gm ] := \frac{1}{n-2} \left( \operatorname{Ric} [\gm] - \frac{1}{2(n-1)} \operatorname{R} [\gm] \cdot \gm \right) \text{.}
\end{equation}

\item When $n$ is even, $\gb{\star}$ is also determined by $\gm$ and its derivatives:
\begin{equation}
\label{eq.fg_high} \gb{\star} = \mc{R}_n^{ (\star) } ( \partial^{ \leq n } \gm ) \text{.}
\end{equation}

\item Both the $\gm$-divergence and $\gm$-trace of $\gb{n}$ are determined by $\gm$ and its derivatives:
\begin{equation}
\label{eq.fg_constraint} \operatorname{div}_{ \gm } \gb{n} = \mc{R}_n^{ \operatorname{div} } ( \partial^{ \leq n+1 } \gm ) \text{,} \qquad \operatorname{tr}_{ \gm } \gb{n} = \mc{R}_n^{ \operatorname{tr} } ( \partial^{ \leq n } \gm ) \text{.}
\end{equation}
In fact, when $n$ is odd, both $\mc{R}_n^{ \operatorname{div} }$ and $\mc{R}_n^{ \operatorname{tr} }$ are identically zero.
\end{itemize}
\end{theorem}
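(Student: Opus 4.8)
The plan is to reduce the Einstein-vacuum equations \eqref{eq.eve} to a system of equations for the vertical metric $\gv$ in the $\rho$-direction, read off the coefficients in \eqref{eq.aads_fg} from a formal recursion, and then upgrade this formal matching to a genuine expansion with an $\ms{o} ( \rho^n )$ remainder using only the limited regularity at hand. First, I would compute $\operatorname{Ric} [g] + n g = 0$ for the FG-gauge metric \eqref{eq.aads}, decomposing the resulting tensorial identity according to whether its indices are normal (the $\rho$-direction) or tangential (along $\mi{I}$). This yields three groups of equations: a tangential second-order (in $\rho$) equation for $\gv$, a mixed equation relating $\mi{L}_\rho \gv$ to its vertical divergence, and a normal equation controlling the vertical trace of $\mi{L}_\rho^2 \gv$. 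The tangential equation is the heart of the matter: after clearing powers of $\rho$, it has Fuchsian (regular-singular) form at $\rho = 0$, with indicial exponents $0$ and $n$, exactly as one reads off from the AdS model \eqref{eq.ads_fg}, whose two homogeneous modes are the boundary datum $\gb{0}$ and the free datum $\gb{n}$.

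Next I would substitute the trial sum $\sum_k \gb{2k} \rho^{2k}$ (together with $\gb{\star} \rho^n \log \rho$ when $n$ is even) into the tangential equation and match powers of $\rho$. At each even order $2k < n$ the indicial factor $2k \, ( 2k - n )$ is nonzero, so the corresponding relation determines $\gb{2k}$ algebraically from $\gm$ and the already-fixed lower coefficients; inductively this produces \eqref{eq.fg_low}, and matching at order $2$ gives precisely the Schouten tensor \eqref{eq.fg_schouten}. When $n$ is even, the indicial factor vanishes at $2k = n$, so the purely even recursion can no longer close; the resolution is to admit the anomalous $\rho^n \log \rho$ term, and matching its coefficient fixes $\gb{\star}$ as in \eqref{eq.fg_high}. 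Importantly, $\gb{n}$ itself is left free by this tangential matching, reflecting the indicial exponent $n$.

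The essential difficulty, and what distinguishes this from the classical formal or analytic theory, is to show that $\gv$ genuinely agrees with the partial sum up to an $\ms{o} ( \rho^n )$ error, \emph{without} any convergence or analyticity hypothesis. I would set $\mathsf{r}$ equal to $\gv$ minus all constructed terms through order $\rho^n$, derive the Fuchsian equation it satisfies, and note that the choice of coefficients has cancelled every source term up to and including order $\rho^n$, so the inhomogeneity is $\ms{o} ( \rho^n )$; here the $C^{n+4}$ tangential bound controls the vertical curvature and derivative terms entering the source, while the $C^0$ bound on $\mi{L}_\rho \gv$ is what lets the argument start. The obstacle is that the indicial exponent $n$ is exactly the critical weight: integrating the Fuchsian equation naively gives only $\mathsf{r} = O ( \rho^n )$, and extracting the sharp $\ms{o} ( \rho^n )$—equivalently, showing that $\rho^{-n} \mathsf{r}$ converges as $\rho \searrow 0$ and thereby defining $\gb{n}$ as its limit—requires a careful weighted energy estimate integrated down to the boundary, exploiting the precise structure of the operator at the critical exponent rather than mere indicial bookkeeping. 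This is where the bulk of the work lies, and it is what dictates the regularity threshold.

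Finally, with the expansion established, I would return to the mixed and normal equations and insert it. Matching at the relevant order (or integrating these constraints against the expansion) expresses $\operatorname{div}_\gm \gb{n}$ and $\operatorname{tr}_\gm \gb{n}$ through the lower coefficients, which are themselves local in $\gm$; this yields \eqref{eq.fg_constraint}, the extra $\gm$-derivative in $\mc{R}_n^{\operatorname{div}}$ arising from the divergence. When $n$ is odd, only even-order coefficients are present below order $n$, and the terms that would source the trace and divergence of $\gb{n}$ are governed by the same structural mechanism that removes the logarithmic term in odd dimension; they therefore vanish, so $\mc{R}_n^{\operatorname{div}}$ and $\mc{R}_n^{\operatorname{tr}}$ are identically zero, consistent with the absence of a conformal anomaly for odd boundary dimension.
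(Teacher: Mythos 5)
Your proposal follows essentially the same route as the paper's proof (which the survey only sketches, deferring details to \cite{shao:aads_fg}): decompose the EVE in the FG gauge \eqref{eq.aads} into tangential, mixed, and normal components, determine $\gb{2k}$ for $2k<n$ and (for even $n$) $\gb{\star}$ inductively from the Fuchsian/indicial structure at $\rho=0$ with indicial roots $0$ and $n$---equivalently, the paper's ``successive $\rho$-derivatives of the EVE and boundary limits''---then extract the free coefficient $\gb{n}$ by showing $\rho^{-n}$ times the remainder converges at the critical exponent, and finally read off the divergence and trace constraints \eqref{eq.fg_constraint} from the mixed and normal components, with the parity argument giving their vanishing for odd $n$. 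Your framing of the critical-order step as a weighted estimate, where the paper integrates transport equations for Lie derivatives down to the boundary, is a cosmetic rather than substantive difference.
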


\begin{remark}
Here and throughout the rest of this article, we use $\mi{L}$ to denote the Lie derivative, and $\mi{L}_\rho$ for the Lie derivative in the ``$( 0, \rho_0 ]$-direction" of $\mi{M}$.
\end{remark}

\begin{remark}
The relations $\mc{R}_n^{ (2k) }$, $\mc{R}_n^{ (\star) }$, $\mc{R}_n^{ \operatorname{div} }$, $\mc{R}_n^{ \operatorname{tr} }$ in \eqref{eq.fg_low}--\eqref{eq.fg_constraint} are universal, in that they only depend on the dimension $n$, and not on $\mi{I}$ or $\gm$.
Moreover, each of these relations could in principle be computed, although the exact formulas tend to be exceedingly complicated.
\end{remark}

\begin{remark}
The partial FG expansion can be continued beyond $\gb{n}$ to any arbitrary (but finite) order, as long as one assumes sufficient additional regularity on $\gv$.
\end{remark}

The coefficients $\gb{2k}$ ($0 < 2k < n$) and $\gb{\star}$ are generated by computing the limits of $\rho$-derivatives of $\gv$ at the conformal boundary.
These boundary limits are obtained through an inductive process by taking successive $\rho$-derivatives of (an appropriate form of) the EVE.
This induction continues until one reaches $\gb{n}$, which---other than \eqref{eq.fg_constraint}---cannot be obtained from the EVE.
Nonetheless, one can still extract existence of the free coefficient $\gb{n}$ in \eqref{eq.aads_fg}, although it is not determined by any of the preceding coefficients.
For details, see the discussions in \cite{shao:aads_fg}.

\begin{remark}
In particular, the conclusions of \cite{shao:aads_fg} show that in practice, one needs not separately assume that $\gb{n}$ and the $\rho^n$-term in the partial FG expansion exists.
\end{remark}

In particular, Theorem \ref{thm.aads_fg} shows that even though we no longer have a full infinite series expansion, \emph{we can nonetheless still view $( \gb{0}, \gb{n} )$ as free boundary data} (excepting the constraints \eqref{eq.fg_constraint}) \emph{for the EVE}.
Additionally, in the physics literature, $\gb{n}$ is closely connected to the \emph{stress-energy tensor} for the boundary conformal field theory; for further discussions, see \cite{deharo_sken_solod:holog_adscft, sken:aads_set}.

For future convenience, we introduce the following terminology for our boundary data:

\begin{definition} \label{def.holo_data}
In general, we refer to a triple $( \mi{I}, \gb{0}, \gb{n} )$ as \emph{holographic data} iff $( \mi{I}, \gb{0} )$ is an $n$-dimensional Lorentzian manifold, and $\gb{n}$ is a symmetric $2$-tensor on $\mi{I}$ satisfying \eqref{eq.fg_constraint}.

Furthermore, given a vacuum FG-aAdS segment $( \mi{M}, g )$, we refer to the $( \mi{I}, \gb{0}, \gb{n} )$ obtained from Theorem \ref{thm.aads_fg} as the \emph{holographic data} associated to $( \mi{M}, g )$.
\end{definition}

Finally, one can explicitly compute the partial FG coefficients for the Schwarzschild-AdS metric with mass $M \in \R$.
These are well-known in the physics literature (see, for instance, \cite{bautier_englert_rooman_spindel:fg_ads, tetradis:bh_holog}), but they are also computed in \cite{shao:aads_fg}.
In particular, when $n > 2$ and $n \neq 4$, we have
\begin{equation}
\label{eq.fg_schw} \gb{0} = - dt^2 + \mathring{\gamma} \text{,} \qquad \gb{2} = - \frac{1}{2} ( dt^2 + \mathring{\gamma} ) \text{,} \qquad \gb{n} = \frac{M}{n} [ ( n - 1 ) dt^2 + \mathring{\gamma} ] \text{.}
\end{equation}
Note $\gb{0}$ is the same for all Schwarzschild-AdS metrics, and the mass $M$ is encoded in $\gb{n}$.

\subsection{Gauge Transformations}

Though the partial FG expansions of Theorem \ref{thm.aads_fg} capture near-boundary geometries of vacuum FG-aAdS segments, they fail to account for one additional gauge freedom in our setting.
The issue is that Definition \ref{def.aads} of FG-aAdS segments presupposes a choice of FG gauge.
However, there does exist a nontrivial class of transformations $\rho \mapsto \bar{\rho}$ of the boundary defining function that preserve the form \eqref{eq.aads} of the FG gauge.

The key observation is that \eqref{eq.aads} can be achieved by any function $\bar{\rho}$ satisfying
\begin{equation}
\label{eq.fg_define} \bar{\rho} > 0 \text{,} \qquad \lim_{ \rho \searrow 0 } \bar{\rho} = 0 \text{,} \qquad \bar{\rho}^{-2} \, g ( d \bar{\rho}, d \bar{\rho} ) = 1 \text{.}
\end{equation}
(The critical condition is that the $\bar{\rho}^{-2} g$-gradient of $\bar{\rho}$ is unit, while the other components are then transported along the integral curves of this gradient; see \cite{gra:vol_renorm} for details.)
Adopting the ansatz
\[
\bar{\rho} = e^\ms{a} \rho \text{,}
\]
then \eqref{eq.fg_define} reduces to fully nonlinear initial-value problem
\[
( 2 + \rho \mi{L}_\rho \ms{a} ) \mi{L}_\rho \ms{a} + \rho \, \gv^{-1} ( \Dv \ms{a}, \Dv \ms{a} ) = 0 \text{,}
\]
which can be uniquely solved from $\rho = 0$ using the method of characteristics.
This implies that \emph{for any $\mf{a} \in C^\infty ( \mi{I} )$, one can find a unique FG gauge, with boundary defining function $\bar{\rho}$, such that}
\begin{equation}
\label{eq.fg_rho} \lim_{ \rho \searrow 0 } \frac{ \bar{\rho} }{ \rho } = e^\mf{a} \text{.}
\end{equation}

\begin{remark}
Using an appropriate isometry, one can then formulate the FG gauge defined with respect to $\bar{\rho}$ as an FG-aAdS segment (at least near the conformal boundary).
\end{remark}

Of course, these different FG gauges represent the same physical object, since they are different representations of the same spacetime.
On the other hand, a change of FG gauge $\rho \mapsto \bar{\rho}$ induces a corresponding transformation of the partial FG expansion.
For instance, one can show that for the FG change satisfying \eqref{eq.fg_rho}, the boundary metric $\gb{0}$ undergoes a conformal transformation:
\begin{equation}
\label{eq.fg_conformal} \gb{0} \mapsto \gbc{0} = e^{ 2 \mf{a} } \gb{0} \text{.}
\end{equation}
In other words, only the conformal class $[ \gm ]$ of the boundary metric can be invariantly associated with a given aAdS spacetime; this motivates the term ``conformal boundary" for $( \mi{I}, \gm )$.

The other coefficients in \eqref{eq.aads_fg} are also transformed via changes of FG gauge (see \cite{deharo_sken_solod:holog_adscft, imbim_schwim_theis_yanki:diffeo_holog}), though the explicit formulas quickly become complicated.
In general, there do exist explicitly computable (given enough effort) universal functions such that the FG coefficients transform as
\begin{align}
\label{eq.fg_change} \gb{2k} &\mapsto \gbc{2k} = \mc{G}_n^{2k} ( \partial^{ \leq 2k } \mf{a}, \partial^{ \leq 2k } \gb{0} ) \text{,} \\
\notag \gb{\star} &\mapsto \gbc{\star} = \mc{G}_n^{\star} ( \partial^{ \leq n } \mf{a}, \partial^{ \leq n } \gb{0} ) \text{,} \\
\notag \gb{n} &\mapsto \gbc{n} = \mc{G}_n^n ( \partial^{ \leq n } \mf{a}, \partial^{ \leq n } \gb{0}, \gb{n} ) \text{,}
\end{align}
where $0 \leq 2 k < n$.
(In particular, $-\gb{2}$ transforms like the Schouten tensor when $n \geq 3$.)

The physical significance, then, is that \emph{pairs $( \gb{0}, \gb{n} )$ and $( \gbc{0}, \gbc{n} )$ that are related via \eqref{eq.fg_conformal} and \eqref{eq.fg_change} should be viewed as ``the same"}, since they arise from the same aAdS spacetime:

\begin{definition} \label{def.gauge_equiv}
Let $( \mi{I}, \gb{0}, \gb{n} )$, $( \bar{\mi{I}}, \gbc{0}, \gbc{n} )$ denote holographic data, and let $\phi: \mi{I} \leftrightarrow \bar{\mi{I}}$ be a diffeomorphism.
We say that $( \mi{I}, \gb{0}, \gb{n} )$ and $( \bar{\mi{I}}, \gbc{0}, \gbc{n} )$ are $\phi$\emph{-gauge-equivalent} on $\mi{D} \subset \mi{I}$ iff there exists $\mf{a} \in C^\infty ( \mi{I} )$ such that the following relations hold:
\begin{equation}
\label{eq.gauge_equiv} \phi_\ast \gbc{0} |_{ \mi{D} } = e^{ 2 \mf{a} } \gb{0} |_{ \mi{D} } \text{,} \qquad \phi_\ast \gbc{n} |_{ \mi{D} } = \mc{G}^n_n ( \partial^{ \leq n } \mf{a}, \partial^{\leq n } \gb{0}, \gb{n} ) |_{ \mi{D} } \text{.}
\end{equation}
\end{definition}

\begin{remark}
As usual, $\phi_\ast$ in \eqref{eq.gauge_equiv} denotes the pullback of $\phi$.
In practice, by pulling back through $\phi$, we can always restrict to the case in which $\bar{\mi{I}} = \mi{I}$ and $\phi$ is the identity map.
\end{remark}

\subsection{The Vertical Tensor Calculus}

This subsection is devoted to a more detailed discussion of the main quantities we will analyze---vertical tensor fields.
We remark that readers who are only interested in the statement of the main result and its corollaries can skip this part entirely.
On the other hand, the contents here will be relevant to the proof of the main result.

We begin with a precise definition of vertical tensor fields:

\begin{definition} \label{def.vertical}
Let $( \mi{M}, g )$ be an FG-aAdS region.
Then:
\begin{itemize}
\item The \emph{vertical bundle} $\ms{V}^k_l \mi{M}$ of rank $( k, l )$ over $\mi{M}$ is defined to be the manifold consisting of all tensors of rank $( k, l )$ on each level set of $\rho$ in $\mi{M}$:
\begin{equation}
\label{eq.vertical} \ms{V}^k_l \mi{M} = \bigcup_{ \sigma \in ( 0, \rho_0 ] } T^k_l \{ \rho = \sigma \} \text{.}
\end{equation}

\item A section of $\ms{V}^k_l \mi{M}$ is called a \emph{vertical tensor field} of rank $( k, l )$.
\end{itemize}
\end{definition}

In short, a vertical tensor field on $\mi{M}$ is a tensor field that is entirely tangent in every component to the level sets of $\rho$.
Observe also that vertical tensor fields can be equivalently characterized as a family, parametrized by $\rho$, of tensor fields on $\mi{I}$.
In particular, the second interpretation provides a natural notion of limits at the conformal boundary---for a vertical tensor field $\ms{A} = \ms{A} ( \rho )$, its limit as $\rho \searrow 0$ (when exists) is simply a tensor field of the same rank at $\mi{I}$.
This allows one to easily connect quantities in the bulk spacetime with those on the conformal boundary.

One basic example of a vertical tensor field is the vertical metric $\gv = \gv ( \rho )$ in \eqref{eq.aads}.
Note that the boundary limit of $\gv$ is the conformal boundary metric $\gm = \gb{0}$.
Moreover, $\gv$ induces a notion of \emph{vertical covariant derivative} $\Dv$; more specifically, since $\gv ( \rho )$ is a Lorentzian metric for each $\rho$, we can define $\Dv$ by aggregating the Levi-Civita connections of all the $\gv ( \rho )$'s.

Analogues of vertical tensor fields have been widely used in mathematical relativity; these have sometimes been referred to as horizontal tensor fields.
(Common examples include the connection and curvature components on level spheres in a double null foliation.)
However, we will also extend these ideas in some novel ways in order to reach our main results.

\begin{remark}
We adopt the term ``vertical" rather than ``horizontal" here, since the hypersurfaces in our foliation---the level sets of $\rho$---are timelike rather than spacelike.
\end{remark}

In particular, since tensorial wave equations play a key role in the proofs of our results, we will want to make sense of a \emph{spacetime wave operator $\Boxm_g$ applied to vertical tensor fields}.
Furthermore, we aim to do this in a covariant manner, so that the usual operations of geometric analysis---e.g.\ Leibniz rules, integrations by parts---continue to hold.
The upshot is that we will be able to treat vertical tensor fields covariantly, in nearly the same manner as we do scalar fields.

We achieve these goals in a series of steps, with the first being to make sense of first spacetime covariant derivatives of vertical tensor fields.
While the vertical connection $\Dv$ already makes sense of derivatives in vertical directions, we now want to also extend $\Dv$ to the $\rho$-direction.
There are many ways to accomplish this, the standard approach being to define $\Dv_\rho$ as an appropriate projection of the $g$-covariant derivative.
However, here we elect to do this differently, by instead projecting the $\rho^2 g$-covariant derivative.
This is primarily a matter of convenience, since $\rho^2 g$ is well-defined at the conformal boundary, so this definition yield formulas with fewer singular weights.
Observe that, crucially, this extended covariant derivative remains torsion-free:
\[
\Dv_\rho \gv = 0 \text{.}
\]

However, the key issues arise from making proper sense of \emph{second spacetime} derivatives of vertical tensor fields.
To do this covariantly, one requires a small detour---we must extend our vertical tensor calculus to \emph{mixed tensor fields}, containing both spacetime and vertical components:

\begin{definition} \label{def.mixed}
Let $( \mi{M}, g )$ be an FG-aAdS segment.
Then:
\begin{itemize}
\item We define the \emph{mixed bundle} of ranks $( \kappa, \lambda; k, l )$ over $\mi{M}$ to be the tensor product bundle
\begin{equation}
\label{eq.mixed} T^\kappa_\lambda \ms{V}^k_l \mi{M} := T^\kappa_\lambda \mi{M} \otimes \ms{V}^k_l \mi{M} \text{.}
\end{equation}

\item We refer to sections of $T^\kappa_\lambda \ms{V}^k_l \mi{M}$ as \emph{mixed tensor fields} of ranks $( \kappa, \lambda; k, l )$.
\end{itemize}
\end{definition}

The tensor product structure of the mixed bundles induces a natural notion of covariant derivative $\smash{\nablam}$ for mixed tensor fields---in short, \emph{$\smash{\nablam}$ acts like the spacetime connection $\nabla$ on spacetime components and like the vertical connection $\Dv$ on vertical components}.
This behavior can be generated from the following defining relation, which can be shown to uniquely define $\smash{\nablam}$:\ given any vector field $X$ on $\mi{M}$, tensor field $G$ on $\mi{M}$, and vertical tensor field $\ms{B}$, we set
\[
\nablam_X ( G \otimes \ms{B} ) := \nabla_X G \otimes \ms{B} + G \otimes \Dv_X \ms{B} \text{.}
\]

The torsion-free property extends also to the mixed connection $\smash{\nablam}$.
As both the spacetime metric $g$ and the vertical metric $\gv$ are mixed tensor fields, our definition of $\smash{\nablam}$ immediately implies
\[
\nablam g = 0 \text{,} \qquad \nablam \gv = 0 \text{.}
\]

Mixed bundles allow us to now covariant define a spacetime Hessian of a vertical tensor field $\ms{A}$, say of rank $( k, l )$.
Recall that the first derivative $\smash{\nablam \ms{A}}$ is simply the extended vertical connection applied to $\ms{A}$.
The trick is now to view $\smash{\nablam \ms{A}}$ as a mixed tensor field of rank $( 0, 1; k, l )$.
The Hessian \emph{$\smash{\nablam^2 \ms{A}}$ is then naturally defined as $\smash{\nablam}$ applied to the mixed tensor field $\smash{\nablam \ms{A}}$}.

Finally, for the wave operator, since $\smash{\nablam^2 \ms{A}}$ is a mixed tensor field of rank $( 0, 2; k, l )$, we can simply define $\smash{\Boxm_g \ms{A}}$ as the $g$-trace of $\smash{\nablam^2 \ms{A}}$ along the two spacetime components.

\begin{remark}
Since all our connections, by their construction, satisfy the requisite Leibniz rules and are torsion-free, then, with regards to integration by parts, we can treat both vertical and mixed tensor fields in the same way that we treat scalar fields and spacetime tensor fields.
\end{remark}

\begin{remark}
Mixed tensor fields and extended wave operators $\smash{\Boxm_g}$ originated from \cite{shao:ksp}; for aAdS contexts, an earlier formulation using horizontal tensor fields was used \cite{hol_shao:uc_ads, hol_shao:uc_ads_ns}.
(Similar notions were also independently developed and used in \cite{keir:weak_null}.)
The full vertical (and mixed) tensor calculus, in the form shown here, was first constructed in \cite{mcgill_shao:psc_aads, shao:aads_fg} and then used in \cite{chatz_shao:uc_ads_gauge, hol_shao:uc_ads_eve}.
\end{remark}

\section{The Main Result} \label{sec.main}

Having provided, in Section \ref{sec.aads}, a more complete description of our aAdS spacetimes and their properties, we can now return to the main correspondence question posed in this article.
In particular, we can now give a more precise formulation of Problem \ref{prb.correspondence_2}:

\begin{problem}[Correspondence] \label{prb.correspondence}
Let $( \mi{M}, g )$ and $( \bar{\mi{M}}, \bar{g} )$ be two vacuum FG-aAdS segments, and suppose that their respective associated holographic data, $( \mi{I}, \gb{0}, \gb{n} )$ and $( \bar{\mi{I}}, \gbc{0}, \gbc{n} )$, are gauge-equivalent.
Then, must $g$ and $\bar{g}$ be isometric to each other?
\end{problem}

Compared to Problem \ref{prb.correspondence_2}, we have replaced the vague ``aAdS spacetimes" with the precisely defined FG-aAdS segments of Definition \ref{def.aads}.
Furthermore, the ``conformal boundary data" mentioned in Problem \ref{prb.correspondence_2} have been clarified as holographic data, up to gauge-equivalence.

A closely related question is that of \emph{symmetry extension}---\emph{are (appropriately defined) symmetries of the holographic data necessarily inherited by the bulk aAdS spacetime}?
Using the terminology of Section \ref{sec.aads}, we can again provide a more precise statement:

\begin{problem}[Symmetry Extension] \label{prb.symmetry}
Let $( \mi{M}, g )$ be a vacuum FG-aAdS segment, with holographic data $( \mi{I}, \gb{0}, \gb{n} )$.
If $\psi$ is a symmetry of $( \mi{I}, \gb{0}, \gb{n} )$ (that is, a diffeomorphism $\psi: \mi{I} \leftrightarrow \mi{I}$ that preserves $\gb{0}$ and $\gb{n}$ up to gauge-equivalence), then must $\psi$ extend to a isometry of $g$?
\end{problem}

Our main result, Theorem \ref{thm.correspondence} below, will give a positive resolution to Problem \ref{prb.correspondence}.
Answers to Problem \ref{prb.symmetry} will then follow as corollaries; see Theorems \ref{thm.symmetry} and \ref{thm.killing}.

\subsection{Previous Literature}

Before stating our main theorems, we first review some of the existing literature related to Problems \ref{prb.correspondence} and \ref{prb.symmetry}, starting with stationary settings.

First, Biquard \cite{biq:uc_einstein} positively answered the Riemannian analogue of Problem \ref{prb.correspondence}.
More specifically, \cite{biq:uc_einstein} considered asymptotically hyperbolic Einstein manifolds $( N, h )$, which have both a conformal boundary $( \partial N, \mf{h} )$ and an analogous Fefferman-Graham expansion from $\partial N$.
The main result of \cite{biq:uc_einstein} then showed that \emph{the free coefficients $( \mf{h}^{(0)}, \mf{h}^{(n)} )$ in the FG expansion uniquely determine the metric $h$ on $N$}.
The key technical tool in the proof is a unique continuation result of Mazzeo \cite{mazz:uc_inf_ah} from the conformal boundary of asymptotically hyperbolic manifolds.

A similar correspondence result, also for (Riemannian) asymptotically hyperbolic Einstein manifolds, was established independently by Anderson and Herzlich; see \cite{and_herz:uc_ricci, and_herz:uc_ricci_err}.
Notice that in each of the results \cite{and_herz:uc_ricci, and_herz:uc_ricci_err, biq:uc_einstein}, the key PDEs are always elliptic in nature.

Next, the methods of \cite{biq:uc_einstein} were adapted by Chrusciel and Delay \cite{chru_delay:uc_killing} to \emph{Lorentzian} aAdS spacetimes, under the specific restriction that the spacetime is \emph{stationary}.
More specfically, \cite{chru_delay:uc_killing} showed that \emph{if $( \mi{M}, g )$ is time-independent, then $g$ is indeed uniquely determined by its holographic data $( \mi{I}, \gb{0}, \gb{n} )$}.
Furthermore, using similar techniques, \cite{chru_delay:uc_killing} positively addressed Problem \ref{prb.symmetry} in stationary settings---\emph{if $( \mi{M}, g )$ is time-independent, then any Killing vector field on the conformal boundary that also preserves $\gb{n}$ necessarily extends into $\mi{M}$ as a Killing field for $g$}.

\begin{remark}
Even though \cite{chru_delay:uc_killing} treated Lorentzian settings, the assumption of time-independence meant that all the key PDEs are elliptic.
Thus, there remains a fundamental gap between \cite{chru_delay:uc_killing} and the questions posed here in general dynamical settings, which instead involve hyperbolic PDEs.
\end{remark}

In addition, we mention, again in the Riemannian context, the classical result of Graham and Lee \cite{gra_lee:fg_riem_ball}, which proved the existence of asymptotically hyperbolic Einstein metrics on the Poincar\'e ball $\mathbb{B}^n$, with prescribed metric $\mf{h}^{(0)}$ on the conformal boundary $\Sph^{n-1}$ that is sufficiently close to the round metric.
Observe, however, that the above corresponds to solving an elliptic Dirichlet problem, which has no analogue in hyperbolic contexts.

We now turn our attention to dynamical Lorentzian settings.
First, Anderson \cite{and:uc_ads} partly showed that if a vacuum aAdS spacetime $( \mi{M}, g )$ has a stationary conformal boundary and becomes asymptotically stationary as $t \rightarrow \pm \infty$, then $( \mi{M}, g )$ must itself be stationary.
However, the result in \cite{and:uc_ads} is \emph{conditional}, in that \emph{it assumes that a unique continuation property holds for the linearized Einstein equations from the conformal boundary}.
In particular, this unique continuation condition was crucially used to prove a symmetry extension result, in the sense of Problem \ref{prb.symmetry}, for timelike Killing vector fields from the conformal boundary into the bulk spacetime.

\begin{remark}
The main symmetry extension result of this article, Theorem \ref{thm.killing}, can be applied in \cite{and:uc_ads} to prove an improved, \emph{unconditional} version of Anderson's theorem.
\end{remark}

The first concrete steps toward resolving Problem \ref{prb.correspondence}, as well as a precursor to the main result of this article, was provided by McGill \cite{mcgill:loc_ads}, which gave a holographic characterization of locally AdS spacetimes.
Very roughly, \cite{mcgill:loc_ads} established the following:

\begin{theorem}[\cite{mcgill:loc_ads}] \label{thm.correspondence_ads}
Let $( \mi{M}, g )$ be a vacuum FG-aAdS segment with the AdS conformal boundary $( \mi{I}_\textrm{AdS}, g_\textrm{AdS} )$.
Then, $( \mi{M}, g )$ is locally isometric to AdS spacetime (at least close to the conformal boundary) if and only if both $\gb{0}$ is conformally flat and $\gb{n} = 0$ on a sufficiently large timespan of the conformal boundary.
Furthermore, the above criteria for $\gb{0}$ and $\gb{n}$ are gauge-invariant, in the sense that they are preserved by gauge transformations.
\end{theorem}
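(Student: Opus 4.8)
The plan is to reduce the statement to the \emph{vanishing of the Weyl tensor} $W$ of $g$ in a neighbourhood of the conformal boundary. Since $( \mi{M}, g )$ is vacuum, it is Einstein with $\operatorname{Ric}[g] = -n\,g$, and for an Einstein metric the condition $W \equiv 0$ is equivalent to constant sectional curvature, hence to being locally isometric to AdS spacetime. Granting this reduction, the forward implication is immediate: if $g$ is locally AdS near $\mi{I}$, then its holographic data must coincide (up to gauge) with that of AdS, and the explicit FG expansion \eqref{eq.ads_fg} shows the AdS data satisfies $\gb{0} = -dt^2 + \mathring{\gamma}$, which is conformally flat, together with $\gb{n} = 0$; combined with the gauge-invariance of these two conditions, the forward direction follows.

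For the reverse implication, I would first exploit the conformal flatness of $\gb{0}$ to perform a gauge transformation \eqref{eq.fg_conformal} normalising $\gb{0}$ to the standard AdS boundary metric on the prescribed timespan, while preserving $\gb{n} = 0$. The substantive step is then to show that the two boundary conditions together force both $W$ and its transverse derivative to vanish at $\rho = 0$, in the appropriate rescaled sense. Here Theorem \ref{thm.aads_fg} is the key tool: the conformal flatness of $\gb{0}$ (the vanishing of the boundary conformal curvature, i.e.\ the Weyl tensor of $\gb{0}$ for $n \geq 4$ or its Cotton tensor for $n = 3$) controls the leading boundary value of a suitably rescaled $W$, while the vanishing of the free coefficient $\gb{n}$ supplies the complementary, subleading data governing the transverse derivative. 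Recast as a vertical tensor field via the calculus of Section \ref{sec.aads}, $W$ satisfies a tensorial wave equation of the schematic form $\Boxm_g W = \mi{F}(W)$, arising from the second Bianchi identity, with right-hand side that is homogeneous in $W$ (linear and quadratic, with coefficients built from the geometry of $g$).

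The crux, and the main obstacle, is the \emph{unique continuation} step: one must propagate the vanishing Cauchy data for $W$ off the \emph{timelike} conformal boundary into the bulk. This is ill-posed in the classical hyperbolic sense and cannot be reached by ordinary energy estimates; instead it requires the Carleman estimates for tensorial wave equations on aAdS backgrounds discussed in Section \ref{sec.carleman}. These estimates must be robust enough to absorb both the singular conformal weights as $\rho \searrow 0$ and the tensorial lower-order terms, and they carry a geometric pseudoconvexity requirement that surfaces precisely as the hypothesis of a \emph{sufficiently large timespan} on $\mi{I}$. Applying the Carleman estimate yields $W \equiv 0$ in a boundary neighbourhood, and together with the Einstein condition this gives local isometry to AdS.

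Finally, for the gauge-invariance assertion, conformal flatness of $\gb{0}$ is manifestly preserved under \eqref{eq.fg_conformal}, being a conformally invariant property of the boundary metric (and diffeomorphisms preserve it trivially). The invariance of $\gb{n} = 0$, in the presence of conformal flatness, is cleanest to deduce \emph{a posteriori}: once the equivalence above is established, the conjunction of the two conditions characterises the manifestly geometric property of being locally AdS, which is gauge-independent, so the pair of conditions is gauge-invariant. Alternatively, one may verify this directly from the explicit transformation law $\mc{G}_n^n$ in \eqref{eq.fg_change}, checking that its inhomogeneous terms vanish whenever $\gb{0}$ is conformally flat.
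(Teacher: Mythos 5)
Your proposal follows essentially the same route the paper attributes to McGill: reduce local isometry to AdS to vanishing of the Weyl tensor (automatic for an Einstein metric with $\operatorname{Ric}[g] = -n\,g$), use the partial FG expansion of Theorem \ref{thm.aads_fg} to show conformal flatness of $\gb{0}$ and $\gb{n} = 0$ force the (rescaled) Weyl curvature to vanish at the conformal boundary, and propagate this into the bulk via the tensorial Carleman estimates of Section \ref{sec.carleman}, with the ``sufficiently large timespan'' hypothesis entering exactly as the pseudoconvexity requirement. Two fine points to settle in a full write-up: the Carleman estimate carries boundary terms weighted by large powers $\rho^{-\kappa}$, so vanishing of $W$ and its first transverse derivative is not enough---one must boost to sufficiently high-order vanishing by iterating the Bianchi/transport hierarchy, as in the paper's Section \ref{sec.proof}---and for the gauge-invariance assertion you should commit to your direct verification via $\mc{G}_n^n$ in \eqref{eq.fg_change}, since your forward implication already invokes gauge-invariance, making the \emph{a posteriori} deduction circular as ordered.
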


McGill's result was proved using the same setting and tools \cite{hol_shao:uc_ads, hol_shao:uc_ads_ns, mcgill_shao:psc_aads, shao:aads_fg} as our main theorems, most notably the Carleman estimates for tensorial waves near aAdS conformal boundaries.
One key step is a unique continuation result for the spacetime Weyl curvature, which satisfies a tensorial wave equation, and whose vanishing ensures the spacetime is locally AdS.
In particular, \cite{mcgill:loc_ads} represented the first application of the Carleman estimates of \cite{hol_shao:uc_ads, hol_shao:uc_ads_ns, mcgill_shao:psc_aads} to the EVE, and it demonstrated that these estimates are indeed powerful enough to treat the EVE.

More generally, symmetry extension results for Killing vector fields have seen several applications in general relativity, outside of aAdS settings.
For instance, Killing extension theorems have played a crucial role in recent black hole rigidity results \cite{alex_io_kl:hawking_anal, alex_io_kl:unique_bh, alex_io_kl:rigid_bh, gior:killing_em, io_kl:killing, lai_li_yu:bh_kerr_newman, yu:hawking_em}, which showed, in various circumstances, that vacuum (or electrovacuum) stationary spacetimes must be isometric to Kerr (or Kerr-Newman) spacetime.
Next, \cite{peters:cpt_cauchy} proved a Killing extension theorem from compact Cauchy horizons, which can then be applied to questions in both black hole rigidity and cosmic censorship.
Furthermore, the results in \cite{alex_schl:time_periodic} proved non-existence of time-periodic vacuum asymptotically flat spacetimes; a key step is a Killing extension theorem from spacelike and null infinity.

\subsection{Correspondence Results}

The main result of this article provides an \emph{affirmative answer to Problem \ref{prb.correspondence}, provided the conformal boundary also satisfies a gauge-invariant geometric condition}.
We refer to this extra assumption, which in its current form was first identified and formulated by Chatzikaleas and the author in \cite{chatz_shao:uc_ads_gauge}, as the \emph{generalised null convexity criterion}, or \emph{GNCC}.

We will define and further discuss the GNCC (which can seem quite technical at first glance) in the following subsection.
But before doing so, let us first state our main result:

\begin{theorem}[Bulk-boundary correspondence \cite{hol_shao:uc_ads_eve}] \label{thm.correspondence}
Let $n > 2$, and let $( \mi{M}, g )$ and $( \bar{\mi{M}}, \bar{g} )$ denote vacuum FG-aAdS segments, with associated holographic data $( \mi{I}, \gb{0}, \gb{n} )$ and $( \bar{\mi{I}}, \gbc{0}, \gbc{n} )$, respectively.
In addition, fix $\mi{D} \subset \mi{I}$ with $\bar{\mi{D}}$ compact, and assume $( \mi{D}, \gb{0} )$ satisfies the GNCC.

Then, if $( \mi{I}, \gb{0}, \gb{n} )$ and $( \bar{\mi{I}}, \gbc{0}, \gbc{n} )$ are $\phi$-gauge-equivalent on $\mi{D}$, then there exist two neighborhoods $\mi{U} \subset \mi{M}$ and $\bar{\mi{U}} \subset \bar{\mi{M}}$---of $\{0\} \times \mi{D}$ and $\{0\} \times \phi ( \mi{D} )$, respectively---as well as an isometry $\Phi: ( \mi{U}, g ) \leftrightarrow ( \bar{\mi{U}}, \bar{g} )$ whose limit at the conformal boundary is $\phi$.
\end{theorem}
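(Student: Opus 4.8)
\section*{Proof proposal}

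The plan is to recast the correspondence as a \emph{unique continuation} statement for the difference of the two metrics, reduced to a common gauge, and then to close the argument with the tensorial Carleman estimates advertised in Section~\ref{sec.carleman}. First I would exploit the gauge freedom to place both segments over a single coordinate chart. By the remark following Definition~\ref{def.gauge_equiv}, pulling back through $\phi$ lets me assume $\bar{\mi{I}} = \mi{I}$ and $\phi = \mr{id}$; then, applying to $\bar{g}$ the FG gauge change associated with the function $\mf{a}$ from \eqref{eq.gauge_equiv} (realised via \eqref{eq.fg_rho}), I can further arrange $\gbc{0} = \gb{0}$ and $\gbc{n} = \gb{n}$ on $\mi{D}$. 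Both segments now take the form \eqref{eq.aads}, $\rho^{-2} ( d\rho^2 + \gv )$, over a common $( 0, \rho_0 ]_\rho \times \mi{D}$. Since Theorem~\ref{thm.aads_fg} expresses every coefficient in \eqref{eq.aads_fg} through order $\rho^n$ as a \emph{universal} function of $( \gb{0}, \gb{n} )$, matching this pair forces the difference $\delta \gv := \bar{\gv} - \gv$---together with its $\mi{I}$-tangential derivatives and a controlled number of $\mi{L}_\rho$-derivatives---to vanish to order $\ms{o} ( \rho^n )$ as $\rho \searrow 0$.

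Next I would identify the geometric quantities whose differences obey a closed system amenable to the estimate. The essential obstruction is that the EVE \eqref{eq.eve} is not itself hyperbolic; the standard remedy is to pass to the spacetime Weyl curvature, which in vacuum satisfies a tensorial wave equation through the contracted second Bianchi identity. Regarding the (suitably rescaled) Weyl tensor as a vertical tensor field $\ms{W}$ and using the extended wave operator $\Boxm_g$ of Section~\ref{sec.aads}, the difference $\delta \ms{W}$ of the two Weyl fields should satisfy a schematic equation
\[
\Boxm_g ( \delta \ms{W} ) = \ms{F} \big( \delta \ms{W}, \delta \gv, \Dv ( \delta \gv ), \dots \big) \text{,}
\]
with $\ms{F}$ linear in the difference quantities and with coefficients controlled by the two background geometries. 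This wave equation must be coupled to transport (i.e.\ $\mi{L}_\rho$) equations---extracted from the FG-gauge form of \eqref{eq.eve} and from the definition of curvature in terms of $\gv$ and $\Dv$---expressing $\delta \gv$ and the difference of the vertical connections in terms of $\delta \ms{W}$, so that the full wave--transport system closes.

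The heart of the argument is then the tensorial Carleman estimate of Section~\ref{sec.carleman}, whose applicability near the \emph{timelike} conformal boundary is exactly what the GNCC on $( \mi{D}, \gb{0} )$ guarantees: it supplies the pseudoconvexity of the level sets of the Carleman weight that timelike boundaries otherwise lack. Feeding $\delta \ms{W}$ into this estimate, absorbing the right-hand side $\ms{F}$ into the bulk term by taking the Carleman parameter large, and controlling $\delta \gv$ through the coupled transport equations by a Gr\"onwall argument, the $\ms{o} ( \rho^n )$ vanishing established above kills every boundary contribution and forces $\delta \ms{W} \equiv 0$ and $\delta \gv \equiv 0$ on a neighbourhood of $\{ 0 \} \times \mi{D}$. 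Unwinding the identifications gives $\bar{\gv} = \gv$ there, so the identity map in the matched FG coordinates is an isometry between neighbourhoods $\mi{U}$ and $\bar{\mi{U}}$; composing it with the boundary diffeomorphism $\phi$ and the isometry realising the gauge change produces the desired $\Phi : ( \mi{U}, g ) \leftrightarrow ( \bar{\mi{U}}, \bar{g} )$, whose boundary limit is $\phi$.

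The hard part is twofold. Algebraically, one must arrange the Weyl wave equation and the metric/connection transport equations into a genuinely \emph{closed} system whose coefficients carry the correct powers of $\rho$, so that they are admissible weights in the Carleman estimate despite the singular conformal factor $\rho^{-2}$ in \eqref{eq.aads}; a careless choice of difference variables yields a system whose coupling terms are too singular to absorb. Analytically, the vanishing order $\ms{o} ( \rho^n )$ of the data must be matched precisely to the weight exponents of the Carleman estimate, so that the boundary integrals at $\rho = 0$ truly vanish rather than merely staying bounded---this is where the sharp FG matching of the first step, and the compactness of $\bar{\mi{D}}$ ensuring uniform constants, become indispensable.
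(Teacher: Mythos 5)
Your overall architecture does match the paper's proof (gauge reduction via $\phi$ and \eqref{eq.fg_rho}, a wave--transport system for difference quantities built on the Weyl wave equation, Carleman estimates whose pseudoconvexity is supplied by the GNCC), but your central claim that the difference system ``closes'' passes over the precise obstruction on which the whole argument turns. When you subtract the two Weyl wave equations, the term $( \Boxm - \bar{\Boxm} ) \bar{\wv}$ produces \emph{second} vertical derivatives $\Dv^2 ( \gv - \bar{\gv} )$ of the metric difference, while the wave Carleman estimate \eqref{eq.carleman} controls only one derivative of the curvature differences, and the transport equations then yield only one derivative of $\Lv - \bar{\Lv}$ and $\gv - \bar{\gv}$. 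So your schematic source $\ms{F} ( \delta \ms{W}, \delta \gv, \Dv ( \delta \gv ), \dots )$ must in fact contain $\Dv^2 ( \delta \gv )$, and the claimed absorption fails: the system loses a derivative. Your ``hard part'' paragraph flags overly singular $\rho$-weights, but it is the derivative count, not the weights, that breaks the naive scheme, and you offer no mechanism to repair it. The paper's resolution, inspired by Ionescu--Klainerman, is a renormalization of the unknowns: an auxiliary transported field $\ms{Q}$, the curl-type quantity $\ms{B}$ in \eqref{eq.intro_proof_B} (whose $\mi{L}_\rho$-derivative is controlled by $\Dv ( \wv^1 - \bar{\wv}^1 )$), and renormalized curvature differences $\ms{W}^\star$, $\ms{W}^1$, $\ms{W}^2$ as in \eqref{eq.W}, which shift the dangerous $\Boxm ( \gv - \bar{\gv} + \ms{Q} )$ contributions onto the left-hand side. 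Without these (or an equivalent device) the unique continuation step collapses.

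Two further steps would fail as you state them. First, $\ms{o} ( \rho^n )$ vanishing of the differences does \emph{not} kill the boundary contributions in \eqref{eq.carleman}: the boundary integrand involves $\rho^{ - \kappa - 1 } \ms{\Phi}$ with $\kappa$ required to be large (depending on $\gv$, $\mi{D}$, $\sigma$, and the rank), in general far exceeding $n$. The paper must first bootstrap the vanishing to \emph{arbitrarily} high order by integrating the hierarchical, Bianchi-derived transport system \eqref{eq.transport_2} from $\rho = 0$, gaining two powers of $\rho$ per iteration---and the seed $\ms{o} ( \rho^n )$ from the FG matching is exactly what makes the first iteration possible despite the $\rho^{2-n}$ weight there. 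Second, your proposed Gr\"onwall argument for $\delta \gv$ does not mesh with the absorption scheme: the transported quantities appear on the right-hand side of the wave Carleman estimate weighted by $w_\lambda (f)$, so one needs a \emph{transport Carleman estimate} with the matching weight (Proposition \ref{thm.carleman_transport}), summed with the wave estimate so that both right-hand sides absorb simultaneously once $\lambda$ is large; a plain Gr\"onwall integration in $\rho$ provides no such weighted coercive lower bound, and the two-way coupling between \eqref{eq.transport} and \eqref{eq.wave} cannot be untangled sequentially.
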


In particular, Theorem \ref{thm.correspondence} shows that the near-boundary vacuum aAdS geometries (up to isometry) are in one-to-one correspondence with the space of holographic data (up to gauge-equivalence).

\begin{remark}
That $\mi{D}$ in Theorem \ref{thm.correspondence} satisfies the GNCC can be interpreted as $\mi{D}$ being ``sufficiently large" compared to the geometry of $( \mi{I}, \gb{0} )$.
In other words, the correspondence result holds only when the holographic data match (up to gauge) on a sufficiently large boundary region.
\end{remark}

\begin{remark}
As GNCC is a gauge-invariant condition (see Proposition \ref{thm.gncc_gauge} below), then $( \mi{D}, \gb{0} )$ satisfying the GNCC is equivalent to $( \phi ( \mi{D} ), \gbc{0} )$ satisfying the GNCC.
In other words, all assumptions on the conformal boundaries in Theorem \ref{thm.correspondence} are gauge-invariant.
\end{remark}

\begin{remark}
Note the correspondence described in Theorem \ref{thm.correspondence} is local, as the FG-aAdS segments are only isometric near the subregions $\mi{D}$ and $\phi ( \mi{D} )$ of the conformal boundaries.
In particular, one needs additional information to prove a global correspondence.
\end{remark}

\begin{remark}
By pulling back through $\phi$, we can, without loss of generality, reduce Theorem \ref{thm.correspondence} to the case in which $\bar{\mi{M}} = \mi{M}$, $\bar{\mi{I}} = \mi{I}$, and $\phi$ is the identity map on $\mi{I}$; see the beginning of Section \ref{sec.proof}.
In fact, the main result of \cite{hol_shao:uc_ads_eve} is only formulated in this special case.
We state the general case in Theorem \ref{thm.correspondence} in order to highlight the geometric and gauge-covariant nature of the result.
\end{remark}

\begin{remark}
The main result Theorem \ref{thm.correspondence_ads} of \cite{mcgill:loc_ads} can be viewed as a special case of Theorem \ref{thm.correspondence}, for spacetimes that are locally isometric to AdS.
Furthermore, the assumption in Theorem \ref{thm.correspondence_ads} of ``sufficiently large timespan of the conformal boundary" is a direct consequence of the GNCC.
\end{remark}

Next, one important application of Theorem \ref{thm.correspondence} is toward proving symmetry extension results on aAdS spacetimes, which address Problem \ref{prb.symmetry}.
The first such result deals with extending discrete symmetries from the conformal boundary into the bulk spacetime:

\begin{definition} \label{def.symmetry}
Let $( \mi{I}, \gb{0}, \gb{n} )$ denote holographic data, and fix $\mi{D} \subset \mi{I}$ with $\bar{\mi{D}}$ compact.
We say that $\psi: \mi{I}' \rightarrow \mi{I}$, where $\bar{\mi{D}} \subset \mi{I}' \subset \mi{I}$, is a \emph{holographic gauge symmetry} on $\mi{D}$ iff $( \mi{I}', \gb{0}, \gb{n} )$ and $( \psi( \mi{I}' ), \gb{0}, \gb{n} )$ are $\psi$-gauge-equivalent on $\mi{D}$.
If $\psi$ satisfies, in addition,
\begin{equation}
\label{eq.symmetry} \psi_\ast \gb{0} |_{ \mi{D} } = \gb{0} |_{ \mi{D} } \text{,} \qquad \psi_\ast \gb{n} |_{ \mi{D} } = \gb{n} |_{ \mi{D} } \text{.}
\end{equation}
then $\psi$ is called a \emph{holographic symmetry} on $\mi{D}$.
\end{definition}

Informally, a holographic gauge symmetry is a diffeomorphism within $\mi{I}$ that preserves $\gb{0}$ and $\gb{n}$ up to gauge-equivalence, while a holographic symmetry preserves $\gb{0}$ and $\gb{n}$ entirely.

\begin{theorem}[Extension of symmetries \cite{hol_shao:uc_ads_eve}] \label{thm.symmetry}
Let $n > 2$, and let $( \mi{M}, g )$ be a vacuum FG-aAdS segment, with holographic data $( \mi{I}, \gb{0}, \gb{n} )$.
Also, fix $\mi{D} \subset \mi{I}$ with $\bar{\mi{D}}$ compact, and assume $( \mi{D}, \gb{0} )$ satisfies the GNCC.
Then, if $\psi$ is a holographic gauge symmetry on $\mi{D}$, then there is a neighborhood $\mi{U} \subset \mi{M}$ of $\{0\} \times \mi{D}$ such that $\psi$ extends to an isometry $\Psi$ between $( \mi{U}, g )$ and $( \Psi ( \mi{U} ), g )$.
\end{theorem}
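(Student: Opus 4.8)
The plan is to obtain Theorem \ref{thm.symmetry} as an immediate corollary of the bulk-boundary correspondence, Theorem \ref{thm.correspondence}, applied to the degenerate situation in which the two vacuum FG-aAdS segments are taken to be the \emph{same} spacetime $(\mi{M}, g)$.

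Concretely, I would invoke Theorem \ref{thm.correspondence} with $(\bar{\mi{M}}, \bar{g}) := (\mi{M}, g)$, so that both copies carry the identical holographic data $(\mi{I}, \gb{0}, \gb{n})$, and with the boundary diffeomorphism $\phi := \psi$. The key observation is that the hypothesis that $\psi$ is a holographic gauge symmetry on $\mi{D}$ unwinds, via Definitions \ref{def.symmetry} and \ref{def.gauge_equiv}, into exactly the gauge-equivalence premise of Theorem \ref{thm.correspondence}: namely, that $(\mi{I}', \gb{0}, \gb{n})$ and $(\psi(\mi{I}'), \gb{0}, \gb{n})$ are $\psi$-gauge-equivalent on $\mi{D}$, i.e.\ there is $\mf{a} \in C^\infty(\mi{I})$ with $\psi_\ast \gb{0}|_{\mi{D}} = e^{2\mf{a}} \gb{0}|_{\mi{D}}$ and $\psi_\ast \gb{n}|_{\mi{D}} = \mc{G}^n_n(\partial^{\leq n}\mf{a}, \partial^{\leq n}\gb{0}, \gb{n})|_{\mi{D}}$. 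The remaining hypotheses---$\bar{\mi{D}}$ compact, $n > 2$, and $(\mi{D}, \gb{0})$ satisfying the GNCC---transfer verbatim. A small bookkeeping point is that $\psi$ is only defined on $\mi{I}' \supset \bar{\mi{D}}$, so I would restrict the second copy of the segment to $(0, \rho_0] \times \psi(\mi{I}')$, which is harmless since the correspondence is purely local near $\{0\} \times \mi{D}$.

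Theorem \ref{thm.correspondence} then supplies neighborhoods $\mi{U}$ of $\{0\} \times \mi{D}$ and $\bar{\mi{U}}$ of $\{0\} \times \psi(\mi{D})$, together with an isometry $\Phi: (\mi{U}, g) \leftrightarrow (\bar{\mi{U}}, g)$ whose boundary limit is $\psi$. Setting $\Psi := \Phi$, so that $\Psi(\mi{U}) = \bar{\mi{U}}$, gives precisely the asserted bulk extension of $\psi$.

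I expect the only genuine difficulty to be conceptual rather than analytic: all the hard work---the Carleman estimates of Section \ref{sec.carleman} and the unique continuation for the gauged Einstein system---is already encapsulated in Theorem \ref{thm.correspondence}. What is worth emphasising is that, precisely because the correspondence accommodates an arbitrary boundary diffeomorphism $\phi$ (not merely one near the identity), this argument extends \emph{discrete} symmetries such as reflections; such symmetries lie outside the reach of infinitesimal, Killing-field-based methods, which is why the continuous case is treated separately (in Theorem \ref{thm.killing}).
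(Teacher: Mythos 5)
Your proposal is correct and takes essentially the same route as the paper, whose proof likewise consists of applying Theorem \ref{thm.correspondence} with both spacetimes equal to vertical truncations of $( \mi{M}, g )$---over $\mi{I}'$ and $\psi ( \mi{I}' )$, respectively---and with $\phi := \psi$. The only cosmetic difference is that the paper truncates \emph{both} copies, so that $\psi : \mi{I}' \leftrightarrow \psi ( \mi{I}' )$ is genuinely a diffeomorphism between the two conformal infinities, whereas you restrict only the second copy; this is the same harmless bookkeeping point you already flagged.
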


\begin{proof}
This follows immediately from Theorem \ref{thm.correspondence}---assuming $\psi: \mi{I}' \rightarrow \mi{I}$, where $\bar{\mi{D}} \subset \mi{I}' \subset \mi{I}$, then we simply apply Theorem \ref{thm.correspondence} to the vertical truncations of $\mi{M}$ to $\mi{I}'$ and $\psi ( \mi{I}' )$.
\end{proof}

\begin{remark}
If $\psi$ in Theorem \ref{thm.symmetry} is, in addition, a holographic symmetry on $\mi{D}$, then the desired isometry $\Psi$ can be given explicitly in terms of $\psi$ by
\begin{equation}
\label{eq.symmetry_extension} \Psi ( \rho, p ) = ( \rho, \psi (p) ) \text{,} \qquad ( \rho, p ) \in \mi{U} \subset \mi{M} \text{.}
\end{equation}
\end{remark}

Another variant (as well as a direct corollary) of Theorem \ref{thm.symmetry} concerns extending a smooth family of symmetries, in the form of a Killing vector field.
This is stated in the subsequent theorem, which, on its own, can also be seen as an analogue of the symmetry extension results \cite{alex_io_kl:hawking_anal, alex_schl:time_periodic, gior:killing_em, io_kl:killing, peters:cpt_cauchy}, but in the setting of conformal boundaries of vacuum aAdS spacetimes.

\begin{definition} \label{def.killing}
Let $( \mi{I}, \gb{0}, \gb{n} )$ denote holographic data, and fix $\mi{D} \subset \mi{I}$ with $\bar{\mi{D}}$ compact.
We say that a vector field $\mf{K}$ on $\mi{I}'$, where $\bar{\mi{D}} \subset \mi{I}' \subset \mi{I}$, is \emph{holographic gauge Killing} on $\mi{D}$ iff there exist holographic data $( \bar{\mi{I}}, \gbc{0}, \gbc{n} )$ that is $\phi$-gauge-equivalent to $( \mi{I}, \gb{0}, \gb{n} )$ and such that
\begin{equation}
\label{eq.conformal_killing} \mi{L}_{ \phi^\ast \mf{K} } \gbc{0} |_{ \phi ( \mi{D} ) } = 0 \text{,} \qquad \mi{L}_{ \phi^\ast \mf{K} } \gbc{n} |_{ \phi ( \mi{D} ) } = 0 \text{.}
\end{equation}
Furthermore, $\mf{K}$ is called \emph{holographic Killing} on $\mi{D}$ iff
\begin{equation}
\label{eq.killing} \mi{L}_{ \mf{K} } \gb{0} |_{ \mi{D} } = 0 \text{,} \qquad \mi{L}_{ \mf{K} } \gb{n} |_{ \mi{D} } = 0 \text{.}
\end{equation}
\end{definition}

\begin{remark}
As is standard, $\phi^\ast \mf{K}$ denotes the push-forward of $\mf{K}$ through $\phi$.
\end{remark}

Informally, a holographic Killing vector field $\mf{K}$ preserves $\gb{0}$ and $\gb{n}$ along its integral curves, while a holographic gauge Killing field only preserves $\gb{0}$ and $\gb{n}$ up to gauge-equivalence.

\begin{theorem}[Extension of Killing vector fields \cite{hol_shao:uc_ads_eve}] \label{thm.killing}
Let $n > 2$, and let $( \mi{M}, g )$ be a vacuum FG-aAdS segment, with holographic data $( \mi{I}, \gb{0}, \gb{n} )$.
Fix $\mi{D} \subset \mi{I}$ with $\bar{\mi{D}}$ compact, and assume $( \mi{D}, \gb{0} )$ satisfies the GNCC.
Then, if $\mf{K}$ is a holographic gauge Killing vector field on $\mi{D}$, then there is a neighborhood $\mi{U} \subset \mi{M}$ of $\{0\} \times \mi{D}$ such that $\mf{K}$ extends to a ($g$-)Killing vector field $K$ on $\mi{U}$.
\end{theorem}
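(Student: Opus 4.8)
The plan is to deduce Theorem~\ref{thm.killing} from the already-established symmetry extension result, Theorem~\ref{thm.symmetry}, by converting the infinitesimal symmetry $\mf{K}$ into a one-parameter family of discrete (finite) symmetries and then extending each of them. First I would consider the local flow $\psi_s$ generated by $\mf{K}$ on $\mi{I}'$, for $s$ in some interval $(-\epsilon, \epsilon)$; since $\bar{\mi{D}}$ is compact and contained in $\mi{I}'$, these flow maps are well-defined diffeomorphisms from a neighborhood of $\bar{\mi{D}}$ into $\mi{I}$ for $s$ sufficiently small. The key observation is that the holographic gauge Killing condition \eqref{eq.conformal_killing}---namely that $\mf{K}$ preserves $\gbc{0}$ and $\gbc{n}$ in the gauge determined by $\phi$---implies, upon integrating in $s$, that each flow map $\psi_s$ is a holographic gauge symmetry on $\mi{D}$ in the sense of Definition~\ref{def.symmetry}. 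Indeed, the vanishing of the Lie derivatives $\mi{L}_{\phi^\ast \mf{K}} \gbc{0}$ and $\mi{L}_{\phi^\ast \mf{K}} \gbc{n}$ on $\phi(\mi{D})$ is precisely the infinitesimal form of the statement that $(\psi_s)_\ast$ preserves the gauge-equivalence class of the holographic data, so $\psi_s$ satisfies the gauge-equivalence relation \eqref{eq.gauge_equiv} for each $s$.

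Next I would apply Theorem~\ref{thm.symmetry} to each $\psi_s$: since $(\mi{D}, \gb{0})$ satisfies the GNCC and $\psi_s$ is a holographic gauge symmetry on $\mi{D}$, there is a neighborhood $\mi{U}_s \subset \mi{M}$ of $\{0\} \times \mi{D}$ and an isometry $\Psi_s$ of $(\mi{U}_s, g)$ extending $\psi_s$ to the bulk. The content of the argument is then to show that the family $\{\Psi_s\}$ can be organized into a genuine one-parameter group of isometries whose generator is the desired bulk Killing field $K$. The natural definition is to set $K := \left. \frac{d}{ds} \right|_{s=0} \Psi_s$, and then to verify that $K$ is a well-defined smooth vector field on a common neighborhood $\mi{U}$ of $\{0\} \times \mi{D}$, that its boundary limit is $\mf{K}$, and that it is Killing for $g$. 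That $K$ is Killing should follow from the fact that each $\Psi_s$ is an isometry together with the group-like structure $\Psi_{s+t} = \Psi_s \circ \Psi_t$ near the boundary; differentiating the isometry relation $\Psi_s^\ast g = g$ in $s$ at $s = 0$ yields $\mi{L}_K g = 0$.

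The main obstacle I anticipate is establishing the requisite \emph{regularity and compatibility} of the family $\{\Psi_s\}$: Theorem~\ref{thm.symmetry} produces, for each fixed $s$, an isometry on some neighborhood $\mi{U}_s$, but a priori neither the domains $\mi{U}_s$ nor the smooth dependence of $\Psi_s$ on $s$ are controlled, and one must also confirm the group law $\Psi_{s+t} = \Psi_s \circ \Psi_t$ rather than merely that each $\Psi_s$ is \emph{an} extension. The resolution rests on the \emph{uniqueness} half of the correspondence theorem: because the bulk isometry extending a given boundary gauge symmetry is unique near the boundary (this being the essential uniqueness content underlying Theorem~\ref{thm.correspondence}), the extensions $\Psi_s$ are canonically determined by $\psi_s$, and hence both inherit the smooth and group-theoretic structure of the boundary flow $s \mapsto \psi_s$. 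Concretely, uniqueness forces $\Psi_{s+t}$ and $\Psi_s \circ \Psi_t$ to agree where both are defined (since both extend $\psi_{s+t} = \psi_s \circ \psi_t$), and a standard shrinking argument over the compact set $\bar{\mi{D}}$ produces a uniform neighborhood $\mi{U}$ on which all $\Psi_s$ for small $s$ are defined and depend smoothly on $s$. With smoothness and the group law in hand, the generator $K$ is a bona fide smooth Killing field, completing the proof.
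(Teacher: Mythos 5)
Your overall strategy---flowing $\mf{K}$ out into a one-parameter family $\psi_s$, extending each $\psi_s$ into the bulk via Theorem \ref{thm.symmetry}, and defining $K$ as the generator of the resulting isometries $\Psi_s$---is precisely the paper's strategy. The genuine divergence is in how the coherence of the family $\{ \Psi_s \}$ (common domain, group law, smooth dependence on $s$) is secured. The paper performs a reduction \emph{first}: by pushing through the gauge transformation $\phi$ of Definition \ref{def.killing} and invoking Theorem \ref{thm.correspondence}, it reduces to the case in which $\mf{K}$ is exactly holographic Killing, i.e.\ \eqref{eq.killing} holds with no gauge freedom left. Then each flow map $\psi_s$ is a genuine holographic symmetry, and---by the remark following Theorem \ref{thm.symmetry}---its bulk extension is \emph{explicit}, $\Psi_s ( \rho, p ) = ( \rho, \psi_s (p) )$ as in \eqref{eq.symmetry_extension}. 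All of the difficulties you correctly anticipate then evaporate: the $\Psi_s$ are visibly defined on a common neighborhood of $\{0\} \times \mi{D}$, smooth in $s$, and satisfy the group law, and the generator $K$ is simply the trivial ($\rho$-independent) lift of $\mf{K}$, so its boundary limit and Killing property are immediate. You instead retain the gauge freedom throughout, extend each gauge symmetry separately, and appeal to uniqueness of bulk extensions to organize the family.

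Two cautions about your resolution. First, Theorem \ref{thm.correspondence} as stated is an \emph{existence} statement (there exists an isometry whose boundary limit is $\phi$); it does not assert that this isometry is unique, so your parenthetical ``this being the essential uniqueness content underlying Theorem \ref{thm.correspondence}'' cannot simply be quoted. The uniqueness you need is in fact true, but it requires its own argument: an isometry of a vacuum FG-aAdS segment whose boundary limit is the identity with trivial conformal factor must preserve the FG boundary-defining function, by the uniqueness of the characteristic solve for \eqref{eq.fg_define}--\eqref{eq.fg_rho}, and hence must be the identity near the boundary. Second, even granting uniqueness, smoothness of $s \mapsto \Psi_s$ does not follow from uniqueness alone---uniqueness yields the group law and canonical determination of each $\Psi_s$, but transferring the smooth structure of the boundary flow to the bulk family needs either an explicit formula (as the paper's reduction provides) or a separate regularity argument for locally defined one-parameter families of isometries. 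Adopting the paper's reduction step would close both of these holes at once, and would also simplify your verification that each $\psi_s$ is a holographic symmetry, since one then integrates \eqref{eq.killing} directly rather than conjugating the flow through $\phi$ and the transformation laws \eqref{eq.gauge_equiv}.
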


\begin{proof}
By pushing through the gauge transformation $\phi$ in Definition \ref{def.killing} and relying on Theorem \ref{thm.correspondence}, we can reduce to the case in which $\mf{K}$ is holographic Killing on $\mi{D}$ and \eqref{eq.killing} holds. 
Let $\psi_s$, for small $s \in \R$, be the family of holographic symmetries generated by transporting along the integral curves of $\mf{K}$.
By Theorem \ref{thm.symmetry}, these extend to a family of $g$-isometries $\Psi_s$ in the bulk spacetime.
The Killing vector field $K$ is then obtained as the generator of the isometries $\Psi_s$.
\end{proof}

One immediate consequence of Theorem \ref{thm.killing} and the classical Birkhoff theorem is the following holographic rigidity result for Schwarzschild-AdS spacetimes:

\begin{corollary}[Rigidity of Schwarzschild-AdS spacetime] \label{thm.sads}
Let $n > 2$, and let $( \mi{M}, g )$ be a vacuum FG-aAdS segment, with holographic data $( \mi{I}_\textrm{AdS}, \gb{0}, \gb{n} )$.
Also, fix $\mi{D} \subset \mi{I}$ with $\bar{\mi{D}}$ compact, and assume $( \mi{D}, \gb{0} )$ satisfies the GNCC.
If $( \gb{0}, \gb{n} )$ is spherically symmetric on $\mi{D}$, then there is a neighborhood $\mi{U}$ of $\{0\} \times \mi{D}$ such that $( \mi{U}, g )$ is isometric to part of a Schwarzschild-AdS spacetime.
\end{corollary}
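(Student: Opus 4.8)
The plan is to deduce Corollary \ref{thm.sads} from Theorem \ref{thm.killing} by showing that spherical symmetry of the holographic data produces a collection of holographic Killing vector fields on $\mi{D}$, and then to combine the resulting bulk Killing fields with Birkhoff's theorem. First I would observe that the round sphere factor of the AdS conformal boundary $(\mi{I}_\textrm{AdS}, \gb{0}) = (\R_t \times \Sph^{n-1}, -dt^2 + \mathring{\gamma})$ carries the rotation group $SO(n)$, whose Lie algebra $\mf{so}(n)$ is spanned by the Killing vector fields of the unit round metric $\mathring{\gamma}$ on $\Sph^{n-1}$, extended trivially in $t$. The hypothesis that $(\gb{0}, \gb{n})$ is spherically symmetric on $\mi{D}$ means precisely that each such rotational generator $\mf{K}$ satisfies $\mi{L}_{\mf{K}} \gb{0}|_{\mi{D}} = 0$ and $\mi{L}_{\mf{K}} \gb{n}|_{\mi{D}} = 0$, so that each $\mf{K}$ is holographic Killing on $\mi{D}$ in the sense of \eqref{eq.killing}.

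Next I would apply Theorem \ref{thm.killing} to each of these finitely many generators. Since $(\mi{D}, \gb{0})$ satisfies the GNCC by assumption, each holographic (gauge) Killing field $\mf{K}$ extends to a genuine $g$-Killing vector field $K$ on a neighborhood $\mi{U} \subset \mi{M}$ of $\{0\} \times \mi{D}$; after shrinking to a common neighborhood, I obtain a full family of Killing fields on a single $\mi{U}$ realizing the $\mf{so}(n)$ action in the bulk. One should check that these extensions close up to an honest isometric $SO(n)$-action rather than merely an infinitesimal one: because the $\Psi_s$ in the proof of Theorem \ref{thm.killing} are built by extending the boundary symmetry flows, and extension is functorial with respect to composition (by uniqueness of the extension, which follows from the underlying unique continuation), the commutation relations of the boundary generators are inherited by their bulk counterparts, yielding a local isometric action of the rotation group. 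Thus $(\mi{U}, g)$ is a spherically symmetric vacuum solution of \eqref{eq.eve} near the conformal boundary.

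Finally I would invoke the Birkhoff theorem for the EVE with negative cosmological constant, which states that a spherically symmetric solution of \eqref{eq.eve} is locally isometric to a member of the Schwarzschild-AdS family (the mass $M$ being the single remaining parameter). Applying this to $(\mi{U}, g)$ identifies it, after possibly shrinking $\mi{U}$, with part of a Schwarzschild-AdS spacetime, which is the desired conclusion.

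The main obstacle I anticipate is the passage from infinitesimal to global symmetry, namely verifying that the individually extended Killing fields genuinely assemble into an isometric $SO(n)$-action and that the spherical-symmetry hypothesis on $\gb{n}$ is strong enough to match the correct reduced structure. In principle one must confirm that the bulk Killing fields satisfy the same Lie bracket relations as the boundary rotations and that their common orbits are spacelike spheres, so that Birkhoff's theorem applies in the intended form; this is where the uniqueness inherent in the unique continuation, together with the completeness of the rotation group on $\Sph^{n-1}$, must be used carefully. Once the local $SO(n)$-symmetry is established, the remaining step is a standard, if somewhat involved, reduction of the vacuum equations under spherical symmetry.
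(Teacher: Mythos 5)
Your proposal is correct and follows exactly the route the paper intends: the corollary is stated there as an immediate consequence of Theorem \ref{thm.killing} (extension of the rotational holographic Killing fields, which spherical symmetry of $( \gb{0}, \gb{n} )$ on $\mi{D}$ supplies) combined with the classical Birkhoff theorem for the vacuum equations with $\Lambda < 0$. Your additional care about the extended Killing fields closing into an $\mf{so}(n)$ action via uniqueness of the extension is a reasonable filling-in of a step the paper leaves implicit, not a deviation from its argument.
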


\begin{remark}
If one takes the AdS conformal boundary, $\gb{0} = \gm_\textrm{AdS}$, in Corollary \ref{thm.sads}, then its conclusions hold with $\mi{D} := \{ t_- < t < t_+ \}$ whenever $t_+ - t_- > \pi$; see Proposition \ref{thm.gncc_ads} below.
\end{remark}

Finally, note that Theorem \ref{thm.symmetry} also can be applied toward discrete symmetries that are not generated by Killing vector fields.
In particular, one consequence of Theorem \ref{thm.symmetry} is that time periodicity of the conformal boundary is inherited by the bulk spacetime:

\begin{corollary}[Extension of time periodicity] \label{theo:symmetries_periodic}
Let $n > 2$, and let $( \mi{M}, g )$ denote a vacuum FG-aAdS segment, with holographic data $( \mi{I}, \gb{0}, \gb{n} )$.
In addition, fix $\mi{D} \subset \mi{I}$ with $\bar{\mi{D}}$ compact, and assume that $( \mi{D}, \gb{0} )$ satisfies the GNCC.
If $( \gb{0}, \gb{n} )$ is time-periodic on $\mi{D}$, then there exists a neighborhood $\mi{U}$ of $\{0\} \times \mi{D}$ such that $( \mi{U}, g )$ is also time-periodic.
\end{corollary}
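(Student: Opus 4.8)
The plan is to recognise the time-periodicity of the holographic data as a single discrete holographic symmetry on $\mi{D}$, and then to extend it into the bulk by a direct appeal to Theorem \ref{thm.symmetry}; no new analysis beyond that theorem should be needed. To make this precise, recall that time-periodicity presupposes a time coordinate $t$ on $\mi{I}$ (so that $\mi{I}$ splits as $\R_t \times \Sigma$ along the time direction), and that $( \gb{0}, \gb{n} )$ being time-periodic on $\mi{D}$ means that for some period $T > 0$ the time-shift diffeomorphism $\psi_T : ( t, x ) \mapsto ( t + T, x )$ satisfies $( \psi_T )_\ast \gb{0} |_{ \mi{D} } = \gb{0} |_{ \mi{D} }$ and $( \psi_T )_\ast \gb{n} |_{ \mi{D} } = \gb{n} |_{ \mi{D} }$. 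Since $\bar{\mi{D}}$ is compact, I would restrict $\psi_T$ to an open $\mi{I}'$ with $\bar{\mi{D}} \subset \mi{I}' \subset \mi{I}$, so that $\psi_T : \mi{I}' \to \mi{I}$ has exactly the form required in Definition \ref{def.symmetry}.

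First I would verify that $\psi_T$ is a holographic symmetry on $\mi{D}$. The two identities above are precisely the conditions \eqref{eq.symmetry}, so it remains only to check that $\psi_T$ is a holographic gauge symmetry. This holds with the trivial gauge $\mf{a} \equiv 0$: with this choice, the gauge-equivalence \eqref{eq.gauge_equiv} reduces to $( \psi_T )_\ast \gb{0} |_{ \mi{D} } = \gb{0} |_{ \mi{D} }$ together with $( \psi_T )_\ast \gb{n} |_{ \mi{D} } = \mc{G}^n_n ( 0, \partial^{ \leq n } \gb{0}, \gb{n} ) |_{ \mi{D} }$, and since the vanishing gauge transformation acts trivially we have $\mc{G}^n_n ( 0, \partial^{ \leq n } \gb{0}, \gb{n} ) = \gb{n}$. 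Both requirements are then furnished by time-periodicity.

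Having established $\psi_T$ as a holographic symmetry, and carrying over the standing hypotheses that $\bar{\mi{D}}$ is compact and $( \mi{D}, \gb{0} )$ satisfies the GNCC, I would invoke Theorem \ref{thm.symmetry} to obtain a neighbourhood $\mi{U}$ of $\{0\} \times \mi{D}$ and an isometry $\Psi$ between $( \mi{U}, g )$ and $( \Psi ( \mi{U} ), g )$ extending $\psi_T$. The key point is that, because $\psi_T$ is a genuine holographic symmetry rather than merely a gauge symmetry, the remark following Theorem \ref{thm.symmetry} gives $\Psi$ explicitly as $\Psi ( \rho, p ) = ( \rho, \psi_T ( p ) )$; in the FG coordinates $( \rho, t, x )$ this is $\Psi ( \rho, t, x ) = ( \rho, t + T, x )$, i.e.\ the pure bulk time-shift by period $T$. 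That $\Psi$ is an isometry then says exactly that $g$ is invariant under $t \mapsto t + T$ near $\{0\} \times \mi{D}$, which is the claimed time-periodicity of $( \mi{U}, g )$. I expect the only genuine subtlety to be definitional rather than analytic: one must package time-periodicity correctly as \eqref{eq.symmetry} for $\psi_T$, and then confirm that the extended isometry is a \emph{pure} time translation preserving the $\rho$-foliation rather than some less structured isometry merely inducing $\psi_T$ on the boundary---and it is precisely the explicit formula $\Psi ( \rho, p ) = ( \rho, \psi_T ( p ) )$ that resolves this, since it shows $\Psi$ acts fibrewise by the boundary time-shift. All the substantive work, namely the Carleman estimates and the correspondence of Theorem \ref{thm.correspondence}, has already been absorbed into Theorem \ref{thm.symmetry}, so no further estimates are required here.
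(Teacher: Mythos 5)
Your proof is correct and follows exactly the route the paper intends: the paper states this corollary as an immediate consequence of Theorem \ref{thm.symmetry}, with the time-shift $\psi_T$ serving as the holographic symmetry and the explicit fibrewise extension \eqref{eq.symmetry_extension} guaranteeing that the bulk isometry is a pure time translation. Your verification that the trivial gauge $\mf{a} \equiv 0$ makes \eqref{eq.gauge_equiv} collapse to \eqref{eq.symmetry}, and your remark that $\Psi(\rho,p) = (\rho, \psi_T(p))$ preserves the $\rho$-foliation, are precisely the (definitional) points the paper leaves implicit.
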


\subsection{Null Convexity Criteria}

Having stated our key results, we now turn our attention toward the crucial geometric assumption required for these results---the GNCC of \cite{chatz_shao:uc_ads_gauge}.
The first task is to state the GNCC, adapted here to the special case of vacuum aAdS spacetimes:

\begin{definition}[GNCC \cite{chatz_shao:uc_ads_gauge}] \label{def.gncc}
Let $( \mi{I}, \gm )$ be an $n$-dimensional Lorentzian manifold, and fix any Riemannian metric $\mf{h}$ on $\mi{I}$.
In addition, fix an open subset $\mi{D} \subset \mi{I}$, with $\bar{\mi{D}}$ compact.
We say that $( \mi{D}, \mf{g} )$ satisfies the \emph{generalized null convexity criterion} (abbreviated \emph{GNCC}) iff there is a $C^4$-function $\eta$, defined on a neighborhood of $\bar{\mi{D}}$, such that the following hold:
\begin{itemize}
\item $\eta > 0$ on $\mi{D}$.

\item $\eta = 0$ on the boundary $\partial \mi{D}$.

\item There exists $c > 0$ such that the following holds on $\mi{D}$ for any $\mf{g}$-null vector field $\mf{Z}$,
\begin{equation}
\label{eq.gncc} ( \Dm^2 \eta + \eta \cdot \mc{P} [ \gm ] ) ( \mf{Z}, \mf{Z} ) > c \eta \cdot \mf{h} ( \mf{Z}, \mf{Z} ) \text{,}
\end{equation}
where $\Dm^2$ and $\mc{P} [ \gm ]$ denote the Hessian and Schouten tensor with respect to $\gm$, respectively.
\end{itemize}
\end{definition}

\begin{remark}
Note Definition \ref{def.gncc} is independent of the choice of $\mf{h}$, since $\bar{\mi{D}}$ is compact.
Moreover, observe that $\mc{P} [ \gm ]$ in \eqref{eq.gncc} can be replaced by $\smash{ \frac{1}{n-2} \operatorname{Ric} [ \gm ] }$, since their difference is proportional to $\gm$---see \eqref{eq.fg_schouten}---and hence vanishes along all $\mf{g}$-null directions.
\end{remark}

\begin{remark}
The GNCC can be extended to FG-aAdS segments that are not vacuum by replacing the Schouten tensor $\mc{P} [ \gm ]$ in \eqref{eq.gncc} with $- \gb{2}$.
(Note the general case reduces to Definition \ref{def.gncc} in vacuum settings due to \eqref{eq.fg_schouten}.)
This is the GNCC that was stated in \cite{chatz_shao:uc_ads_gauge}.
\end{remark}

One important feature of the GNCC is that it is a conformally invariant property.
In particular, \cite[Proposition 3.6]{chatz_shao:uc_ads_gauge} showed, in the context of Definition \ref{def.gncc}, that \emph{if $( \mi{D}, \gm )$ satisfies the GNCC, with associated function $\eta$, then $( \mi{D}, e^{ 2 \mf{a} } \gm )$ also satisfies the GNCC, with associated function $\eta' := e^\mf{a} \eta$}.
We can immediately rephrase the above in terms of our holographic gauge-invariance.

\begin{proposition} \label{thm.gncc_gauge}
Let $( \mi{I}, \gb{0}, \gb{n} )$, $( \bar{\mi{I}}, \gbc{0}, \gbc{n} )$ denote holographic boundary data, and suppose $( \mi{I}, \gb{0}, \gb{n} )$, $( \bar{\mi{I}}, \gbc{0}, \gbc{n} )$ are $\phi$-gauge-equivalent.
Moreover, let $\mi{D} \subset \mi{I}$ be open, with $\bar{\mi{D}}$ compact.
Then, $( \mi{D}, \gb{0} )$ satisfies the GNCC if and only if $( \phi ( \mi{D} ), \gbc{0} )$ satisfies the GNCC.
\end{proposition}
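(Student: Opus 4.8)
\section*{Proof proposal}

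The plan is to factor the gauge transformation into two independent pieces---a diffeomorphism $\phi$ and a conformal rescaling by $e^{2\mf{a}}$---and to establish the equivalence of the GNCC under each separately, then chain them into the desired biconditional. The first observation is that the GNCC of Definition \ref{def.gncc} depends only on the boundary metric $\gb{0}$, and not at all on $\gb{n}$. Hence the second relation in \eqref{eq.gauge_equiv} is irrelevant here, and the only consequence of $\phi$-gauge-equivalence that I need is the conformal relation $\phi_\ast \gbc{0} = e^{2\mf{a}} \gb{0}$ on $\mi{D}$, for some $\mf{a} \in C^\infty ( \mi{I} )$.

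First I would dispose of the diffeomorphism. Since $\phi \colon \mi{I} \leftrightarrow \bar{\mi{I}}$ is a diffeomorphism and every ingredient entering Definition \ref{def.gncc}---the defining function $\eta$, its $\gm$-Hessian $\Dm^2 \eta$, the Schouten tensor $\mc{P} [ \gm ]$, the class of null vector fields, and the comparison inequality \eqref{eq.gncc}---is natural under diffeomorphisms, the entire condition transforms covariantly. Concretely, $( \phi ( \mi{D} ), \gbc{0} )$ satisfies the GNCC with associated function $\eta$ if and only if $( \mi{D}, \phi_\ast \gbc{0} )$ satisfies the GNCC with associated function $\phi_\ast \eta = \eta \circ \phi$: pulling back through $\phi$ carries $\Dm^2_{ \gbc{0} } \eta$, $\mc{P} [ \gbc{0} ]$, and $\gbc{0}$-null vectors to the corresponding objects for $\phi_\ast \gbc{0}$, and it preserves the sign conditions on $\eta$ together with the strict inequality \eqref{eq.gncc} (the Riemannian comparison metric $\mf{h}$ plays no essential role, being immaterial by compactness of $\bar{\mi{D}}$). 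This reduces the claim to the equivalence of the GNCC for $( \mi{D}, \gb{0} )$ and for $( \mi{D}, \phi_\ast \gbc{0} )$.

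Next I would invoke the conformal invariance. Using $\phi_\ast \gbc{0} = e^{2\mf{a}} \gb{0}$ on $\mi{D}$, the remaining equivalence is precisely the one between $( \mi{D}, \gb{0} )$ and $( \mi{D}, e^{2\mf{a}} \gb{0} )$, which is supplied by \cite[Proposition 3.6]{chatz_shao:uc_ads_gauge}: if $( \mi{D}, \gb{0} )$ satisfies the GNCC with associated function $\eta$, then $( \mi{D}, e^{2\mf{a}} \gb{0} )$ satisfies it with associated function $e^{\mf{a}} \eta$; the converse follows by running the same statement with base metric $e^{2\mf{a}} \gb{0}$ and conformal factor $e^{-2\mf{a}}$. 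Since $\mf{a}$ is smooth on a neighborhood of the compact set $\bar{\mi{D}}$, the rescaled function $e^{\mf{a}} \eta$ retains the required $C^4$ regularity and the correct signs on $\mi{D}$ and $\partial \mi{D}$. Chaining the conformal and diffeomorphism equivalences then yields the biconditional, with the associated functions related by $\eta \mapsto ( e^{\mf{a}} \eta ) \circ \phi^{-1}$.

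The substantive analytic content---how $\Dm^2 \eta$ and $\mc{P} [ \gm ]$ transform under a conformal change, and why the correction terms in \eqref{eq.gncc} are still controlled after rescaling $\eta$ by $e^{\mf{a}}$---is exactly what \cite[Proposition 3.6]{chatz_shao:uc_ads_gauge} furnishes, so I expect the work here to be organisational rather than computational: tracking the two transformations in the correct order, and confirming that the restriction of the conformal relation to $\mi{D}$ suffices (the inequality \eqref{eq.gncc} is only tested on $\mi{D}$, while $\eta$ and $\mf{a}$ live on a neighborhood of $\bar{\mi{D}}$, so no extension issue arises). The one point deserving genuine care is the bookkeeping of $\phi_\ast$ as a \emph{pullback} in this paper's convention, so that the direction of each transformation---and hence the composition in the final associated function---is consistent throughout.
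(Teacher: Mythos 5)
Your proposal is correct and takes essentially the same route as the paper: the paper derives Proposition \ref{thm.gncc_gauge} by citing the conformal invariance of the GNCC from \cite[Proposition 3.6]{chatz_shao:uc_ads_gauge} (with the rescaled associated function $e^{\mf{a}} \eta$) and then ``immediately rephrasing'' in terms of gauge-equivalence, with the diffeomorphism-naturality step left implicit. Your additional bookkeeping---that $\gb{n}$ plays no role, that the converse follows by applying the same statement with conformal factor $e^{-2\mf{a}}$, and that the relation restricted to the open set $\mi{D}$ suffices since \eqref{eq.gncc} is only tested there---merely makes explicit what the paper compresses into one line.
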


The main significance of the GNCC, demonstrated in \cite{chatz_shao:uc_ads_gauge}, is that it provides the crucial conditions on the conformal boundary that lead to pseudoconvexity in the near-boundary aAdS geometry.
This observation was a crucial ingredient in proving the Carleman estimates of \cite{chatz_shao:uc_ads_gauge} for tensorial waves on FG-aAdS segments.
These Carleman estimates, in turn, serve as a key tool in the proof of Theorem \ref{thm.correspondence}.
See Sections \ref{sec.carleman} and \ref{sec.proof} for further discussions of these points.

Definition \ref{def.gncc} may seem, at first glance, rather abstract and technical.
Furthermore, the conditions for the GNCC, which involve finding a desirable candidate function $\eta$, can be rather difficult to check in practice.
Therefore, to further flesh out Definition \ref{def.gncc}, we consider, in the remainder of this subsection, the GNCC applied to some special cases of interest.

Let us begin with the standard AdS conformal boundary $( \mi{I}_\textrm{AdS}, \gm_\textrm{AdS} )$ of \eqref{eq.ads_boundary}, which satisfies
\begin{equation}
\label{eq.ads_P} \gm_\textrm{AdS} = - dt^2 + \mathring{\gamma} \text{,} \qquad \mc{P} [ \gm_\textrm{AdS} ] = \frac{1}{2} ( dt^2 + \mathring{\gamma} ) \text{.}
\end{equation}
In addition, we take $\mi{D} := \mi{D}_0$ to be the time slab
\begin{equation}
\label{eq.ads_D} \mi{D}_0 := ( t_-, t_+ ) \times \Sph^{n-1} \text{,} \qquad \partial \mi{D}_0 = \{ t_-, t_+ \} \times \Sph^{n-1} \text{,} \qquad t_- < t_+ \text{.}
\end{equation}

The following observation was established in \cite[Corollary 3.14]{chatz_shao:uc_ads_gauge}:

\begin{proposition}[\cite{chatz_shao:uc_ads_gauge, hol_shao:uc_ads}] \label{thm.gncc_ads}
$( \mi{D}_0, \gm_\textrm{AdS} )$ satisfies the GNCC if and only if $t_+ - t_- > \pi$.
\end{proposition}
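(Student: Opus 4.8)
The plan is to collapse the GNCC---which a priori involves an unknown $C^4$-function $\eta$ on the full $n$-dimensional slab $\mi{D}_0$---onto a one-dimensional Sturm-type problem, exploiting that both $\gm_\textrm{AdS} = -dt^2 + \mathring{\gamma}$ and $\mi{D}_0$ are static (depend only on $t$). By the remark following Definition \ref{def.gncc}, I may fix once and for all the Riemannian metric $\mf{h} := dt^2 + \mathring{\gamma}$, since $\bar{\mi{D}}_0$ is compact. Throughout write $L := t_+ - t_-$, and note that every nonzero $\gm_\textrm{AdS}$-null vector $\mf{Z} = a\,\partial_t + V$ (with $V$ tangent to $\Sph^{n-1}$) satisfies $\mathring{\gamma}(V,V) = a^2$, so that $a \neq 0$ and $\mf{h}(\mf{Z},\mf{Z}) = 2a^2$.

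For the forward implication I would exhibit the purely $t$-dependent candidate $\eta(t) := \sin\!\big( \tfrac{\pi}{L}( t - t_- ) \big)$, which is positive on $\mi{D}_0$ and vanishes exactly on $\partial \mi{D}_0$. Because $\gm_\textrm{AdS}$ is a product with flat $t$-factor, for such $\eta$ the only surviving component of the Hessian is $(\Dm^2 \eta)_{tt} = \eta''$, i.e.\ $\Dm^2 \eta = \eta''\, dt^2$. Hence, using \eqref{eq.ads_P},
\[
( \Dm^2 \eta + \eta \cdot \mc{P}[\gm_\textrm{AdS}] )( \mf{Z}, \mf{Z} ) = ( \eta'' + \eta )\, a^2 = \Big( 1 - \tfrac{\pi^2}{L^2} \Big)\, \eta \, a^2 \text{.}
\]
Comparing with $c\,\eta\,\mf{h}(\mf{Z},\mf{Z}) = 2c\,\eta\,a^2$, the GNCC inequality \eqref{eq.gncc} holds for some $c > 0$ (any $c \in \big( 0, \tfrac12( 1 - \pi^2/L^2 ) \big)$ works) precisely when $1 - \pi^2/L^2 > 0$, that is, when $L > \pi$.

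For the converse, suppose $\eta$ and $c > 0$ witness the GNCC; now $\eta$ need not be $t$-dependent, so I would probe it along null geodesics. Fixing any unit-speed geodesic $\sigma$ of $(\Sph^{n-1}, \mathring{\gamma})$, the curve $\zeta(s) := ( t_- + s, \sigma(s) )$ is a $\gm_\textrm{AdS}$-null geodesic that enters $\mi{D}_0$ at $s = 0$ and exits at $s = L$. Setting $f := \eta \circ \zeta$, the geodesic identity gives $f'' = \Dm^2 \eta(\dot\zeta, \dot\zeta)$, while $\mc{P}[\gm_\textrm{AdS}](\dot\zeta,\dot\zeta) = 1$ and $\mf{h}(\dot\zeta,\dot\zeta) = 2$; thus \eqref{eq.gncc} applied to $\mf{Z} = \dot\zeta$ reduces to $f'' + ( 1 - 2c ) f > 0$ on $(0, L)$, with $f > 0$ inside and $f(0) = f(L) = 0$. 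This one-dimensional inequality is the crux. First, $c \geq \tfrac12$ is impossible: then $f'' > (2c-1)f \geq 0$ would make $f$ strictly convex, forcing $f < 0$ in the interior, contrary to $f > 0$; hence $\mu := 1 - 2c \in (0,1)$. A Sturm comparison against $g(s) := \sin(\sqrt{\mu}\, s)$ then finishes it: the Wronskian $W := f'g - fg'$ has $W(0) = 0$ and $W' = ( f'' + \mu f )\, g > 0$ wherever $g > 0$, so $W$ is strictly increasing on $(0, \min\{L, \pi/\sqrt\mu\})$; but if $L \leq \pi/\sqrt\mu$ then $W(L) = f'(L)\,g(L) \leq 0$ (since $f(L) = 0$, $f'(L) \leq 0$, $g(L) \geq 0$), a contradiction. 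Therefore $L > \pi/\sqrt\mu > \pi$. I expect the main obstacle to be executing this comparison cleanly---in particular, using the strictness in \eqref{eq.gncc} to rule out the borderline case $L = \pi/\sqrt\mu$---whereas the static product structure, which is what permits replacing the full tensorial condition by its restriction to a single null geodesic, makes the reduction essentially automatic.
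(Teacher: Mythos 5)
Your proposal is correct and takes essentially the same route as the paper: the identical candidate $\eta(t) = \sin\big( \pi \cdot \frac{t - t_-}{t_+ - t_-} \big)$ for sufficiency, and a Sturm--Wronskian comparison for necessity, where your restriction of a general $\eta$ along null geodesics $s \mapsto ( t_- + s, \sigma(s) )$ of the product metric is precisely the ``adaptation to general functions $\eta$'' that the paper delegates to \cite{chatz_shao:uc_ads_gauge}. One minor streamlining: since $f > 0$ on the interior, the term $2 c f$ can simply be dropped to get $f'' + f > 0$ there, and comparison directly against $\sin s$ then yields $L > \pi$ (including the borderline $L = \pi$, via $W(L) = 0$ against $W(L) > 0$) without your case split at $c \geq \tfrac{1}{2}$ or the modified frequency $\sqrt{\mu}$---though your version does buy the slightly stronger quantitative bound $L > \pi / \sqrt{1 - 2c}$.
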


\begin{proof}
The key observation is that by taking $\eta$ to depend only on $t$, then \eqref{eq.gncc} and \eqref{eq.ads_P} yield
\begin{equation}
\label{eq.ads_eta} ( \Dm^2 \eta + \eta \cdot \mc{P} [ \gm ] ) ( \mf{Z}, \mf{Z} ) = ( \mf{Z} t )^2 \, ( \ddot{\eta} + \eta ) \text{.}
\end{equation}
Then, one can directly check that the choice
\[
\eta (t) := \sin \left( \pi \cdot \frac{ t - t_- }{ t_+ - t_- } \right)
\]
satisfies the conditions of Definition \ref{def.gncc} whenever $t_+ - t_- > \pi$.
(Observe $t_+ - t_- > \pi$ is needed for the right-hand side of \eqref{eq.ads_eta} to be positive when $t_- < t < t_+$.)

For the converse, the basic idea is that one can apply a standard Sturm comparison argument with the function $\sin ( \pi t )$ to show that if $\eta := \eta (t)$ satisfies
\[
\eta ( t_\pm ) = 0 \text{,} \qquad \eta |_{ ( t_-, t_+ ) } > 0 \text{,} \qquad \ddot{\eta} + \eta > 0 \text{,}
\]
then $t_+ - t_- > \pi$.
In particular, this, along with \eqref{eq.ads_eta}, shows that no function of the form $\eta = \eta (t)$ can satisfy the conditions of Definition \ref{def.gncc} whenever $t_+ - t_- \leq \pi$.
Moreover, the above argument can be adapted to general functions $\eta$, from which one concludes that $( \mi{D}_0, \gm_\textrm{AdS} )$ cannot satisfy the GNCC whenever $t_+ - t_- \leq \pi$; see \cite[Lemma 3.7]{chatz_shao:uc_ads_gauge} for details.
\end{proof}

\begin{remark}
Note that Proposition \ref{thm.gncc_ads} applies to AdS spacetime, as well as every Schwarzschild-AdS and Kerr-AdS spacetime, since these all induce the AdS conformal boundary.
\end{remark}

\begin{remark}
The key consequence of Proposition \ref{thm.gncc_ads}---that unique continuation for wave equations holds from $\mi{D}_0$ when $t_+ - t_- > \pi$---was first proved as a special case of the results of \cite{hol_shao:uc_ads}.
\end{remark}

Next, we consider general conformal boundary domains that are foliated by a time function $t$,
\begin{equation}
\label{eq.intro_time} \mi{I}_\ast := \R_t \times \mc{S} \text{,} \qquad \mi{D}_\ast := ( t_-, t_+ ) \times \mc{S} \text{,}
\end{equation}
with $\mc{S}$ a compact manifold of dimension $n - 1$.
Previous versions of Carleman estimates and unique continuation results for wave equations were developed in the setting \eqref{eq.intro_time}, under conditions on the conformal boundary that can be viewed as special cases of the GNCC.

First, Holzegel and the author \cite{hol_shao:uc_ads} treated the case in which the metric $\gm$ is static (with time function $t$), establishing Carleman estimates and unique continuation under a so-called \emph{pseudoconvexity criterion}.
In particular, this was the first unique continuation result for wave equations from aAdS conformal boundaries.
The result was subsequently extended by Holzegel and the author \cite{hol_shao:uc_ads_ns} to some non-static $\gm$, and then by McGill and the author \cite{mcgill_shao:psc_aads} to a wider class of non-static $\gm$ and time foliations and under a slightly weaker conformal boundary criterion.

Let us focus on the key criterion of \cite{mcgill_shao:psc_aads}, as well as its relation to the GNCC:

\begin{proposition}[\cite{chatz_shao:uc_ads_gauge, mcgill_shao:psc_aads}] \label{thm.ncc}
Assume the setting of \eqref{eq.intro_time}, and let $\gm$ be a Lorentzian metric on $\mi{I}_\ast$.
Suppose there exist $0 \leq B < C$ such that the following holds for any $\mf{g}$-null vector field $\mf{Z}$:
\begin{equation}
\label{eq.ncc} \mc{P} [ \gm ] ( \mf{Z}, \mf{Z} ) \geq C^2 \cdot ( \mf{Z} t )^2 \text{,} \qquad | \mf{D}^2 t ( \mf{Z}, \mf{Z} ) | \leq 2 B \cdot ( \mf{Z} t )^2 \text{.}
\end{equation}
Then, \emph{$( \mi{D}_\ast, \gm )$ satisfies the GNCC} as long as $t_+ - t_-$ is large enough (depending on $B$ and $C$).
\end{proposition}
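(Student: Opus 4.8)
The plan is to reduce the GNCC to a one-dimensional differential inequality by seeking a weight depending only on the time function, $\eta = f(t)$, in the spirit of the proof of Proposition \ref{thm.gncc_ads}, and then to construct such an $f$ explicitly. First I would record the consequence of $t$ being a time function on $\mi{I}_\ast$: the level sets of $t$ are spacelike, so every nonzero $\gm$-null vector $\mf{Z}$ satisfies $\mf{Z} t \neq 0$, and by homogeneity together with compactness of $\bar{\mi{D}}$ there is a constant $\Lambda > 0$ with $\mf{h} ( \mf{Z}, \mf{Z} ) \leq \Lambda \, ( \mf{Z} t )^2$ for all such $\mf{Z}$. For $\eta = f(t)$ with $f \geq 0$ one has $\Dm^2 \eta = f'' \, dt \otimes dt + f' \, \Dm^2 t$, so feeding in the two bounds of \eqref{eq.ncc} and the comparison above,
\begin{equation*}
( \Dm^2 \eta + \eta \, \mc{P} [ \gm ] - c \, \eta \, \mf{h} ) ( \mf{Z}, \mf{Z} ) \geq \big[ f'' - 2 B \, | f' | + ( C^2 - c \Lambda ) \, f \big] ( \mf{Z} t )^2 \text{.}
\end{equation*}
Since $( \mf{Z} t )^2 > 0$, the criterion \eqref{eq.gncc} of Definition \ref{def.gncc} follows once I produce a $C^4$ function $f$, positive on $( t_-, t_+ )$ and vanishing at $t_\pm$, with $f'' - 2 B | f' | + \mu f > 0$ on $( t_-, t_+ )$, where $\mu := C^2 - c \Lambda$. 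As $C > B$, I fix $c > 0$ small enough that $\mu > B^2$ and choose $\omega_0 \in ( 0, \sqrt{ \mu - B^2 } )$.

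Next I construct $f$ from three pieces. Near $t_-$ I use a damped sine $f(t) = e^{ B ( t - t_- ) } \sin ( \omega_0 ( t - t_- ) )$ on the range $\omega_0 ( t - t_- ) \in ( 0, \tfrac{\pi}{2} ]$, where $\sin$ is positive and $f' > 0$ (so $| f' | = f'$); a direct computation gives there
\begin{equation*}
f'' - 2 B f' + \mu f = ( \mu - B^2 - \omega_0^2 ) \, e^{ B ( t - t_- ) } \sin ( \omega_0 ( t - t_- ) ) > 0 \text{.}
\end{equation*}
The mirror profile $e^{ B ( t_+ - t ) } \sin ( \omega_0 ( t_+ - t ) )$ does the same near $t_+$, now with $| f' | = - f'$. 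In the long middle region I interpolate by a smooth, nearly flat plateau on which $f$ is bounded below away from $0$ while $f'$ and $f''$ are kept small, so that the robust term $\mu f$ dominates and $f'' - 2 B | f' | + \mu f > 0$ holds with room to spare. Taking $t_+ - t_-$ large enough, depending on $B$ and $C$ through $\omega_0$ and the two boundary-layer widths $\tfrac{\pi}{2 \omega_0}$, leaves space to fit the two layers and the plateau, and the gluing can be arranged so that $f$ is smooth.

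The crux, and the reason the threshold depends on $B$ in addition to $C$, is the first-order term $-2 B | f' |$. At $\partial \mi{D}$ the curvature contribution $\eta \, \mc{P} [ \gm ]$ vanishes with $\eta$, whereas $| f' |$ does not, so the naive choice $f = \sin$ that settles the undamped case $B = 0$ (cf.\ Proposition \ref{thm.gncc_ads}) degenerates near the boundary into the false inequality $-2 B | f' | > 0$ as soon as $B > 0$. The exponential weights $e^{ \pm B ( \cdot ) }$ in the boundary profiles are tuned precisely to absorb this damping, which is exactly why a strictly positive supersolution exists only once the interval is sufficiently long. The remaining technical points, the non-smoothness of $| f' |$ at the maximum of $f$ and the $C^4$ gluing of the three pieces, I expect to be the main obstacle to a fully rigorous write-up; both are handled by exploiting the slack in the strict inequality on the interior plateau, where the construction is stable under smooth perturbation.
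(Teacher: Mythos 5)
Your proposal is correct and takes essentially the same route as the paper: the paper's proof likewise reduces to $\eta = \eta(t)$ and chooses $\eta$ to (roughly) solve the damped harmonic oscillator $\ddot{\eta} - 2 b | \dot{\eta} | + c^2 \eta = 0$ with $B \leq b < c < C$ (see \eqref{odel}), of which your anti-damped boundary profiles $e^{ B ( t - t_\pm ) } \sin ( \omega_0 | t - t_\pm | )$ glued across a long plateau are an explicit (super)solution realization, with the null-direction comparison $\mf{h} ( \mf{Z}, \mf{Z} ) \leq \Lambda ( \mf{Z} t )^2$ handled exactly as you describe. One small tidy-up: stopping the first layer at $\omega_0 ( t - t_- ) = \tfrac{\pi}{2}$ leaves $f' = B e^{ B \pi / ( 2 \omega_0 ) } \neq 0$ at the junction, which is inconsistent with a plateau on which $f'$ is immediately small (and if $\mu < 2 B^2$ the inequality even forces $f''>0$ there); the clean fix is to continue the same formula to its first critical point $\omega_0 ( t - t_- ) = s_\ast \in ( \tfrac{\pi}{2}, \pi )$, where $\tan s_\ast = - \omega_0 / B$ gives $f' = 0$ while your strict inequality still holds, so that the gluing to a constant plateau is $C^1$ and only a jump in $f''$ remains to be mollified using the slack you already identified.
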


\begin{proof}
The process is similar to the proof of Proposition \ref{thm.gncc_ads}, except one now chooses $\eta$ (still depending only on $t$) to roughly solve a damped harmonic oscillator:
\begin{equation}
\label{odel} \ddot{\eta} - 2 b | \dot{\eta} | + c^2 \eta = 0 \text{,} \qquad B \leq b < c < C \text{.}
\end{equation}
The reader is referred to \cite[Proposition 3.13]{chatz_shao:uc_ads_gauge} for details.
\end{proof}

\begin{remark}
The second condition of \eqref{eq.ncc} can be viewed as a bound on the non-stationarity of $\gm$, since $\mf{D}^2 t$ is proportional to the Lie derivative of $\gm$ along the gradient of $t$.
\end{remark}

\begin{remark}
The connection between damped harmonic oscillators \eqref{odel} and unique continuation from $\mi{D}_\ast$ was first illuminated in \cite{hol_shao:uc_ads_ns}; see the discussions therein.
\end{remark}

\begin{remark}
Proposition \ref{thm.ncc} can be used to generalize the conclusions of Proposition \ref{thm.gncc_ads}:
\begin{itemize}
\item For instance, if $t_+ - t_- > \pi$, then Proposition \ref{thm.ncc} implies that $( \mi{D}_0, \gm )$ satisfies the GNCC whenever $\gm$ is a sufficiently small perturbation of $\gm_\textrm{AdS}$.

\item If $\gm$ is static with respect to $t$, and if the cross-sections $\mc{S}$ have positive Ricci curvature, then $( \mi{D}_\ast, \gm )$ satisfies the GNCC for sufficiently large $t_+ - t_-$; see \cite[Proposition B.2]{hol_shao:uc_ads_ns}.
\end{itemize}
\end{remark}

The conditions \eqref{eq.ncc} were first identified in \cite{mcgill_shao:psc_aads} and were called the \emph{null convexity criterion} (or \emph{NCC}).
In particular, \cite{mcgill_shao:psc_aads} established Carleman estimates and unique continuation results, under the assumption that the NCC is satisfied.
Note that Proposition \ref{thm.ncc} shows that the GNCC indeed generalizes the NCC.
Furthermore, the GNCC removes the need for a predetermined time function and allows for a larger class of boundary domains $\mi{D}$ to be treated.

One advantage of the NCC \eqref{eq.ncc} is that it is more concrete and easier to check than the GNCC.
On the other hand, one shortcoming of the NCC is that \eqref{eq.ncc} fails to be gauge-invariant, since a conformal transformation of $\gm$ can cause \eqref{eq.ncc} to no longer hold.
This makes the NCC undesirable for Theorem \ref{thm.correspondence}, and this served as a key motivation for developing the GNCC.

\subsection{Geodesic Return}

While the GNCC provides a sufficient condition for Theorem \ref{thm.correspondence}, due to its connection to pseudoconvexity and to Carleman estimates for wave equations, one can also make a heuristic case that the GNCC may be a necessary condition for Theorem \ref{thm.correspondence}.

The necessity of some geometric condition in Theorem \ref{thm.correspondence} was first conjectured in \cite{hol_shao:uc_ads, hol_shao:uc_ads_ns}, due to the special properties of AdS geometry near its conformal boundary $\mi{I}_\textrm{AdS}$.
More specifically, on AdS spacetime, \emph{there exist null geodesics which initiate from $\mi{I}_\textrm{AdS}$ at time $t = 0$, remain arbitrarily close to $\mi{I}_\textrm{AdS}$, and then terminate at the boundary at time $t = \pi$}; see \cite[Section 1.2]{hol_shao:uc_ads}.

\begin{remark}
In terms of the standard conformal embedding of AdS spacetime into the Einstein cylinder $\R_t \times \Sph^n_+$, these null geodesics can be explicitly described as
\[
\Lambda: ( 0, \pi ) \rightarrow \R \times \Sph^n_+ \text{,} \qquad \Lambda ( \tau ) = ( \tau, \lambda (\tau) ) \text{,}
\]
where $\lambda$ denotes unit speed parametrizations of great circles in $\Sph^n_+$, with the limits $\lambda (0+)$ and $\lambda (\pi-)$ lying on antipodal points of the equator $\partial \Sph^n_+$.
The near-boundary trapping effect caused by these $\Lambda$ is also connected to the relative lack of decay for waves on AdS spacetime; see, e.g., \cite{hol_smul:decay_kg_ads}.
\end{remark}

From the above, one can construct, via the geometric optics methods of Alinhac and Baouendi \cite{alin_baou:non_unique}, local solutions to linear wave equations on AdS spacetime that propagate along these null geodesics.
Most notably, these solutions vanish to infinite order toward the portion $( 0, \pi ) \times \Sph^{n-1}$ of the conformal boundary, but immediately become nonzero away from this boundary segment.
This yields, for AdS spacetime, counterexamples to unique continuation for various wave equations when the data on the conformal boundary is imposed on a timespan of less than $\pi$---the return time of the above-mentioned near-boundary null geodesics.

On the other hand, if we consider a slab $\mi{D}_0$ as in \eqref{eq.ads_D}, with timespan $t_+ - t_- > \pi$, then none of the aforementioned null geodesics (translated forward or backward in time as needed) can travel entirely over $\mi{D}_0$.
In fact, if such a geodesic goes over $\mi{D}_0$ at all, then it must either start from or terminate at $\mi{D}_0$.
From this, one concludes that the Alinhac-Baouendi counterexamples cannot be constructed over the time slab $\mi{D}_0$ whenever $t_+ - t_- > \pi$.

\begin{remark}
There are multiple important caveats regarding the preceding counterexamples.
The first is that the methods of \cite{alin_baou:non_unique} only apply directly to wave operators $\Box_g + \sigma$ with $4 \sigma := n^2 - 1$.
(For other values of $\sigma$, the corresponding wave equation in the conformally compactified setting gains a potential that becomes critically singular at the conformal boundary, which significantly complicates the constructions.)
The case of general $\sigma$ will be treated in the upcoming work of Guisset \cite{guiss:non_unique}.
\end{remark}

\begin{remark}
Another caveat is that not every linear wave equation on AdS spacetime can have these counterexamples to unique continuation.
For example, from Holmgren's theorem \cite{holmg:uc_anal, hor:lpdo2}, if all the coefficients of the wave equation are real-analytic (in particular, this applies to $( \Box_g + \sigma ) u = 0$), then counterexamples to unique continuation cannot exist.
In particular, the constructions of \cite{alin_baou:non_unique} only show that there exist potentials $V$ (vanishing at the conformal boundary) such that
\[
( \Box_g + \sigma + V ) u = 0
\]
has the above-mentioned counterexamples to unique continuation, but it does not address whether such counterexamples exist for any fixed, chosen potential $V$.
\end{remark}

One can in fact view the GNCC as a generalization of the above intuitions for AdS spacetime to aAdS settings.
This was observed by Chatzikaleas and the author in \cite[Theorem 4.1]{chatz_shao:uc_ads_gauge}, which connected the GNCC to the trajectories of spacetime null geodesics near the conformal boundary.
We restate the result here, but slightly adapted to the special case of vacuum FG-aAdS segments:

\begin{theorem}[Geodesic return \cite{chatz_shao:uc_ads_gauge, mcgill_shao:psc_aads}] \label{thm.geodesic}
Let $( \mi{M}, g )$ be a vacuum FG-aAdS segment, with holographic data $( \mi{I}_\textrm{AdS}, \gb{0}, \gb{n} )$, and fix $\mi{D} \subset \mi{I}$, with $\bar{\mi{D}}$ compact, so that $( \mi{D}, \gb{0} )$ satisfies the GNCC.
In addition, let $\Lambda: ( s_-, s_+ ) \longrightarrow \mi{M}$ be a complete null geodesic with respect to $\rho^2 g$, written as
\[
\Lambda (s) := ( \rho (s), \lambda (s) ) \in ( 0, \rho_0 ) \times \mi{I} \text{,} \qquad s \in ( s_-, s_+ ) \text{,}
\]
with $s$ being an affine parameter for $\Lambda$.
Assume also that 
\begin{equation}
\label{eq.geodesic_ass} 0 < \rho (s_0) < \epsilon_0 \text{,} \qquad | \dot{\rho} (s_0) | \lesssim \rho (s_0) \text{,} \qquad \lambda (s_0) \in \mi{D} \text{.}
\end{equation}
for some $s_0 \in ( s_-, s_+ )$, where $\epsilon_0 > 0$ is sufficiently small (depending on $\ms{g}$ and $\mi{D}$).
Then,
\begin{itemize}
\item either $\Lambda$ initiates from the conformal boundary within $\mi{D}$,
\begin{equation}
\label{eq.geodesic_start} \lim_{ s \searrow s_- } \rho (s) = 0 \text{,} \qquad \lim_{ s \searrow s_- } \lambda (s) \in \mi{D} \text{,}
\end{equation}

\item or $\Lambda$ terminates at the conformal boundary within $\mi{D}$,
\begin{equation}
\label{eq.geodesic_end} \lim_{ s \nearrow s_+ } \rho (s) = 0 \text{,} \qquad \lim_{ s \nearrow s_+ } \lambda (s) \in \mi{D} \text{.}
\end{equation}
\end{itemize}
\end{theorem}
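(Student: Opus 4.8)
The plan is to study the null geodesics of the \emph{rescaled} metric $\rho^2 g = d\rho^2 + \gv$, which, unlike $g$, extends smoothly up to and across the conformal boundary $\{ \rho = 0 \}$, and to convert the generalized null convexity of $( \mi{D}, \gm )$ into a convexity statement that forces such geodesics back to the boundary. Writing $\Lambda = ( \rho, \lambda )$ and differentiating, the geodesic equations for $\rho^2 g$ split into an equation for $\ddot\rho$ governed by $\mi{L}_\rho \gv$, and an equation for $\ddot\lambda$ governed by the vertical connection $\Dv$ of $\gv ( \rho )$. The first input would be the partial FG expansion of Theorem \ref{thm.aads_fg}: since $\gv = \gm + \gb{2} \rho^2 + \ms{o} ( \rho^2 )$ with $\gb{2} = -\mc{P} [ \gm ]$ by \eqref{eq.fg_schouten}, one finds near the boundary that $\ddot\rho = -\rho \, \mc{P} [ \gm ] ( \dot\lambda, \dot\lambda )( 1 + \ms{o} (1) )$, while the null condition $\dot\rho^2 + \gv ( \dot\lambda, \dot\lambda ) = 0$ makes $\dot\lambda$ nearly $\gm$-null, with $\gm ( \dot\lambda, \dot\lambda ) = -\dot\rho^2 + \ms{o} ( \rho^2 )$ and $\dot\lambda \neq 0$.

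Next I would upgrade the GNCC inequality \eqref{eq.gncc}, which a priori holds only for $\gm$-null $\mf{Z}$, to vectors that are merely \emph{nearly} $\gm$-null; by compactness of $\bar{\mi{D}}$ and the strictness of \eqref{eq.gncc}, this should hold with $c$ replaced by $c/2$. The heart of the argument is then a computation of the second derivative, along $\Lambda$, of the quotient $w := \eta ( \lambda ) / \rho$, which is positive precisely while $\lambda \in \mi{D}$. Differentiating $\eta ( \lambda )$ twice produces the vertical Hessian $( \Dm^2 \eta )( \dot\lambda, \dot\lambda )$, to which the extended GNCC applies; the crucial point is that the term $-\eta \, \mc{P} [ \gm ]( \dot\lambda, \dot\lambda )$ furnished by \eqref{eq.gncc} equals $\eta \, \ddot\rho / \rho$ to leading order, by the FG relation above. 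This is the key cancellation: upon substituting into $\ddot{w}$, the Hessian-of-$\rho$ contributions cancel identically, leaving a differential inequality of the shape
\[
\frac{d}{ds} \left( \rho^2 \dot{w} \right) \geq \tfrac{c}{2} \, \rho \, \eta \, \mf{h} ( \dot\lambda, \dot\lambda ) + ( \text{errors} ) > 0 \text{,}
\]
where the errors are of strictly higher order in $\rho$ and in $| \dot\rho | / \rho$. Equivalently, $w$ is strictly convex in the reparametrized variable $\tau$ defined by $d\tau = \rho^{-2} \, ds$.

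From this convexity the dichotomy would follow. Since $\rho^2 \dot{w}$ is strictly increasing, in at least one of the two directions (future if $\rho^2 \dot{w} \geq 0$ at $s_0$, past otherwise) the quotient $w$ increases, and strict convexity forces $w \to +\infty$ toward the corresponding endpoint. As $\eta$ is bounded on $\bar{\mi{D}}$, this is possible only if $\rho \to 0$, giving $\rho \to 0$ at finite affine parameter; since $\rho^2 g$ extends smoothly across $\{ \rho = 0 \}$, the vertical component $\lambda$ then extends continuously to a limit point on the boundary. A separate continuity argument, using the strictness of the forcing near the endpoint to prevent $\eta ( \lambda )$ from decaying as fast as $\rho$, would rule out this limit lying on $\partial \mi{D}$, so it lies in $\mi{D}$; this is exactly \eqref{eq.geodesic_start} or \eqref{eq.geodesic_end}.

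The main obstacle will be justifying that the differential inequality, and hence the convexity, persists along the entire relevant portion of $\Lambda$ rather than only at $s_0$. This requires a bootstrap (continuity) argument confining the geodesic to the near-boundary regime in which both the FG approximations and the sign of the forcing term are valid: assuming $\rho \leq \epsilon_0$ and $| \dot\rho | \lesssim \rho$ on a maximal interval, one must show, using the geodesic equations together with the monotonicity just derived, that these bounds in fact improve, so the interval cannot close up before $\rho \to 0$. Controlling the errors uniformly—in particular reconciling the $\gm$-nearly-null with the $\gv ( \rho )$-null character of $\dot\lambda$, and keeping $| \dot\rho | / \rho$ bounded—is the delicate technical core, and the smallness threshold $\epsilon_0$ in \eqref{eq.geodesic_ass} is dictated precisely by these estimates. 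Conceptually, this convexity of $\eta ( \lambda ) / \rho$ is the geometric-optics shadow of the near-boundary pseudoconvexity that underlies the Carleman estimates of \cite{chatz_shao:uc_ads_gauge}.
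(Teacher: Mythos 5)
Your overall route---tracking the quotient $w = \eta/\rho$ along a $\rho^2 g$-null geodesic, using $\ddot\rho = \tfrac{1}{2} \mi{L}_\rho \gv ( \dot\lambda, \dot\lambda )$ together with the FG relation $\gb{2} = - \mc{P} [ \gm ]$ from \eqref{eq.fg_schouten} to produce the key cancellation, and reading the GNCC as convexity of $w$ in the $g$-affine parameter---is exactly the mechanism of the cited proof, and it is the geodesic counterpart of the pseudoconvexity computation \eqref{eq.pseudoconvex_pre} (your $w$ is $1/f$ for the paper's $f = \rho/\eta$). However, one step in your plan fails as written: the upgrade of \eqref{eq.gncc} to nearly $\gm$-null vectors cannot be obtained ``by compactness of $\bar{\mi{D}}$ and strictness, with $c$ replaced by $c/2$''. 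The margin in \eqref{eq.gncc} is $c \, \eta \, \mf{h} ( \mf{Z}, \mf{Z} )$ and degenerates at $\partial \mi{D}$, whereas the cost of perturbing a $\gm$-null $\mf{Z}$ to your $\dot\lambda$, with $| \gm ( \dot\lambda, \dot\lambda ) | \lesssim \dot\rho^2 + \rho^2 \, \mf{h} ( \dot\lambda, \dot\lambda )$, is measured by the $C^0$-size of $\Dm^2 \eta + \eta \, \mc{P} [ \gm ]$ and does \emph{not} scale with $\eta$. Hence your differential inequality is only valid where, roughly, $\rho^2 \lesssim \eta$, a region that pinches off at $\partial \mi{D}$; no compactness argument recovers it. The correct tool is Proposition \ref{thm.hard}: subtract $\zeta \cdot \gm$ to obtain a definite bound for \emph{all} vectors $\mf{X}$, and then absorb the error $\zeta \, \gm ( \dot\lambda, \dot\lambda ) = \mc{O} ( \dot\rho^2 + \rho^2 \mf{h} ( \dot\lambda, \dot\lambda ) )$ using smallness of $f = \rho/\eta$. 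This forces your bootstrap to be run on $f$ (together with $\dot\rho / \rho$), not on $\rho \leq \epsilon_0$ alone---note in particular that the hypotheses \eqref{eq.geodesic_ass} do not make $f (s_0)$ small, since $\eta ( \lambda (s_0) )$ may be as small as $\rho (s_0)^2$, so even the validity of your inequality at $s = s_0$ requires an argument rather than being automatic.

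The endpoint assertion is also under-justified. From the monotonicity of $\rho^2 \dot{w}$ you can get $w \to +\infty$ in one direction, hence $\rho \to 0$ (and, helpfully, the bound $\rho \leq \sup \eta / w$ then confines the geodesic near the boundary by itself); but your stated reason for placing $\lim \lambda$ in $\mi{D}$---``preventing $\eta ( \lambda )$ from decaying as fast as $\rho$''---only yields $\eta \gg \rho$, which is compatible with $\eta ( \lambda ) \to 0$ (say $\eta \sim \rho^{1/2}$) and thus with the limit lying on $\partial \mi{D}$, contradicting \eqref{eq.geodesic_start}--\eqref{eq.geodesic_end}. Excluding this requires a genuinely quantitative lower bound on $\eta$ along the geodesic near the endpoint, extracted from the same monotonicity together with control of $\mf{h} ( \dot\lambda, \dot\lambda )$; relatedly, to guarantee that the convex quantity $w$ actually blows up rather than tending to a finite limit, you need a uniform positive lower bound on the forcing term, i.e.\ on $\mf{h} ( \dot\lambda, \dot\lambda )$, which should come from the null condition $\dot\rho^2 + \gv ( \dot\lambda, \dot\lambda ) = 0$, the bootstrap bound $| \dot\rho | \lesssim \rho$, and a normalization of the affine parameter. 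These two points (the $f$-bootstrap with the $\zeta \gm$-absorption, and the quantitative endpoint analysis) are precisely the technical core of the proof in the cited works, so your plan is the right one but is not yet a proof at these junctures.
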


\begin{remark}
Since $g$ and $\rho^2 g$ have the same null geodesics, we can more usefully interpret $\Lambda$ in Theorem \ref{thm.geodesic} as being a $g$-null geodesic.
However, it is more convenient to parametrize $\Lambda$ with respect to $\rho^2 g$, since the conformal boundary is finite from the perspective of $\rho^2 g$.
\end{remark}

Note that the first and third conditions of \eqref{eq.geodesic_ass} mean that $\Lambda$ is both ``hovering over $\mi{D}$" and ``$\epsilon_0$-close to $\mi{D}$" at parameter $s_0$.
The second part of \eqref{eq.geodesic_ass} can be viewed as a necessary condition in order for $\Lambda$ to remain similarly close to the conformal boundary.
Consequently, one can interpret Theorem \ref{thm.geodesic} as saying that any spacetime null geodesic $\Lambda$ that is sufficiently close to the conformal boundary and that travels over $\mi{D}$ (which satisfes the GNCC) must return to the conformal boundary within $\mi{D}$, in either the future or past direction.
In other words, there cannot exist any null geodesics sufficiently near the boundary that travel over $\mi{D}$ but do not terminate within $\mi{D}$ itself.

In summary, we conclude that \emph{analogues of the Alinhac-Baouendi counterexamples in AdS spacetime cannot be constructed over any $\mi{D}$ that satisfies the GNCC}.

\begin{remark}
The first version of the geodesic return theorem was proven by McGill and the author in \cite[Theorem 4.5]{mcgill_shao:psc_aads}.
This showed that the NCC \eqref{eq.ncc} being satisfied implies that counterexamples to unique continuation cannot be constructed over the slab $\mi{D}_\ast$ in Proposition \ref{thm.ncc}.
One can hence view Theorem \ref{thm.geodesic} as an extension of \cite[Theorem 4.5]{mcgill_shao:psc_aads} to the GNCC.
\end{remark}

To conclude, the preceding discussions provide two justifications for the GNCC being the crucial condition for unique continuation and for Theorem \ref{thm.correspondence}:
\begin{itemize}
\item The GNCC rules out the known counterexamples to unique continuation for waves.

\item The GNCC implies pseudoconvexity, leading to positive unique continuation results.
\end{itemize}

\begin{remark}
Though the GNCC is crucial to our proof of Theorem \ref{thm.correspondence}, it is not known whether the methods of \cite{alin_baou:non_unique} extend to the nonlinear EVE.
The construction of counterexamples to unique continuation in Theorem \ref{thm.correspondence} when the GNCC is violated is a challenging open problem.
\end{remark}

Finally, we note that this connection with null geodesics can, in some cases, be exploited to show that some domains fail to satisfy the GNCC.
The most notable examples involve flat conformal boundaries, in particular those of the planar and toric (Kerr-)AdS spacetimes:

\begin{proposition} \label{thm.flat_boundary}
Consider the planar AdS and toric AdS conformal boundaries \eqref{eq.flat_boundary}, both with the flat Lorentzian metric.
Then, no subdomain $\mi{D}$ of either boundary can satisfy the GNCC.
\end{proposition}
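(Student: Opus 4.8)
The plan is to exploit the vanishing of the Schouten tensor in the flat case, which reduces the GNCC to a pure convexity statement along boundary null geodesics, and then to contradict that statement using the completeness of those geodesics.

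First I would record that, for the flat Lorentzian metric $\gm$ on either conformal boundary of \eqref{eq.flat_boundary}, one has $\operatorname{Ric}[\gm] = 0$ and $\operatorname{R}[\gm] = 0$, so by \eqref{eq.fg_schouten} the Schouten tensor vanishes identically, $\mc{P}[\gm] = 0$. Hence, for any candidate function $\eta$ as in Definition \ref{def.gncc}, the defining inequality \eqref{eq.gncc} collapses to
\[
( \Dm^2 \eta ) ( \mf{Z}, \mf{Z} ) > c\, \eta \cdot \mf{h} ( \mf{Z}, \mf{Z} ) \qquad \text{on } \mi{D},
\]
for every $\gm$-null $\mf{Z}$. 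Since $\eta > 0$ on $\mi{D}$ and $\mf{h}$ is Riemannian (so $\mf{h}(\mf{Z},\mf{Z}) > 0$ for $\mf{Z} \neq 0$), the right-hand side is strictly positive, and therefore $(\Dm^2\eta)(\mf{Z},\mf{Z}) > 0$ everywhere on $\mi{D}$ for every $\gm$-null $\mf{Z}$.

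Next I would run the convexity argument along a single $\gm$-null geodesic. Let $\gamma$ be a $\gm$-null geodesic with affine parameter $s$ and tangent $\dot\gamma$. Because $\Dm_{\dot\gamma}\dot\gamma = 0$, the usual chain-rule computation gives
\[
\frac{d^2}{ds^2}(\eta\circ\gamma) = (\Dm^2\eta)(\dot\gamma,\dot\gamma) + (\Dm\eta)(\Dm_{\dot\gamma}\dot\gamma) = (\Dm^2\eta)(\dot\gamma,\dot\gamma).
\]
By the previous step, $\eta\circ\gamma$ is therefore \emph{strictly convex} on every parameter interval along which $\gamma$ remains inside $\mi{D}$. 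The contradiction then comes from the global behaviour of flat null geodesics: fix any $p \in \mi{D}$ and any $\gm$-null direction at $p$; since $\gm$ is flat, the maximal geodesic $\gamma$ through $p$ is complete, and every $\gm$-null vector $\mf{Z} \neq 0$ satisfies $dt(\mf{Z}) \neq 0$, so $t\circ\gamma$ is affine in $s$ with nonzero slope and hence unbounded. As $\bar{\mi{D}}$ is compact, in particular bounded in $t$, the geodesic must exit $\bar{\mi{D}}$ in both parameter directions. Letting $(s_1,s_2)$ be the component of $\gamma^{-1}(\mi{D})$ containing $p$, we obtain a bounded interval with $\gamma(s_1), \gamma(s_2) \in \partial\mi{D}$, so $(\eta\circ\gamma)(s_1) = (\eta\circ\gamma)(s_2) = 0$ while $\eta\circ\gamma > 0$ on $(s_1,s_2)$. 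But a continuous function that is strictly convex on the interior and vanishes at both endpoints must be strictly negative in between, contradicting positivity. Hence no admissible $\eta$ exists, and neither the planar nor the toric flat boundary admits any subdomain satisfying the GNCC.

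The argument is short, and I do not expect a serious obstacle; the two points requiring care are precisely where flatness enters. First, one must use $\mc{P}[\gm] = 0$ to discard the zeroth-order term in \eqref{eq.gncc}, since without it the inequality need not force convexity of $\eta\circ\gamma$. Second, one must verify that the null geodesic genuinely escapes the compact set $\bar{\mi{D}}$, which rests on every null vector being non-orthogonal to $dt$. This is exactly the mechanism underlying the geodesic-return heuristic of Theorem \ref{thm.geodesic}, here appearing as the \emph{failure} of return over any bounded slab, and thus as the failure of the GNCC.
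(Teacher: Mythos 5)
Your proof is correct and follows essentially the same route as the paper's (which defers to the negative half of Proposition \ref{thm.gncc_ads} and to \cite[Corollary 3.10]{chatz_shao:uc_ads_gauge}): since $\mc{P}[\gm]=0$ in the flat case, the Sturm-comparison mechanism degenerates to exactly your convexity argument for $\eta$ along boundary null geodesics, which escape the compact set $\bar{\mi{D}}$ and force $\eta$ to vanish at two endpoints while being positive and convex in between. The only cosmetic remark is that strict convexity is not even needed---$\frac{d^2}{ds^2}(\eta\circ\gamma)\geq 0$ already yields $\eta\circ\gamma\leq 0$ on $(s_1,s_2)$, contradicting positivity.
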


The proof is similar to that of the ``negative" half of Proposition \ref{thm.gncc_ads}; see \cite[Corollary 3.10]{chatz_shao:uc_ads_gauge}.
The key intuition is that on both planar and toric AdS spacetimes, there are spacetime null geodesics that remain arbitrarily close to but never intersect the conformal boundary for all times.

\section{Carleman Estimates} \label{sec.carleman}

As mentioned before, the main technical tool behind the proof of Theorem \ref{thm.correspondence} is a novel Carleman estimate for (vertical) tensorial wave equations near conformal boundaries of FG-aAdS segments.
Various versions of this Carleman estimate were proved, in increasing levels of generality, in \cite{hol_shao:uc_ads}, \cite{hol_shao:uc_ads_ns}, \cite{mcgill_shao:psc_aads}, and finally culminating in \cite{chatz_shao:uc_ads_gauge} (which introduced the GNCC).
In this section, we discuss the Carleman estimate \cite{chatz_shao:uc_ads_gauge}, along with some of the main ideas behind its proof.

\subsection{Unique Continuation}

Carleman estimates have long played a fundamental role in the theory of unique continuation for PDEs.
They have been spectacularly successful as tools for proving robust classes of unique continuation results that, in particular, do not require any assumption of analyticity---in either the solution or the PDE.
(This is in direct contrast to the earliest techniques, through the Cauchy-Kovalevskaya and Holmgren theorems.)

Applications of Carleman estimates to unique continuation began with the work of Carleman \cite{carl:uc_strong}, which proved strong unique continuation for elliptic equations in $2$ dimensions.
Further breakthroughs extending to elliptic PDEs in all dimensions are attributed to Calder\'on \cite{cald:unique_cauchy} (for unique continuation from an open set or a hypersurface) and Aronszajn \cite{aron:uc_strong} (for strong unique continuation, i.e., from a point).
The classical unique continuation theory for more general equations---including wave equations---is largely attributed to H\"ormander; see \cite{hor:lpdo4}.

In this classical theory, the key condition required for unique continuation is \emph{pseudoconvexity}.
Roughly speaking, the result is that \emph{if $u$ solves linear PDE $\mc{L} u = 0$, and if $u$ has zero Cauchy data on a hypersurface $\Sigma$ that is pseudoconvex with respect to $\mc{L}$ and a given side of $\Sigma$, then $u$ identically vanishes near $\Sigma$ on the side indicated by the pseudoconvexity}.
This result is proved by deriving a local Carleman estimate near $\Sigma$, from which one then obtains unique continuation via a standard argument.
Furthermore, pseudconvexity has been shown to be crucial, as counterexamples to unique continuation in its absence were constructed by Alinhac \cite{alin:non_unique} and Alinhac-Baouendi \cite{alin_baou:non_unique}.

The reader is referred to \cite{hor:lpdo4, ler:carleman_ineq} for the general definition of pseudoconvexity (in the language of microlocal analysis).
Here, we focus exclusively on wave operators on Lorentzian manifolds.
In this case, one can give a geometric characterization of pseudoconvexity:

\begin{definition}[\cite{ler_robb:unique}] \label{def.pseudoconvex}
Let $( M, g )$ be a Lorentzian manifold, and consider a hypersurface
\[
\Sigma := \{ f = 0 \} \subset M \text{,}
\]
which is a level set of a function $f$ on $M$.
We say that $\Sigma$ is \emph{pseudoconvex} (with respect to the wave operator $\Box_g$ and the direction of increasing $f$) if for any vector field $Z$ on $\Sigma$ satisfying
\begin{equation}
\label{eq.pseudoconvex_ass} Z f = 0 \text{,} \qquad g ( Z, Z ) = 0 \text{,}
\end{equation}
the following inequality holds (with $\nabla^2$ denoting the Hessian with respect to $g$):
\begin{equation}
\label{eq.pseudoconvex} \nabla^2 f ( Z, Z ) < 0 \text{.}
\end{equation}
\end{definition}

To summarize, $\Sigma$ is pseudoconvex if $-f$ is convex on $\Sigma$ with respect to all null directions that are tangent to $\Sigma$.
One qualitative interpretation of this is as follows:\ \emph{if $\Lambda$ is a null geodesic that goes through $p \in \Sigma$ and is tangent to $\Sigma$ at $p$, then $\Lambda$ lies on the side $\{ f < 0 \}$ near $p$}.

With regards to the wave operator $\Box_g$ on $M$, a Carleman estimate roughly takes the form
\begin{equation}
\label{eq.carleman_gen} \lambda \int_\Omega e^{ - \lambda F } ( | u |^2 + | \nabla u |^2 ) \, dg \lesssim \int_\Omega e^{ - \lambda F } | \Box_g u |^2 \, dg \text{.}
\end{equation}
Here, $\Omega \subset M$ is a spacetime domain, and both integrals are with respect to the volume form induced by $g$.
Thus, one can view \eqref{eq.carleman_gen} as a spacetime weighted $H^1$-bound for solutions $u$ of wave equations.
A key feature of \eqref{eq.carleman_gen} is the parameter $\lambda > 0$, which can be freely chosen as long as it is sufficiently large.
Finally, regarding the weight $e^{ - \lambda F }$, its main component is a function $F$ that is constructed from the function $f$ (in Definition \ref{def.pseudoconvex}) defining the pseudoconvex hypersurface $\Sigma$.

The free parameter $\lambda$ plays a special role, as it allows one to absorb any lower-order terms arising from the right-hand side into the left-hand side.
To obtain unique continuation from \eqref{eq.carleman_gen}, the very rough idea is to apply the wave equation to reduce the right-hand side of \eqref{eq.carleman_gen} into lower-order quantities, which can then be absorbed into the left.
Afterwards, one can then conclude that $u = 0$ by letting $\lambda \nearrow \infty$.
See the end of Section \ref{sec.proof} for a demonstration of this process.

\begin{remark}
In contrast, for elliptic equations, every hypersurface $\Sigma$ is trivially pseudoconvex.
Thus, in elliptic settings, one immediately bypasses any difficulties from pseudoconvexity.
\end{remark}

The main difficulty in our setting is that \emph{the conformal boundary of an FG-aAdS segment $( \mi{M}, g )$ fails to be pseudoconvex}, so the classical unique continuation results no longer apply.
To see this, one can observe that there exist null geodesics near the conformal boundary that asymptote toward being everywhere tangent to the boundary.
By the above intuition, we can hence view the conformal boundary as being \emph{zero-pseudoconvex}, as it just barely fails to be pseudoconvex.

\begin{remark}
Note that pseudoconvexity is a conformally invariant property, so we can realize the conformal boundary by working with the rescaled metric $\rho^2 g$.
\end{remark}

Some early results that included zero-pseudoconvex settings are found in \cite{ken_ruiz_sog:sobolev_unique, ler_robb:unique}.
However, the modern machinery for deriving Carleman estimates (and hence unique continuation) for geometric wave equations in zero-pseudoconvex settings originated from works of the author with Alexakis and Schlue \cite{alex_schl_shao:uc_inf, alex_shao:uc_global}, in the context of unique continuation of waves from null infinities of asymptotically flat spacetimes.
Similar results were later (independently) established by Petersen \cite{peters:cpt_cauchy} for compact Cauchy horizons.
Our aAdS Carleman estimates \cite{chatz_shao:uc_ads_gauge, hol_shao:uc_ads, hol_shao:uc_ads_ns, mcgill_shao:psc_aads} also fit within this framework. 

One encounters a number of additional difficulties when dealing with a zero-pseudoconvex hypersurface $\Sigma$.
In this case, a much more careful study of the geometry near $\Sigma$ is needed to find the requisite pseudoconvexity for a Carleman estimate.
Moreover, any pseudoconvexity (if exists) must necessarily degenerate toward $\Sigma$, and this leads to various vanishing or singular weights in the Carleman estimates, making their proofs far more delicate.

Perhaps the most striking contrast with the classical theory is the size of the region from which one can uniquely continue.
The classical results are local, in that one always attains unique continuation on sufficiently small neighbourhoods of any point on the pseudoconvex hypersurface $\Sigma$.
In contrast, for zero-pseudoconvex $\Sigma$, Carleman estimates (and hence unique continuation properties) may only hold along sufficiently large regions in $\Sigma$.
This feature is exclusive to zero-pseudoconvex settings, and it is particularly relevant to aAdS conformal boundaries.

\begin{remark}
There do exist stronger unique continuation results, for which one needs far weaker assumptions than pseudoconvexity; see \cite{io_kl:unique_ip, ler:unique_tchar, robb_zuil:uc_interp, tat:uc_hh, tat:uc_hh_2}.
But, these results require either additional symmetries in the solution or some partial analyticity in the differential operator.
Unfortunately, neither of these assumptions is applicable to our aAdS setting.
\end{remark}

\subsection{The Main Estimate}

The difficulties and features mentioned above apply in particular to aAdS settings.
A key problem is to find a foliation of pseudoconvex hypersurfaces near the conformal boundary.
To find the boundary regions from which one can uniquely continue solutions of wave equations, one must determine which regions can be covered by such pseudoconvex foliations.

One important finding in aAdS settings, which is a hallmark of the Carleman estimates presented here, is that \emph{whether appropriate pseudoconvex foliations exist near the conformal boundary depends on geometric properties of the boundary itself}---namely, the GNCC.

Let us now state the Carleman estimate proved in \cite{chatz_shao:uc_ads_gauge}, again restricted to vacuum settings:

\begin{theorem}[Carleman estimate for wave equations \cite{chatz_shao:uc_ads_gauge}] \label{thm.carleman}
Let $( \mi{M}, g )$ denote a vacuum FG-aAdS segment, with conformal infinity $( \mi{I}, \gm )$.
Fix $\mi{D} \subset \mi{I}$, with $\bar{\mi{D}}$ compact, and assume $( \mi{D}, \gm )$ satisfies the GNCC, with the associated function denoted by $\eta$ (see Definition \ref{def.gncc}).
Moreover, let
\begin{equation}
\label{eq.f} f = \frac{ \rho }{ \eta } \text{,}
\end{equation}
and define, for any $f_\star > 0$, the bulk regions
\begin{equation}
\label{eq.Omega} \Omega_{ f_\star } := \{ f < f_\star \} \subset \mi{M} \text{.}
\end{equation}

Then, the following holds for any vertical tensor field $\ms{\Phi}$ on $\mi{M}$ with $\ms{\Phi}$, $\nabla \ms{\Phi}$ vanishing on $\{ f = f_\star \}$,
\begin{align}
\label{eq.carleman} &\int_{ \Omega_{ f_\star } } e^{ -\frac{2 \lambda f^p}{p} } f^{ n-2-p-2\kappa } | ( \Boxm_g + \sigma ) \ms{\Phi} |^2 \, dg \\
\notag &\qquad + \lambda^3 \limsup_{ \rho_\star \searrow 0 } \int_{ \Omega_{ f_\star } \cap \{ \rho = \rho_\star \} } ( | \Dv_\rho ( \rho^{-\kappa} \ms{\Phi} ) |^2 + | \Dv ( \rho^{-\kappa} \ms{\Phi} ) |^2 + | \rho^{-\kappa-1} \ms{\Phi} |^2 ) \, d \gv \\
\notag &\quad \gtrsim \lambda \int_{ \Omega_{ f_\star } } e^{ -\frac{2 \lambda f^p}{p} } f^{ n-2-2\kappa } ( f \rho^3 | \Dv_\rho \ms{\Phi} |^2 + f \rho^3 | \Dv \ms{\Phi} |^2 + f^{2p} | \ms{\Phi} |^2 ) \, dg \text{,}
\end{align}
provided $\kappa$ and $\lambda$ are sufficiently large (depending on $\gv$, $\mi{D}$, $\sigma$, and the rank of $\ms{\Phi}$), $f_\star$ is sufficiently small (also depending on $\gv$, $\mi{D}$, $\sigma$, and the rank of $\ms{\Phi}$), and $0 < p < \frac{1}{2}$.
\end{theorem}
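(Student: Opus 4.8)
The plan is to establish \eqref{eq.carleman} by the standard conjugation-and-commutator method for Carleman estimates, carefully adapted to the degenerate (zero-pseudoconvex) geometry at the conformal boundary and to the vertical tensor calculus. First I would pass to the conformally rescaled metric $\bar{g} := \rho^2 g = d\rho^2 + \gv$, which extends regularly to $\{ \rho = 0 \}$; since the characteristic geometry (null cones and null geodesics), and hence pseudoconvexity, is conformally invariant, passing to $\bar{g}$ strips away the most singular weights while preserving exactly the geometry relevant to the estimate, rendering the foliation by the level sets of $f = \rho / \eta$ tractable up to the boundary. I would then conjugate the operator, introducing $\ms{\Psi} := e^{-\lambda f^p / p} \, \rho^{-\kappa} \, \ms{\Phi}$ so as to absorb both the Carleman weight $e^{-\lambda f^p / p}$ and the singular rescaling $\rho^{-\kappa}$, and rewrite \eqref{eq.carleman} as a coercivity estimate for the conjugated operator $\mc{L}_\lambda := e^{-\lambda f^p / p} \rho^{-\kappa} ( \Boxm_g + \sigma ) \rho^{\kappa} e^{\lambda f^p / p}$ acting on $\ms{\Psi}$, the remaining polynomial weights in $f$ and $\rho$ and the volume factors being chosen precisely to render this bound clean.

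Next I would split $\mc{L}_\lambda = \ms{S} + \ms{A}$ into its formally self-adjoint and skew-adjoint parts in $L^2 ( \Omega_{ f_\star } )$ and expand $\| \mc{L}_\lambda \ms{\Psi} \|^2 = \| \ms{S} \ms{\Psi} \|^2 + \| \ms{A} \ms{\Psi} \|^2 + 2 \langle \ms{S} \ms{\Psi}, \ms{A} \ms{\Psi} \rangle$. Integrating the cross term by parts recasts it as $\langle [ \ms{S}, \ms{A} ] \ms{\Psi}, \ms{\Psi} \rangle$ together with boundary contributions: those on $\{ f = f_\star \}$ vanish by the hypothesis that $\ms{\Phi}$ and $\nabla \ms{\Phi}$ are zero there, while those generated as $\rho \searrow 0$ produce exactly the $\limsup$ term on the left-hand side of \eqref{eq.carleman}. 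It then suffices to bound $\langle [ \ms{S}, \ms{A} ] \ms{\Psi}, \ms{\Psi} \rangle$ from below by the positive right-hand side of \eqref{eq.carleman}, modulo errors.

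The crux is the positivity of this bulk commutator term. Its leading contribution is a quadratic form in the first derivatives $\Dv_\rho \ms{\Psi}$ and $\Dv \ms{\Psi}$, whose coefficient, evaluated along $\bar{g}$-null directions tangent to the level sets of $f$, is controlled by the Hessian of $f$ with respect to $\bar{g}$---precisely the quantity governing pseudoconvexity in \eqref{eq.pseudoconvex} of Definition \ref{def.pseudoconvex}. A direct computation shows that, to leading order as $\rho \searrow 0$, this Hessian reduces to $\Dm^2 \eta + \eta \, \mc{P} [ \gm ]$, so the GNCC inequality \eqref{eq.gncc} supplies exactly the strict null-positivity required. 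The subconvex reparametrisation $f \mapsto f^p / p$ with $0 < p < \tfrac{1}{2}$ is what upgrades the borderline (zero-pseudoconvex) positivity of $f$ itself into a genuine, if degenerate, positive lower bound; the degeneration as $\rho \searrow 0$ is in turn responsible for the weights $f \rho^3$ on the derivative terms and $f^{2p}$ on the zeroth-order term on the right of \eqref{eq.carleman}.

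Finally, I would close the estimate by absorbing the error terms hierarchically using the two large parameters. Relative to the scalar wave equation, the vertical wave operator $\Boxm_g$ carries extra curvature and connection terms (since $\nablam^2$ fails to commute), as well as the conformal-change terms from the rescaling and the potential $\sigma$; I would show all of these are of lower order and absorbable by first taking $\kappa$ large---to dominate the zeroth-order errors and to fix the favourable sign of the terms generated by the $\rho^{-\kappa}$ rescaling---and then $\lambda$ large, to dominate the first-order errors against the positive commutator. I expect the main obstacle to be the third step: tracking how the GNCC feeds into the null-direction positivity of $[ \ms{S}, \ms{A} ]$ as the pseudoconvexity degenerates toward the boundary, while simultaneously controlling the tensorial curvature terms uniformly in $\rho$ so that they do not overwhelm this delicate, vanishing positivity.
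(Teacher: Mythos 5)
Your proposal is correct and matches the paper's own argument essentially step for step: conjugation of $\Boxm_g + \sigma$ by the Carleman weight, a self-/skew-adjoint splitting with the cross term integrated by parts (the $\{ f = f_\star \}$ contributions killed by the vanishing hypothesis, the $\rho \searrow 0$ contributions yielding the boundary $\limsup$ term), positivity of the commutator's tangential part from pseudoconvexity of the level sets of $f = \rho / \eta$, which to leading order reduces to $\Dm^2 \eta + \eta \, \mc{P} [ \gm ]$ and hence to the GNCC, and hierarchical absorption with $\kappa$ large relative to $\sigma$ and then $\lambda$ large. The only elided details are minor relative to the paper's presentation: there the conjugation uses $f^\kappa$ rather than $\rho^\kappa$, and the passage from null-direction positivity to a coercive quadratic form on all tangential directions requires adding the multiple $w \cdot g$ of the metric, which rests on the nontrivial algebraic fact recorded in Proposition \ref{thm.hard}.
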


The statement of Theorem \ref{thm.carleman} is quite involved, so a number of remarks are in order:

\begin{remark}
$\Boxm_g$ is the wave operator defined on vertical tensor fields; see Section \ref{sec.aads}.
Moreover, the norm $| \cdot |$ can be defined relative to a given Riemannian metric on the space of vertical tensors.
\end{remark}

\begin{remark}
The role of the Klein-Gordon mass $\sigma$ in \eqref{eq.carleman} is not superfluous.
In particular, if we rewrite the wave operator with respect to $\rho^2 g$, where the conformal boundary is finite, then $\Box_g + \sigma$ is equivalent to a wave operator $\Box_{ \rho^2 g } + c_\sigma \rho^{-2} + \text{l.o.t.}$ with a singular potential, where $c_\sigma$ is a constant depending on $\sigma$.
Furthermore, this singular potential is ``critical", in the sense that it scales in the same way as $\Box_{ \rho^2 g }$.
As a result, one must think of $\sigma$ as a principal part of the wave operator.

The parameter $\sigma$ also has a significant effect on the boundary asymptotics of solutions $u$ to
\[
( \Box_g + \sigma + \text{l.o.t.} ) u = 0 \text{.}
\]
In particular, $u$ near the conformal boundary will behave like specific powers of $\rho$, with the exponents determined by $\sigma$; for details, see, for instance, discussions in \cite{breit_freedm:stability_sgrav, hol_shao:uc_ads}.
\end{remark}

\begin{remark}
In applications of Theorem \ref{thm.carleman}, the boundary integral over $\Omega_{ f_\star } \cap \{ \rho = 0+ \}$ can be disregarded, as the Carleman estimate is generally applied to quantities that vanish to sufficiently high order at the conformal boundary.
Thus, the estimate \eqref{eq.carleman} takes the same qualitative form as \eqref{eq.carleman_gen}, save for a few additional weights depending on $f$ and $\rho$.
\end{remark}

Central to Theorem \ref{thm.carleman} and its proof is the function $f$ defined in \eqref{eq.f}.
Note that the level sets $\{ f = f_\ast \}$ of $f$, for $f_\ast > 0$ sufficiently small, form timelike hypersurface that ``hover over" the region $\mi{D}$.
Furthermore, by the assumptions on $\eta$ in Definition \ref{def.gncc}, all these level sets of $f$ terminate at the conformal boundary precisely on $\partial \mi{D}$.
Qualitatively speaking, each level set starts from the boundary at $\partial \mi{D}$ and travels inward into the bulk as one goes into $\mi{D}$.

The primary observation behind the proof of Theorem \ref{thm.carleman} is that \emph{the level sets of $f$ which are sufficiently near the conformal boundary are pseudoconvex}; this will be further elaborated in the following subsection.
We note here that the pseudoconvexity of these level sets is a consequence of $\mi{D}$ being ``sufficiently large" as to satisfy the GNCC.
Moreover, the level sets $\{ f = f_\ast \}$ asymptote precisely to the boundary region $\mi{D}$ as $f_\star \searrow 0$.
As a result, the pseudoconvexity of these level sets degenerates toward the conformal boundary, which causes considerable complications.

There is one additional difference between the forms of \eqref{eq.carleman_gen} and \eqref{eq.carleman}---the presence of additional weights in \eqref{eq.carleman_gen}, in the form of powers of $\rho$ and $f$.
The most dramatic weight $f \rho^3$, in the first-order terms on the right-hand side of \eqref{eq.carleman}, is present because of the degeneration of pseudoconvexity toward the conformal boundary.
On the other hand, the relatively mild weight $f^{2p}$ in the zero-order terms come from $\sigma$ and its effects on the boundary asymptotics of waves.

Together, these weights determine the class of wave equations for which one can obtain unique continuation results.
In particular, observe that for the left-hand side of \eqref{eq.carleman} to absorb into the right-hand side, $\ms{\Phi}$ must satisfy a Klein-Gordon equation in which the lower-order terms have weights that at most match those in the right-hand side of \eqref{eq.carleman}.
More specifically, we can only prove unique continuation for Klein-Gordon equations of the form
\begin{equation}
\label{eq.wave_valid} ( \Box_g + \sigma ) \ms{\Phi} = \mc{O} ( \rho^{ 2 + \epsilon } ) \, ( \Dv_\rho \ms{\Phi}, \Dv \ms{\Phi} ) + \mc{O} ( \rho^\epsilon ) \, \ms{\Phi} \text{,} \qquad \epsilon > 0 \text{.}
\end{equation}
Thus, another challenge in the proof of Theorem \ref{thm.correspondence} is to ensure that the relevant wave equations arising from the EVE are indeed of the above form, so that Theorem \ref{thm.carleman} is applicable.

\begin{remark}
To obtain \eqref{eq.wave_valid}, we replaced $f$-weights in \eqref{eq.carleman} by $\rho$-weights, since \eqref{eq.f} implies that $\rho$ is bounded by $f$, and because $\rho$ is a much more physically meaningful quantity.
\end{remark}

\begin{remark}
The arbitrarily small powers $\rho^\epsilon$ in the right-hand side of \eqref{eq.wave_valid} can be removed, at the cost of requiring additional orders of vanishing for $\ms{\Phi}$; see \cite{chatz_shao:uc_ads_gauge, hol_shao:uc_ads_ns, mcgill_shao:psc_aads} for details.
\end{remark}

\begin{remark}
For the weight $f^{ n - 2 - 2 \kappa }$ in \eqref{eq.carleman}, note $f^n$ counters the singular factor $\rho^{-n}$ hidden in the volume form $d g$, while $f^{ - 2 \kappa - 2 }$ matches the $\rho$-weight applied to $\ms{\Phi}$ in the boundary term.
\end{remark}

\subsection{Pseudoconvexity}

Finally, we close this section with a discussion of some of the main ideas in the proof of Theorem \ref{thm.carleman}.
Here, we focus primarily on derivation of pseudoconvexity for the level sets of $f$ and its connection to the GNCC.
For full details, the reader is referred to \cite[Section 5]{chatz_shao:uc_ads_gauge}.

According to Definition \ref{def.pseudoconvex}, we must show, in our aAdS setting, that \eqref{eq.pseudoconvex} holds whenever \eqref{eq.pseudoconvex_ass} holds.
Note that this would immediately follow if we find a smooth (scalar) function $w$ such that
\begin{equation}
\label{eq.pseudoconvex_goal} - ( \nabla^2 f + w \cdot g ) ( X, X ) > 0 \text{,}
\end{equation}
for any vector field $X$ (not necessarily null) that is tangent to the level sets of $f$.
For this, we look at the leading-order asymptotics of \eqref{eq.pseudoconvex_goal}, in terms of properties on the conformal boundary.

First, one can naturally relate tangent vectors on the conformal boundary $\mi{I}$ to tangent vectors to level sets of $f$ through the following isomorphism:
\begin{equation}
\label{eq.frame_isom} \mf{X} \in T \mi{I} \leftrightarrow X := \rho [ f ( \mf{X} \eta) \partial_\rho + \mf{X} ] \in T \{ f = f_0 \} \text{.}
\end{equation}
(The weight $\rho$ on the right-hand is to ensure that $\mf{X}$ and $X$ have almost the same $\gm$- and $g$-lengths, respectively.)
Then, a direct computation---see \cite[Lemma 2.18]{chatz_shao:uc_ads_gauge}---yields that
\[
( \nabla^2 f + f \cdot g ) ( X, X ) = \rho f^2 \Dv^2 \eta ( \mf{X}, \mf{X} ) + \frac{1}{2} \rho f \mi{L}_\rho \gv ( \mf{X}, \mf{X} ) + \rho f^3 \mf{X} \eta \mi{L}_\rho \gv ( \Dv^\sharp \eta, \mf{X} ) \text{,}
\]
where $\Dv^2$ and $\Dv^\sharp$ denote the Hessian and gradient with respect to $\gv$.
The right-hand side can then be expanded using Definition \ref{def.aads}; extracting the leading-order term at the boundary yields
\begin{align*}
- ( \nabla^2 f + f \cdot g ) ( X, X ) &= \rho f^2 ( \Dm^2 \eta - \eta \cdot \gb{2} ) ( \mf{X}, \mf{X} ) + \mc{O} ( \rho f^3 ) ( \mf{X}, \mf{X} ) \\
&= \rho f^2 ( \Dm^2 \eta + \eta \cdot \mc{P} [ \gm ] ) ( \mf{X}, \mf{X} ) + \mc{O} ( \rho f^3 ) ( \mf{X}, \mf{X} ) \text{,}
\end{align*}
where we applied the property \eqref{eq.fg_schouten} from the partial FG expansion the last step.
Choosing
\begin{equation}
\label{eq.w} w := f + f \rho^2 \zeta \text{,}
\end{equation}
for some $\zeta \in C^\infty ( \mi{I} )$ to be determined, we then have
\begin{equation}
\label{eq.pseudoconvex_pre} - ( \nabla^2 f + w \cdot g ) ( X, X ) = \rho f^2 ( \Dm^2 \eta + \eta \cdot \mc{P} [ \gm ] - \zeta \cdot \gm ) ( \mf{X}, \mf{X} ) + \mc{O} ( \rho f^3 ) ( \mf{X}, \mf{X} ) \text{.}
\end{equation}

The above can be directly linked to the GNCC through the following algebraic property:

\begin{proposition} \label{thm.hard}
The following property holds for any $\gm$-null $\mf{Z}$,
\[
( \Dm^2 \eta + \eta \cdot \mc{P} [ \gm ] ) ( \mf{Z}, \mf{Z} ) \gtrsim \mf{h} ( \mf{Z}, \mf{Z} ) \text{,}
\]
if and only if there exists $\zeta \in C^\infty ( \mi{I} )$ such that the following holds for any vector field $\mf{X}$ on $\mi{I}$:
\[
( \Dm^2 \eta + \eta \cdot \mc{P} [ \gm ] - \zeta \cdot \gm ) ( \mf{X}, \mf{X} ) \gtrsim \mf{h} ( \mf{X}, \mf{X} ) \text{.}
\]
\end{proposition}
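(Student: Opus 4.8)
The plan is to reduce Proposition~\ref{thm.hard} to a pointwise statement in linear algebra on each Lorentzian tangent space $(T_p\mi{I},\gm)$, and then to upgrade it to a smooth, uniform statement using the compactness of $\bar{\mi D}$. Write $S := \Dm^2\eta + \eta\cdot\mc P[\gm]$ for the symmetric $2$-tensor appearing in both conditions. The backward implication is immediate: if the second estimate holds with some $\zeta$, then restricting to a $\gm$-null $\mf Z$ makes the term $\zeta\cdot\gm(\mf Z,\mf Z)$ vanish, so the first estimate follows with the same implied constant. The content is therefore the forward implication, which I would argue pointwise. Since $\bar{\mi D}$ is compact, any two Riemannian metrics on it are uniformly comparable, so I may replace $\mf h$ by the Riemannian metric associated to a $\gm$-orthonormal frame; in such a frame $\gm=\mr{diag}(-1,1,\dots,1)$ and $\mf h$ is the identity, and both $\gtrsim$-estimates are unaffected up to a fixed constant.

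At a fixed point, define $F(\zeta):=\lambda_{\min}(S-\zeta\,\gm)$, the smallest eigenvalue of $S-\zeta\,\gm$ relative to $\mf h$. As a minimum over unit vectors of the quantity $\mf X\mapsto (S-\zeta\,\gm)(\mf X,\mf X)$, which is affine in $\zeta$, the function $F$ is concave. Moreover, since $\gm$ has signature $(-,+,\dots,+)$, the leading behaviour of $S-\zeta\,\gm$ as $\zeta\to\pm\infty$ is governed by $-\zeta\,\gm$, whose minimal eigenvalue tends to $-\infty$ in both limits; hence $F(\zeta)\to-\infty$ as $\zeta\to\pm\infty$, so $F$ attains its maximum at a finite $\zeta_0$. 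The claim is that $F(\zeta_0)>0$, which is exactly the existence of a $\zeta$ (namely $\zeta_0$) making $S-\zeta\,\gm$ uniformly positive definite. To see this I would use the first-order (Danskin/envelope) description of the one-sided derivatives of $F$: with $E$ the $\lambda_{\min}$-eigenspace of $S-\zeta_0\,\gm$, one has $F'_+(\zeta_0)=-\max_{\mf X\in E,\,|\mf X|=1}\gm(\mf X,\mf X)$ and $F'_-(\zeta_0)=-\min_{\mf X\in E,\,|\mf X|=1}\gm(\mf X,\mf X)$. Since $\zeta_0$ maximises the concave $F$, we must have $F'_+(\zeta_0)\le 0\le F'_-(\zeta_0)$, which forces $\mf X\mapsto\gm(\mf X,\mf X)$ to take both nonpositive and nonnegative values on the unit sphere of $E$.

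The main obstacle is to convert this sign information into a genuine null eigenvector, handling the possible degeneracy of $E$. Since the unit sphere of $E$ is connected (for $\dim E\ge 1$), the continuous map $\mf X\mapsto\gm(\mf X,\mf X)$ attains every value between its extremes; as these extremes straddle $0$, there is a unit $\mf X_0\in E$ with $\gm(\mf X_0,\mf X_0)=0$. Then $F(\zeta_0)=(S-\zeta_0\,\gm)(\mf X_0,\mf X_0)=S(\mf X_0,\mf X_0)$, and the hypothesis of the forward implication gives $S(\mf X_0,\mf X_0)\gtrsim\mf h(\mf X_0,\mf X_0)=1$. Thus $F(\zeta_0)\ge c$ for the uniform constant $c>0$ from the null-cone estimate, whence $S-\zeta_0\,\gm\succeq c\,\mf h$ pointwise. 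Finally I would make $\zeta$ smooth: by compactness of $\bar{\mi D}$ the constant $c$ is uniform, so at each point the admissible set $\{\zeta:\,S-\zeta\,\gm\succeq\tfrac{c}{2}\mf h\}$ is a nondegenerate closed interval with endpoints depending continuously on the point; choosing the midpoint gives a continuous selection lying strictly inside each interval, and a uniform mollification then produces $\zeta\in C^\infty(\mi I)$ still satisfying the estimate with a slightly reduced constant. The delicate points are precisely the non-uniqueness of the minimising eigenvector, resolved by the connectedness/intermediate-value argument on $E$, and the uniform nondegeneracy needed for the smoothing step.
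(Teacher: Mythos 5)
Your proof is correct, but it takes a genuinely different route from the one the paper relies on. The paper does not prove Proposition~\ref{thm.hard} in-line: it points to \cite{mcgill_shao:psc_aads} (Corollary 3.5) and \cite{tat:notes_uc} (Lemma 4.3), and the accompanying remark identifies the key input for $n \geq 3$ as a Calabi/Finsler-type linear-algebra fact---two quadratic forms that do not vanish simultaneously can be simultaneously diagonalized \cite{greub:lin_alg}---so that, writing $S := \Dm^2 \eta + \eta \cdot \mc{P}[\gm]$, positivity of $S$ on the null cone means $S$ and $\gm$ have no common isotropic vector, and the pencil $S - \zeta \gm$ then contains a positive definite form; the $n = 2$ case, where such pencil statements fail in general, is handled separately by hand. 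You instead maximize $F(\zeta) = \lambda_{\min}(S - \zeta\,\gm)$ directly: concavity of $F$, the two-sided blow-up forced by the Lorentzian signature, Danskin's formula for the one-sided derivatives at the maximizer $\zeta_0$, and an intermediate-value argument on the minimal eigenspace $E$ to produce a unit null $\mf{X}_0 \in E$, giving $F(\zeta_0) = S(\mf{X}_0, \mf{X}_0) \geq c$. This variational route is elementary and self-contained, requires no case split in $n$ (positivity, rather than mere non-vanishing, on the null cone is exactly what rescues $n = 2$, so there is no conflict with the failure of Calabi's theorem in two dimensions), and it delivers the quantitative margin $F(\zeta_0) \geq c$ together with the interval-of-admissible-$\zeta$ structure that makes the continuous-selection-plus-mollification step work uniformly over the compact $\bar{\mi{D}}$. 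What the cited approach buys in exchange is brevity and an explicit normal form for the pair $(S, \gm)$; it would still have to carry out the same smooth uniform selection of $\zeta$ that you do.

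Two small repairs. First, for $\dim E = 1$ the unit sphere of $E$ is $\{\pm \mf{X}\}$, which is \emph{not} connected, so your parenthetical ``for $\dim E \geq 1$'' is off; but since $\gm(\mf{X}, \mf{X}) = \gm(-\mf{X}, -\mf{X})$, the straddling condition $\min \leq 0 \leq \max$ forces the common value to vanish, so the null eigenvector exists anyway (alternatively, run the intermediate-value argument on the projectivization of $E$, which is connected in all dimensions). Second, your midpoint selection in fact carries a uniform margin worth making explicit: since $\zeta_0(p)$ lies on one side of the midpoint $m(p)$ of the interval $\{F(p, \cdot) \geq c/2\}$, concavity gives $F(p, m(p)) \geq \frac{3c}{4}$, which is precisely the nondegeneracy needed for the mollified $\zeta \in C^\infty(\mi{I})$ to retain the estimate with constant $c/2$.
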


\begin{remark}
See \cite[Corollary 3.5]{mcgill_shao:psc_aads} or \cite[Lemma 4.3]{tat:notes_uc} for proofs of Proposition \ref{thm.hard}.
Note the $n = 2$ case can be proved directly, but when $n \geq 3$, the main (rather nontrivial) observation is that two bilinear forms that do not vanish simultaneously can be simultaneously diagonalized \cite{greub:lin_alg}.
\end{remark}

Combining \eqref{eq.pseudoconvex_pre} with Proposition \ref{thm.hard}, we conclude that \emph{if the GNCC holds, then \eqref{eq.pseudoconvex_goal} holds sufficiently near the conformal boundary, and the near-boundary level sets of $f$ are pseudoconvex}.

\begin{remark}
In fact, the above computations give a slightly stronger bound:
\[
- ( \nabla^2 f + w \cdot g ) ( X, X ) \gtrsim \rho f^2 \cdot \mf{h} ( \mf{X}, \mf{X} ) \text{.}
\]
The specific weights on the right-hand side eventually lead to corresponding weights in \eqref{eq.carleman}.
\end{remark}

As for the Carleman estimate \eqref{eq.carleman}, the process is similar to other proofs of Carleman estimates for wave equations.
Conjugating the Klein-Gordon operator $( \Box_g + \sigma )$ with the Carleman weight,
\[
\mc{L} := e^{ - \frac{ \lambda f^p }{p} } f^{ -\kappa } ( \Boxm_g + \sigma ) e^{ \frac{ \lambda f^p }{p} } f^\kappa \text{,}
\]
we can then roughly view the proof of \eqref{eq.carleman} (modulo some additional weights) as a positive commutator estimate for $\mc{L}$.
The challenge is to obtain positivity for the $H^1$-norm of $\ms{\Phi}$:
\begin{itemize}
\item For derivatives of $\ms{\Phi}$ along the normal to the level sets of $f$, the positivity comes, as usual, for free from the square of the odd part of $\mc{L}$.

\item For $\ms{\Phi}$ itself, the positivity comes from the choice of Carleman weight---in fact, from choosing the parameter $\kappa$ to be sufficiently large compared to $\sigma$.

\item For derivatives of $\ms{\Phi}$ along the level sets of $f$, we look, as usual, to the commutator between the odd and even parts of $\mc{L}$.
For this, the principal part is roughly given by
\[
- f^{n-3} ( \nabla^2 f + w \cdot g ) ( \nablam \ms{\Phi}, \nablam \ms{\Phi} ) \text{,}
\]
which we can show is positive due to the GNCC and pseudoconvexity.
\end{itemize}
The proof of \eqref{eq.carleman} is completed (after a substantial amount of computations) by using the above observations, and by noting that all remaining quantities are ``lower-order" and can be absorbed.

\begin{remark}
The proof of \eqref{eq.carleman} in \cite{chatz_shao:uc_ads_gauge} does not use a positive commutator argument, but instead derives positivity directly through geometric computations via integrations by parts.
We adopted the language of positive commutators here, since it is widely used in the Carleman estimate literature.
\end{remark}

Finally, all the above is significantly complicated by the zero-pseudoconvexity, which introduces various weights that vanish at the conformal boundary.
In particular, considerable care is needed to ensure that terms which should be ``lower-order" can still be absorbed into the ``principal" terms---that is, that the various weights are compatible with each other.
For details of these computations, the reader is referred to the full proof of Theorem \ref{thm.carleman} in \cite[Section 5]{chatz_shao:uc_ads_gauge}.

\section{Proof of Theorem \ref{thm.correspondence}} \label{sec.proof}

In this section, we give an outline of the proof of our main result, Theorem \ref{thm.correspondence}, and we highlight the main ideas behind the proof.
The reader is referred to \cite{hol_shao:uc_ads_eve} for further details.

\subsection{Reduction to Unique Continuation}

Let us now assume the hypotheses of Theorem \ref{thm.correspondence}.
As a matter of convention, we use the same notations for quantities defined from both $g$ and $\bar{g}$, except that objects associated with $\bar{g}$ have a bar over their symbols.

The initial step is to reduce Theorem \ref{thm.correspondence} to a setting in which can perform analysis.
By pulling $\bar{\mi{I}}$ and $\bar{\mi{M}} := ( 0, \bar{\rho}_0 ] \times \bar{\mi{I}}$ back through the gauge transformation $\phi$, we are then in the setting in which $\bar{\mi{I}} = \mi{I}$ and $\bar{\mi{M}} = \mi{M}$---more specifically, $\bar{\mi{M}}$ is pulled back to $\mi{M}$ via
\[
( \rho, p ) \in \mi{M} \mapsto ( \rho, \phi (p) ) \in \bar{\mi{M}} \text{.}
\]
Furthermore, $( \gb{0}, \gb{n} )$ and $( \gbc{0}, \gbc{n} )$ are now related on $\mi{D}$ by \eqref{eq.fg_conformal}--\eqref{eq.fg_change}, for some conformal factor $\mf{a}$.
Next, applying a change of FG boundary defining function $\rho \mapsto \bar{\rho}$ satisfying the boundary condition \eqref{eq.fg_rho}, we can then further reduce to the case in which
\begin{equation}
\label{eq.simple} \mi{I} = \bar{\mi{I}} \text{,} \qquad \mi{M} = \bar{\mi{M}} \text{,} \qquad ( \gb{0}, \gb{n} ) |_{ \mi{D} } = ( \gbc{0}, \gbc{n} ) |_{ \mi{D} } \text{.}
\end{equation}

Now that $g$ and $\bar{g}$---as well as all the quantities defined from them---lie on a common manifold, the proof will be complete if we can show that $g = \bar{g}$, or equivalently, $g - \bar{g} = 0$.
By the last part of \eqref{eq.simple}, and by the properties of the partial FG expansion in Theorem \ref{thm.aads_fg}, we have that
\begin{equation}
\label{eq.fg_match} \gb{k} - \gbc{k} = 0 \text{,} \qquad \gb{\star} - \gbc{\star} = 0 \text{,} \qquad 0 \leq k \leq n \text{.}
\end{equation}
Thus, thanks to \eqref{eq.aads}, Theorem \ref{thm.correspondence} will be proved if we show that \eqref{eq.fg_match} implies
\begin{equation}
\label{eq.goal} \gv - \bar{\gv} = 0 \text{.}
\end{equation}

\subsection{The Wave-Transport System}

The strategy for proving \eqref{eq.goal} will be to formulate $\gv - \bar{\gv}$ as an unknown in a closed system of (vertical) tensorial transport and wave equations, and to apply our Carleman estimates to this system to conclude that all its unknowns (including $\gv - \bar{\gv}$) vanish.

In this subsection, we derive a viable wave-transport system.
For this, we also define
\begin{equation}
\label{eq.Lv} \Lv_{ab} := \mi{L}_\rho \gv_{ab} \text{.}
\end{equation}
In addition, we let $W$ denote the Weyl curvature for $g$, and we decompose $W$ into vertical quantities:
\begin{equation}
\label{eq.wv} \wv^0_{ a b c d } := \rho^2 \, W_{ a b c d } \text{,} \qquad \wv^1_{ a b c } := \rho^2 \, W_{ \rho a b c } \text{,} \qquad \wv^2_{ a b } := \rho^2 \, W_{ \rho a \rho b } \text{,}
\end{equation}
The unknowns for our system will be constructed from $\gv$, \eqref{eq.Lv}, and \eqref{eq.wv}.

\begin{remark}
We use lowercase Latin indices for vertical components, i.e., in directions along $\mi{I}$.
\end{remark}

From the Gauss-Codazzi equations on level sets of $\rho$ and \eqref{eq.eve}, one derives the following:
\begin{equation}
\label{eq.transport_init} \mi{L}_\rho \Lv_{ab} = -2 \wv^2_{ab} + \rho^{-1} \Lv_{ab} + \frac{1}{2} \gv^{cd} \Lv_{ad} \Lv_{bc} \text{,} \qquad \Dv_b \Lv_{ a c } - \Dv_a \Lv_{ b c } = 2 \wv^1_{ c a b } \text{.}
\end{equation}
Of course, analogous formulas also hold with respect to quantities constructed from $\bar{\gv}$.
We wish to couple the equations \eqref{eq.transport_init} to wave equations satisfied by the Weyl curvature $W$:
\begin{equation}
\label{eq.wave_spt} \Box_g W + 2n W = W \cdot W \text{.}
\end{equation}
(The right-hand side is quadratic in $W$; see \cite[Proposition 3.4]{hol_shao:uc_ads_eve} or \cite[Proposition 4.5]{mcgill:loc_ads} for the exact formula or a derivation.)
For this, we define one additional renormalization for $\wv^0$:
\begin{equation}
\label{eq.wstar} \wv^\star_{ a b c d } := \wv^0_{ a b c d } - \frac{1}{ n - 2 } ( \gv_{ a d } \wv^2_{ b c } + \gv_{ b c } \wv^2_{ a d } - \gv_{ a c } \wv^2_{ b d } - \gv_{ b d } \wv^2_{ a c } ) \text{.} 
\end{equation}
Then, one obtains, from \eqref{eq.wave_spt}, the following tensorial wave equations for $\wv^2$, $\wv^1$, and $\wv^\star$ (see either \cite[Proposition 3.6]{hol_shao:uc_ads_eve} or \cite[Proposition 4.15]{mcgill:loc_ads} for detailed derivations):
\begin{align}
\label{eq.wave_init} \Boxm_g \wv^2 + 2 ( n - 2 ) \wv^2 &= \ms{N}^2 ( \gv, \Lv, \Dv \Lv, \wv^\star, \wv^1, \wv^2, \Dv \wv^\star, \Dv \wv^1, \Dv \wv^2 ) \text{,} \\
\notag \Boxm_g \wv^1 + ( n - 1 ) \wv^2 &= \ms{N}^1 ( \gv, \Lv, \Dv \Lv, \wv^\star, \wv^1, \wv^2, \Dv \wv^\star, \Dv \wv^1, \Dv \wv^2 ) \text{,} \\
\notag \Boxm_g \wv^\star &= \ms{N}^\star ( \gv, \Lv, \Dv \Lv, \wv^\star, \wv^1, \wv^2, \Dv \wv^\star, \Dv \wv^1, \Dv \wv^2 ) \text{.}
\end{align}
Here, $\ms{N}^2 (\cdot)$, $\ms{N}^1 (\cdot)$, $\ms{N}^\star (\cdot)$ represent nonlinear terms involving (contractions of) the listed quantities that decay sufficiently quickly toward the conformal boundary.

\begin{remark}
The power $\rho^2$ in \eqref{eq.wv} is crucial, as this is the only power of $\rho$ for which the resulting quantities will satisfy wave equations that are compatible with \eqref{eq.carleman}.
The renormalization \eqref{eq.wstar} is made for the same reason, as the wave equation for $\wv^0$ cannot be used with \eqref{eq.carleman}.
\end{remark}

\begin{remark}
That different Klein-Gordon masses appear in \eqref{eq.wave_init}, at least when $n > 3$, is due to $\wv^\star$, $\wv^1$, and $\wv^2$ having different asymptotic behaviors at the conformal boundary.
(The case $n = 3$ is exceptional, since $\wv^1$ and $\wv^2$ have the same mass, and $\wv^\star$ is fully determined by $\wv^1$ and $\wv^2$.)
As a result, we must treat the components $\wv^\star$, $\wv^1$, $\wv^2$ separately in our analysis.
\end{remark}

Subtracting \eqref{eq.Lv}, \eqref{eq.transport_init}, and \eqref{eq.wave_init} from their counterparts for $\bar{\gv}$ yields a closed wave-transport system for the quantities $\gv - \bar{\gv}$,  $\Lv - \bar{\Lv}$, $\wv^\star - \bar{\wv}^\star$, $\wv^1 - \bar{\wv}^1$, and $\wv^2 - \bar{\wv}^2$.
However, \emph{this system fails to close for the purpose of applying our Carleman estimates}.
In particular, the wave Carleman estimate \eqref{eq.carleman} will only allow us to control up to one derivative of $\wv^\star - \bar{\wv}^\star$, $\wv^1 - \bar{\wv}^1$, and $\wv^2 - \bar{\wv}^2$.
This, in turn, only allows to control---through \eqref{eq.Lv} and \eqref{eq.transport_init}---one derivative of $\Lv - \bar{\Lv}$ and $\gv - \bar{\gv}$.
On the other hand, when we take differences of the wave equations \eqref{eq.wave_init}, we obtain terms involving the difference of $\smash{\Boxm}$ and $\smash{\bar{\Boxm}}$, which contains second derivatives of $\gv - \bar{\gv}$ that we a priori cannot treat.

The resolution, inspired by the symmetry extension result \cite{io_kl:killing} of Ionescu and Klainerman, is to apply a careful renormalization of the system that eliminates the troublesome quantities.

The first crucial observation is that while $\Dv^2 ( \gv - \bar{\gv} )$ is off limits, we can obtain some control if one derivative is a curl.
In particular, \eqref{eq.Lv} and the second part of \eqref{eq.transport_init} yield, schematically,
\begin{align*}
\mi{L}_\rho [ \Dv_{db} ( \gv - \bar{\gv} )_{ac} - \Dv_{da} ( \gv - \bar{\gv} )_{bc} ] &\sim \Dv_d [ ( \Dv_b \Lv_{ a c } - \Dv_a \Lv_{ b c } ) - ( \bar{\Dv}_b \bar{\Lv}_{ a c } - \bar{\Dv}_a \bar{\Lv}_{ b c } ) ] \\
&\sim \Dv_d ( \wv^1 - \bar{\wv}^1 )_{cab} \text{.}
\end{align*}
The above still does not quite suffice, and we need one more renormalization---this is due to error terms on the right-hand side involving $\Dv ( \check{\Dv} - \Dv )$, which again contain the undesirable $\Dv^2 ( \gv - \check{\gv} )$.
All this leads us to define the following auxiliary quantities:
\begin{align}
\label{eq.intro_proof_B} \ms{B}_{cab} &:= \Dv_c ( \gv - \bar{\gv} )_{ab} - \Dv_a ( \gv - \bar{\gv} )_{cb} - \Dv_b \ms{Q}_{ca} \text{,} \\
\notag \mi{L}_\rho \ms{Q}_{ca} &:= \gv^{de} \Lv_{ce} ( \gv - \bar{\gv} )_{ad} - \gv^{de} \Lv_{ae} ( \gv - \bar{\gv} )_{cd} \text{,}
\end{align}
with $\ms{Q} \rightarrow 0$ as $\rho \rightarrow 0$.
One then shows $\Dv \ms{B}$ can indeed be adequately controlled by $\Dv ( \wv^1 - \bar{\wv}^1 )$.

The second crucial observation comes from a detailed examination of the difference $\smash{ \Boxm - \bar{\Boxm} }$.
To appreciate this, let us consider the wave equation for $\wv^2 - \bar{\wv}^2$:
\begin{equation}
\label{eq.box_diff_1} \Boxm ( \wv^2 - \bar{\wv}^2 ) = \Boxm \wv^2 - \bar{\Boxm} \bar{\wv}^2 - ( \Boxm - \bar{\Boxm} ) \bar{\wv}^2 \text{.}
\end{equation}
The dangerous terms arise from the following (rather long) computation,
\begin{align}
\label{eq.box_diff_2} ( \Boxm - \bar{\Boxm} ) \bar{\wv}^2_{ab} &= - \frac{1}{2} \rho^2 \gv^{cd} \gv^{ef} \bar{\wv}^2_{eb} \Dv_c \ms{B}_{afd} - \frac{1}{2} \gv^{ef} \bar{\wv}^2_{eb} \Boxm ( \gv - \bar{\gv} + \ms{Q} )_{af} \\
\notag &\qquad - \frac{1}{2} \rho^2 \gv^{cd} \gv^{ef} \bar{\wv}^2_{ea} \Dv_c \ms{B}_{bfd} - \frac{1}{2} \gv^{ef} \bar{\wv}^2_{ea} \Boxm ( \gv - \bar{\gv} + \ms{Q} )_{bf} + \ms{Err}_{ab} \text{,}
\end{align}
where $\ms{Err}$ consists of (many) terms containing only difference quantities that we can control.

The key point is that the only instances of $\Dv^2 ( \gv - \check{\gv} )$ appear either as $\Dv \ms{B}$, which we can control, or as $\smash{\Boxm}$ applied to difference quantities.
This leads us to the renormalized curvature difference
\begin{equation}
\label{eq.W} \ms{W}^2_{ab} := \wv^2_{ab} - \bar{\wv}^2_{ab} + \frac{1}{2} \gv^{de} \bar{\wv}^2_{ad} ( \gv - \bar{\gv} + \ms{Q} )_{be} + \frac{1}{2} \gv^{de} \bar{\wv}^2_{db} ( \gv - \bar{\gv} + \ms{Q} )_{ae} \text{,}
\end{equation}
which in essence shifts the $\smash{\Boxm}$-terms from the right-hand side of \eqref{eq.box_diff_2} into the left; one can also define the remaining $\ms{W}^1$ and $\ms{W}^\star$ similarly.
In light of \eqref{eq.box_diff_1} and \eqref{eq.box_diff_2}, we obtain that $\ms{W}^\star$, $\ms{W}^1$, $\ms{W}^2$ satisfy wave equations that do not contain $\Dv^2 ( \gv - \check{\gv} )$ as sources.

Finally, the renormalized wave-transport system is obtained by treating the quantities
\begin{equation}
\label{eq.unknowns} \gv - \bar{\gv} \text{,} \qquad \ms{Q} \text{,} \qquad \Lv - \bar{\Lv} \text{,} \qquad \ms{B} \text{,} \qquad \ms{W}^\star \text{,} \qquad \ms{W}^1 \text{,} \qquad \ms{W}^2
\end{equation}
as unknowns.
In particular, from the above discussions, and from various asymptotic properties of geometric quantities, we arrive at the (schematic) transport equations
\begin{align}
\label{eq.transport} \mi{L}_\rho ( \gv - \bar{\gv} ) &= \Lv - \bar{\Lv} \text{,} \\
\notag \mi{L}_\rho \ms{Q} &= \mc{O} (\rho) \, ( \gv - \bar{\gv}, \ms{Q} ) \text{,} \\
\notag \mi{L}_\rho \ms{B} &= 2 ( \wv^1 - \bar{\wv}^1 ) + \mc{O} (\rho) \, ( \gv - \bar{\gv}, \ms{Q}, \ms{B} ) + \mc{O} (1) \, ( \Lv - \bar{\Lv} ) \text{,} \\
\notag \mi{L}_\rho [ \rho^{-1} ( \Lv - \bar{\Lv} ) ] &= - 2 \rho^{-1} \ms{W}^2 + \mc{O} (1) \, ( \gv - \bar{\gv}, \Lv - \bar{\Lv}, \ms{Q} ) \text{,}
\end{align}
coupled to the following (schematic) wave equations:
\begin{align}
\label{eq.wave} \left. \begin{matrix} \Boxm \ms{W}^2 + 2 ( n - 2 ) \ms{W}^2 \\ \Boxm \ms{W}^1 + ( n - 1 ) \ms{W}^1 \\ \Boxm \ms{W}^\star \end{matrix} \right\} &= \sum_{ \ms{V} \in \{ {\ms{W}^\star}, \ms{W}^1, \ms{W}^2 \} } \left[ \mc{O} (\rho^2) \ms{V} + \mc{O} (\rho^3) \Dv \ms{V} \right] + \mc{O} (\rho) \, ( \Lv - \bar{\Lv} ) \\
\notag &\qquad + \mc{O} (\rho^2) \, ( \gv - \bar{\gv}, \ms{Q}, \Dv ( \gv - \bar{\gv} ), \Dv \ms{Q}, \Dv ( \Lv - \bar{\Lv} ), \Dv \ms{B} ) \text{.}
\end{align}
The $\mc{O} ( \cdot )$'s in \eqref{eq.transport}--\eqref{eq.wave} indicate the asymptotics of various coefficients as $\rho \rightarrow 0$.

For more precise formulas and derivations, see \cite[Proposition 3.13, Proposition 3.14]{hol_shao:uc_ads_eve}.
In particular, the wave-transport system \eqref{eq.transport}--\eqref{eq.wave} indeed closes from the point of view of derivatives.

\begin{remark}
The system \eqref{eq.transport}--\eqref{eq.wave} could also be used to derive unique continuation results for the EVE near finite timelike hypersurfaces, providing an alternative approach to that of \cite{alex:uc_vacuum}.
\end{remark}

\begin{remark}
Note that \cite{biq:uc_einstein, chru_delay:uc_killing}, the Riemannian and stationary analogues of Theorem \ref{thm.correspondence}, avoid the issues that make the extended system \eqref{eq.transport}--\eqref{eq.wave} necessary.
In particular, the main equations in \cite{biq:uc_einstein, chru_delay:uc_killing} are elliptic rather than hyperbolic, and the Carleman estimates in this setting control one extra derivative.
Thus, the analogue of $\Dv^2 ( \gv - \bar{\gv} )$ does not cause any issues there.

Furthermore, \cite{biq:uc_einstein, chru_delay:uc_killing} avoid the curvature entirely.
Instead, they can work with a much simpler system that is based on second-order elliptic equations for $\Lv - \bar{\Lv}$.

Finally, since all hypersurfaces are pseudoconvex in elliptic settings, \cite{biq:uc_einstein, chru_delay:uc_killing} avoid all the major difficulties of zero-pseudoconvexity and of constructing pseudoconvex hypersurfaces.
\end{remark}

\begin{remark}
As previously mentioned, our system \eqref{eq.transport}--\eqref{eq.wave} is heavily inspired by \cite{io_kl:killing}, which devised an analogous wave-transport system for proving a Killing field extension result across finite hypersurfaces.
In particular, \cite{io_kl:killing} used as unknowns the analogues of $\mi{L}_K g$, $\nabla \mi{L}_K g$, $\mi{L}_K W$, along with a number of renormalizations, where $K$ is the Killing vector field to be extended.

In fact, a modification of the wave-transport system in \cite{io_kl:killing} (that also decomposes the unknowns into vertical tensor fields) can be used to give a direct proof of Theorem \ref{thm.killing}, without appealing to Theorem \ref{thm.correspondence} or Theorem \ref{thm.symmetry}.
This modified wave-transport system would have the same qualities as \eqref{eq.transport}--\eqref{eq.wave}, one we can similarly apply our Carleman estimates to this system.

Furthermore, one can draw a direct parallel between the system in \cite{io_kl:killing} and our wave-transport system \eqref{eq.transport}--\eqref{eq.wave}.
More specifically, one can map the unknowns in \cite{io_kl:killing} to the unknowns here by replacing every $\mi{L}_K$ applied to a quantity in \cite{io_kl:killing} by the corresponding difference of that quantity for two metrics here.
(To be fully precise, one must also decompose spacetime quantities into vertical tensor fields.)
Furthermore, the renormalizations in \cite{io_kl:killing} can be related to our renormalizations in this same manner.
For details, see the discussions in \cite[Section 1.5]{hol_shao:uc_ads_eve}.
\end{remark}

\subsection{Unique Continuation}

With the wave-transport system \eqref{eq.transport}--\eqref{eq.wave} in place, we now apply our Carleman estimates to this system to derive unique continuation---in particular \eqref{eq.goal}.

The first step is to show that the unknowns of \eqref{eq.transport}--\eqref{eq.wave} vanish to high enough order at the conformal boundary.
For this, we begin by expressing them in terms of our partial FG expansion \eqref{eq.aads_fg}.
Using that the expansions for $\gv$ and $\bar{\gv}$ match up to $n$-th order by \eqref{eq.fg_match}, we derive
\begin{equation}
\label{eq.vanish_mid} \gv - \bar{\gv} = o ( \rho^n ) \text{,} \qquad \Lv - \bar{\Lv}, \wv^1 - \bar{\wv}^1 = o ( \rho^{n-1} ) \text{,} \qquad \wv^2 - \bar{\wv}^2, \wv^0 - \bar{\wv}^0 = o ( \rho^{n-2} ) \text{.}
\end{equation}

From here, the key observation is the following system (see \cite[Proposition 3.15]{hol_shao:uc_ads_eve} for details):
\begin{align}
\label{eq.transport_2} \mi{L}_\rho ( \gv - \bar{\gv} ) &= \Lv - \bar{\Lv} \text{,} \\
\notag \mi{L}_\rho [ \rho^{-1} ( \Lv - \bar{\Lv} ) ] &= - 2 \rho^{-1} ( \wv^2 - \bar{\wv}^2 ) + \mc{O} ( \rho ) \, ( \gv - \bar{\gv} ) + \mc{O} (1) \, ( \Lv - \bar{\Lv} ) \text{,} \\
\notag \mi{L}_\rho [ \rho^{2-n} ( \wv^2 - \bar{\wv}^2 ) ] &= \mc{O} ( \rho^{2-n} ) \, ( \Dv ( \wv^1 - \bar{\wv}^1 ), \Lv - \bar{\Lv} ) \\
\notag &\qquad + \mc{O} ( \rho^{3-n} ) \, ( \gv - \bar{\gv}, \Dv ( \gv - \bar{\gv} ), \wv^0 - \bar{\wv}^0, \wv^2 - \bar{\wv}^2 ) \text{,} \\
\notag \mi{L}_\rho [ \rho^{-1} ( \wv^1 - \bar{\wv}^1 ) ] &= \mc{O} ( \rho^{-1} ) \, ( \Dv ( \wv^2 - \bar{\wv}^2 ), \Lv - \bar{\Lv} ) \\
\notag &\qquad + \mc{O} (1) \, ( \gv - \bar{\gv}, \Dv ( \gv - \bar{\gv} ), \wv - \bar{\wv}^1 ) \text{,} \\
\notag \mi{L}_\rho ( \wv^0 - \bar{\wv}^0 ) &= \mc{O} ( \rho^{-1} ) \, ( \wv^2 - \bar{\wv}^2 ) + \mc{O} (1) \, ( \Dv ( \wv^1 - \bar{\wv}^1 ), \gv - \bar{\gv}, \Lv - \bar{\Lv} ) \\
\notag &\qquad + \mc{O} ( \rho ) \, ( \Dv ( \gv - \bar{\gv} ), \wv^0 - \bar{\wv}^0 ) \text{.} 
\end{align}
Note that the first two parts of \eqref{eq.transport_2} are just \eqref{eq.Lv} and \eqref{eq.transport} (without renormalizations), while the last three parts of \eqref{eq.transport_2} are consequences of the Bianchi equations for $W$.
Integrating \eqref{eq.transport_2} from $\rho = 0$, one can improve the vanishing in \eqref{eq.vanish_mid} by two powers of $\rho$:
\begin{equation}
\label{eq.vanishing_high} \gv - \bar{\gv} = o ( \rho^{n+2} ) \text{,} \qquad \Lv - \bar{\Lv}, \wv^1 - \bar{\wv}^1 = o ( \rho^{n+1} ) \text{,} \qquad \wv^2 - \bar{\wv}^2, \wv^0 - \bar{\wv}^0 = o ( \rho^n ) \text{.}
\end{equation}

\begin{remark}
The initial vanishing \eqref{eq.vanish_mid} from the partial FG expansions is crucial here.
Due to the $\rho^{2-n}$ in the third part of \eqref{eq.transport_2}, we cannot use \eqref{eq.transport_2} to improve our vanishing without \eqref{eq.vanish_mid}.
\end{remark}

\begin{remark}
The equations \eqref{eq.transport_2} have a hierarchical structure, in that we have to apply them in a specific order to correctly derive \eqref{eq.vanishing_high}.
In particular, in \cite{hol_shao:uc_ads_eve}, the quantities are estimated in the following order:\ $\wv^2 - \bar{\wv}^2$, $\wv^0 - \bar{\wv}^0$, $\Lv - \bar{\Lv}$, $\gv - \bar{\gv}$, $\wv^1 - \bar{\wv}^1$.
\end{remark}

The vanishing rates can be further improved by repeating the above argument, so that each iteration improves all vanishing rates by $\rho^2$.
From this, we conclude that all the unknowns \eqref{eq.unknowns} vanish to arbitrarily high order at the conformal boundary.
As a result, \emph{there will be no boundary terms present when we apply the Carleman estimates}.

\begin{remark}
More accurately, we can obtain as much vanishing for \eqref{eq.unknowns} as we have regularity in the vertical directions for $\gv$.
As a result, we must also assume sufficient regularity in Theorem \ref{thm.correspondence} in order to achieve the vanishing rates needed for \eqref{eq.unknowns} so that the Carleman estimates can be applied.
However, this is mainly a technical issue, and we avoid dealing with this here.
\end{remark}

From here, the process is mostly standard.
We fix constants
\begin{equation}
\label{eq.f_const} 0 < f_i < f_e < f_\star \text{,} \qquad \kappa \gg 1 \text{,}
\end{equation}
with $f_\star$ sufficiently small and $\kappa$ large enough, and we fix a smooth cutoff function
\begin{equation}
\label{eq.cutoff} \chi := \hat{\chi} (f) \text{,} \qquad \hat{\chi} (s) = \begin{cases} 1 & s < f_i \text{,} \\ 0 & s > f_e \text{.} \end{cases}
\end{equation}

Applying the wave Carleman estimate \eqref{eq.carleman} on $\Omega_{ f_\star }$ to $\chi \ms{W}^\star$, $\chi \ms{W}^1$, and $\chi \ms{W}^2$, summing the results, and recalling the wave equations \eqref{eq.wave}, we obtain that
\begin{align}
\label{eq.uc_wave} &\lambda \int_{ \Omega_i } w_\lambda (f) \sum_{ \ms{V} \in \{ \ms{W}^\star, \ms{W}^1, \ms{W}^2 \} } ( \rho^{2p} | \ms{V} |^2 + \rho^4 | \ms{D} \ms{V} |^2 ) \, d g \\
\notag &\quad \lesssim \int_{ \Omega_i } w_\lambda (f) \sum_{ \ms{U} \in \{ \gv - \bar{\gv}, \ms{Q}, \Lv - \bar{\Lv}, \ms{B} \} } ( \rho^{ 2 - p } | \ms{U} |^2 + \rho^{ 4 - p } | \Dv \ms{U} |^2 ) \, d g \\
\notag &\quad\qquad + \int_{ \Omega_i } w_\lambda (f) \sum_{ \ms{V} \in \{ \ms{W}^\star, \ms{W}^1, \ms{W}^2 \} } ( \rho^{ 4 - p } | \ms{V} |^2 + \rho^{ 6 - p } | \Dv \ms{V} |^2 ) \, d g + \int_{ \Omega_e } w_\lambda (f) \, ( \dots ) \, d g \text{,}
\end{align}
where
\begin{equation}
\label{eq.weight} w_\lambda (f) := e^{ -2\lambda p^{-1} f^p } f^{n-2-2\kappa} \text{,} \qquad \Omega_i := \{ 0 < f_i \} \text{,} \qquad \Omega_e := \{ f_i \leq f < f_e \} \text{.}
\end{equation}
(The ``$\dots$" in the $\Omega_e$-integral depends on the unknowns \eqref{eq.unknowns}, various weights in $\rho$ and $f$, and the cutoff $\chi$.
Its precise contents are irrelevant, as we only require that this integral is finite.)

For the quantities satisfying transport equations, we must apply a separate transport Carleman estimate (see \cite[Proposition 4.6]{hol_shao:uc_ads_eve} for the precise statement and proof):

\begin{proposition}[Carleman estimate for transport equations \cite{hol_shao:uc_ads_eve}] \label{thm.carleman_transport}
Assume the same setup as Theorem \ref{thm.carleman}, and fix $s \geq 0$.
Then, the following holds for any vertical tensor field $\ms{\Psi}$ on $\mi{M}$; large enough $\kappa$, $\lambda$ (depending on $n$ and $s$); small enough $f_\star$ (depending on $\gv$ and $\mi{D}$); and $0 < p < \frac{1}{2}$:
\begin{align}
\label{eq.carleman_transport} &\int_{ \Omega_{ f_\star } } e^{ -\lambda p^{-1} f^p } f^{ n - 2 - p - 2 \kappa } \rho^{s+2} | \mi{L}_\rho \ms{\Psi} |^2 \, dg + \lambda \limsup_{ \rho_\star \searrow 0 } \int_{ \Omega_{ f_\star } \cap \{ \rho = \rho_\ast \} } \rho^s | \rho^{ - \kappa - 1 } \ms{\Psi} |^2 \, d \gv \\
\notag &\quad \gtrsim \lambda \int_{ \Omega_{ f_\star } } e^{ -\lambda p^{-1} f^p } f^{ n - 2 - 2 \kappa } \rho^s | \ms{\Psi} |^2 \, dg \text{.}
\end{align}
\end{proposition}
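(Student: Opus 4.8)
The plan is to prove \eqref{eq.carleman_transport} as a weighted energy estimate obtained from a single integration by parts in the $\rho$-direction, with the vertical ($\mi{I}$-)directions acting as passive spectators. Since the transport operator $\mi{L}_\rho$ differentiates only in $\rho$, the estimate is in essence a one-dimensional weighted Hardy-type inequality along each integral curve of $\partial_\rho$, integrated against $d\gv$ over the cross-sections. It is convenient to work first with the metric-compatible derivative $\Dv_\rho$ rather than $\mi{L}_\rho$: because $\Dv_\rho \gv = 0$, the scalar identity $\mi{L}_\rho |\ms{\Psi}|^2 = 2\langle \Dv_\rho \ms{\Psi}, \ms{\Psi}\rangle$ holds \emph{exactly}. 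At the very end I would convert $|\Dv_\rho \ms{\Psi}|^2$ back into $|\mi{L}_\rho \ms{\Psi}|^2$; the two differ by a tensor-algebraic correction with coefficient $\sim \gv^{-1}\Lv$, and since $\Lv = \mi{L}_\rho \gv = \mc{O}(\rho)$ near the conformal boundary (by Theorem \ref{thm.aads_fg}), this produces only errors of the form $\mc{O}(\rho^2)\,|\ms{\Psi}|^2$, which will be harmless.

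The core of the argument is the following integration by parts. Writing $dg = \rho^{-(n+1)}\,d\rho\,d\gv$ and recalling $f = \rho/\eta$ with $\partial_\rho f = f/\rho$, I would integrate over $\Omega_{f_\star}$ the total $\rho$-derivative of the primitive
\[
P := e^{-\lambda p^{-1} f^p} f^{\,n-2-2\kappa}\rho^{\,s-n}\,|\ms{\Psi}|^2 .
\]
The point is that $P$ carries one more power of $\rho$ than the weight $e^{-\lambda p^{-1}f^p}f^{n-2-2\kappa}\rho^{s-n-1}$ appearing in the right-hand side of \eqref{eq.carleman_transport}. Differentiating the exponential and power factors, and using $\partial_\rho f = f/\rho$ throughout, produces a coefficient of size $\rho^{-1}(\lambda f^p + 2\kappa + \mc{O}(1))$ multiplying the target weight times $|\ms{\Psi}|^2$; for $\kappa$ large and positive this is precisely the coercive term that dominates the right-hand side of \eqref{eq.carleman_transport}. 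The remaining bulk contribution is the cross term $2\int_{\Omega_{f_\star}} P\,|\ms{\Psi}|^{-2}\langle \Dv_\rho \ms{\Psi}, \ms{\Psi}\rangle$, and there are two boundary contributions from the endpoints of the $\rho$-integration.

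For the cross term I would apply Young's inequality with the $\rho$-dependent balancing weight $\delta \sim f^{-p}\rho$, chosen so that the resulting $|\Dv_\rho \ms{\Psi}|^2$-term reproduces exactly the first term on the left of \eqref{eq.carleman_transport} (the two extra powers of $\rho$ there arising one from the primitive $P$ and one from $\delta$). The companion $|\ms{\Psi}|^2$-term then acquires a coefficient comparable to $f^p$ times the target weight, which is subordinate to the $\lambda f^p$ part of the coercive bracket and hence absorbable once $\lambda \geq 2$. Of the two boundary terms, the outer one on $\{f = f_\star\}$ appears with a favorable sign and may simply be discarded, which is exactly why no outer boundary term occurs in \eqref{eq.carleman_transport}; the inner one, as $\rho \searrow 0$, is precisely the $\limsup$ integral of $\rho^s|\rho^{-\kappa-1}\ms{\Psi}|^2$. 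Collecting everything, converting $\Dv_\rho$ back to $\mi{L}_\rho$, and absorbing all subprincipal errors (which carry positive powers of $\rho$, using that $\eta$ and its derivatives are bounded on the compact set $\bar{\mi{D}}$), yields \eqref{eq.carleman_transport}; the overall factor $\lambda$ on the right is delivered by the coercive bracket once $\kappa$ is taken sufficiently large relative to $\lambda$ and $s$.

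I expect the main obstacle to be the \emph{matching of competing weights} rather than any single inequality. One must check that the weight $f^{-p}\rho^2$ carried by the derivative term is consistent with the $\rho^{-1}(\lambda f^p + \kappa)$ coercivity, so that the Young-split $|\ms{\Psi}|^2$-remainder is genuinely dominated by the coercive term \emph{uniformly up to the boundary}; this is where $0 < p < \tfrac{1}{2}$ and the smallness of $f_\star$ enter, and where a naive choice of balancing weight (or of the primitive) fails as $\rho \searrow 0$. A secondary technical point is verifying that the passage between $\mi{L}_\rho$ and $\Dv_\rho$, together with the $\rho$-dependence of the tensor norm, generates only $\mc{O}(\rho)$ corrections, so that the vertical-tensorial character of $\ms{\Psi}$ does not interfere with the essentially scalar, one-dimensional mechanism behind the estimate.
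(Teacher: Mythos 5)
Your overall architecture is exactly the natural one for this proposition (whose proof the survey defers to \cite[Proposition 4.6]{hol_shao:uc_ads_eve} with the remark that it is straightforward): integrate the total $\rho$-derivative of the primitive $P = e^{-\lambda p^{-1}f^p} f^{\,n-2-2\kappa}\rho^{\,s-n}|\ms{\Psi}|^2$ along the $\rho$-lines over $\Omega_{f_\star}$, discard the favorably-signed outer boundary term on $\{f = f_\star\}$ (which is indeed why, unlike Theorem \ref{thm.carleman}, no vanishing there is required), identify the inner limit with the $\limsup$ term in \eqref{eq.carleman_transport}, and split the cross term by a weighted Young inequality. Your accounting of the subprincipal errors is also correct: the passage between $\mi{L}_\rho$ and $\Dv_\rho$ costs $\gv^{-1}\Lv = \mc{O}(\rho)$, and the $\rho$-dependence of $d\gv$ costs $\operatorname{tr}_{\gv}\Lv = \mc{O}(\rho)$, all absorbable. (One caveat on your ``exact identity'': $\langle\cdot,\cdot\rangle_{\gv}$ is not positive definite, since $\gv$ is Lorentzian, so the Cauchy--Schwarz step must be run with the Riemannian reference norm from the remark after Theorem \ref{thm.carleman}; this introduces an extra $\mc{O}(1)$ commutation error, which large $\kappa$ swallows.)

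The genuine gap is in your final bookkeeping. With your balancing weight $\delta \sim f^{-p}\rho$, the surviving coercive bracket is $\rho^{-1}\bigl((\lambda - 1)f^p + 2\kappa + \mc{O}(1)\bigr)$, and since $f^p \to 0$ at the conformal boundary this is \emph{not} $\gtrsim \lambda\rho^{-1}$ for fixed $\kappa$; you patch this by taking ``$\kappa$ sufficiently large relative to $\lambda$,'' but that proves a strictly weaker statement than \eqref{eq.carleman_transport}, whose thresholds for $\kappa$ and $\lambda$ depend only on $n$ and $s$. The coupling $\kappa \gtrsim \lambda$ would also obstruct the intended application in Section \ref{sec.proof}, where \eqref{eq.uc_wave} and \eqref{eq.uc_transport} are summed at a fixed $\kappa$ and one then lets $\lambda \nearrow \infty$. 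The fix is one line: multiply your differential identity by $\lambda$ before splitting, and use the $\lambda$-dependent balance $\delta \sim \rho f^{-p}/\lambda$, i.e.
\begin{equation*}
2\lambda\,|\langle \Dv_\rho \ms{\Psi}, \ms{\Psi}\rangle| \leq \rho f^{-p}\,|\Dv_\rho\ms{\Psi}|^2 + \lambda^2 f^p \rho^{-1}\,|\ms{\Psi}|^2 \text{.}
\end{equation*}
The $\lambda^2 f^p$ remainder is then spent entirely against the $\lambda \cdot \lambda f^p$ piece of the multiplied coercive bracket, leaving the contribution $\lambda \cdot (2\kappa + \mc{O}(1)) \gtrsim \lambda$ from the derivative of the $f^{-2\kappa}$-weight, with $\kappa$ and $\lambda$ thresholds independent of one another; this rebalancing also explains why the boundary term in \eqref{eq.carleman_transport} carries the factor $\lambda$ (it is the $\lambda$ you multiplied by, not an improvement of the estimate). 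A final small point: the restriction $0 < p < \frac{1}{2}$ plays no essential role in the transport estimate---any $p > 0$ works here---and is inherited from the wave estimate so that the two estimates can share a single Carleman weight.
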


\begin{remark}
While the wave Carleman estimate, Theorem \ref{thm.carleman}, was quite involved and required several new ideas to treat, the transport Carleman estimate is straightforward to prove.
\end{remark}

We now apply Proposition \ref{thm.carleman_transport} to $\gv - \bar{\gv}$, $\ms{Q}$, $\Lv - \bar{\Lv}$, $\ms{B}$, and their first derivatives (with the same $\kappa$ and $f_\star$ as before, and with appropriate $s$).
Recalling \eqref{eq.transport} and summing the results yields
\begin{align}
\label{eq.uc_transport} &\lambda \int_{ \Omega_i } w_\lambda (f) \sum_{ \ms{U} \in \{ \gv - \bar{\gv}, \ms{Q}, \Lv - \bar{\Lv}, \ms{B} \} } ( | \ms{U} |^2 + \rho^3 | \Dv \ms{U} |^2 ) \, d g \\
\notag &\quad \lesssim \int_{ \Omega_i } w_\lambda (f) \sum_{ \ms{U} \in \{ \gv - \bar{\gv}, \ms{Q}, \Lv - \bar{\Lv}, \ms{B} \} } ( \rho^{ 2 - p } | \ms{U} |^2 + \rho^{ 5 - p } | \Dv \ms{U} |^2 ) \, d g \\
\notag &\quad \qquad + \int_{ \Omega_i } w_\lambda (f) \sum_{ \ms{V} \in \{ \ms{W}^\star, \ms{W}^1, \ms{W}^2 \} } ( \rho^{ 2 - p } | \ms{V} |^2 + \rho^{ 5 - p } | \Dv \ms{V} |^2 ) \, d g \text{.}
\end{align}

The key point here is that the $\rho$-weights on the right-hand sides of \eqref{eq.uc_wave} and \eqref{eq.uc_transport} come from the $\mc{O} ( \cdot )$-coefficients in \eqref{eq.transport}--\eqref{eq.wave}.
The final crucial feature of our system is these $\rho$-weights are strong enough that, after summing \eqref{eq.uc_wave} and \eqref{eq.uc_transport}, \emph{the $\Omega_i$-integrals on the right-hand side can be absorbed into the left-hand side} (once $\lambda$ is sufficiently large).
From the above, we conclude that
\[
\lambda \int_{ \Omega_i } w_\lambda (f) \sum_{ \ms{V} \in \{ \ms{W}^\star, \ms{W}^1, \ms{W}^2 \} } \rho^{2p} | \ms{V} |^2 \, d g + \lambda \int_{ \Omega_i } w_\lambda (f) \sum_{ \ms{U} \in \{ \gv - \bar{\gv}, \ms{Q}, \Lv - \bar{\Lv}, \ms{B} \} } | \ms{U} |^2 \, d g \lesssim \int_{ \Omega_e } w_\lambda (f) \, ( \dots ) \, d g \text{.}
\]
Finally, $w_\lambda (f)$ in the above can be removed in the standard fashion by noting that $w_\lambda (f) \leq w_\lambda ( f_i )$ on $\Omega_e$ and $w_\lambda (f) \geq w_\lambda ( f_i )$ on $\Omega_i$; this yields the estimate
\[
\lambda \int_{ \Omega_i } \sum_{ \ms{V} \in \{ \ms{W}^\star, \ms{W}^1, \ms{W}^2 \} } \rho^{2p} | \ms{V} |^2 \, d g + \lambda \int_{ \Omega_i } \sum_{ \ms{U} \in \{ \gv - \bar{\gv}, \ms{Q}, \Lv - \bar{\Lv}, \ms{B} \} } | \ms{U} |^2 \, d g \lesssim \int_{ \Omega_e } ( \dots ) \, d g \text{.}
\]
Letting $\lambda \nearrow \infty$ in the above, we conclude that \eqref{eq.goal} holds on $\Omega_i$.

\begin{remark}
By setting $f_i$ to be arbitrarily close to $f_\star$, we hence show that \eqref{eq.goal} holds on $\Omega_{ f_\star }$.
\end{remark}

\section{Related Problems} \label{sec.open}

We conclude this article by briefly discussing some other research directions related to Theorem \ref{thm.correspondence}.
Much of this discussion can also be found in \cite[Section 1.6]{hol_shao:uc_ads_eve}.

\subsection{Low Dimensions}

Recall that Theorems \ref{thm.correspondence}, \ref{thm.symmetry}, and \ref{thm.killing} all assume that the conformal boundary dimension $n$ is strictly greater than $2$.
This raises the question of whether analogues of these theorems hold in the low-dimensional case $n = 2$.

In fact, the problem simplifies considerably when $n = 2$ due to the rigidity of low-dimensional settings.
Since the spacetime Weyl curvature $W$ now vanishes identically, it follows that any vacuum aAdS spacetime when $n = 2$ must be locally isometric to the AdS spacetime.
Furthermore, as all curvature terms disappear from the system \eqref{eq.transport_init}, one can prove unique continuation using only transport equations (and avoiding wave equations).
This yields analogues of all our main theorems for $n=2$---but \emph{from any domain $\mi{D}$, without requiring the GNCC}.

\subsection{Extensions of the GNCC}

An often studied setting in the physics literature is the case when the boundary region $\mi{D} \subset \mi{I}$ in Theorem \ref{thm.correspondence} is a causal diamond, that is, of the form
\begin{equation}
\label{eq.causal_diamond} \mi{D} := \mc{I}^+ (p) \cap \mc{I}^- (q) \text{,} \qquad p, q \in \mi{I} \text{.}
\end{equation}
($\mc{I}^+$ and $\mc{I}^-$ denote the chronological future and past, respectively, in $( \mi{I}, \gm )$.)
Unfortunately, causal diamonds \eqref{eq.causal_diamond}, regardless of how large they are, generically fail to satisfy the GNCC when $n > 2$; see the argument in \cite[Section 3.3]{chatz_shao:uc_ads_gauge}.
As a result, Theorem \ref{thm.correspondence} fails to apply when $\mi{D}$ is as in \eqref{eq.causal_diamond}---in other words, our result cannot establish that vacuum aAdS spacetimes are uniquely determined by their holographic data on any non-pathological causal diamond.

This leads to the question of whether the GNCC can be further refined, so that Theorem \ref{thm.correspondence} can be extended to apply to $\mi{D}$ as in \eqref{eq.causal_diamond} in some capacity.
One observation here is that the failure of the GNCC is due only to the presence of corners in $\partial \mi{D}$, where the boundaries of $\mc{I}^+ (p)$ and $\mc{I}^- (q)$ intersect.
Near these corners, one can find near-boundary null geodesics ``hovering over" but avoiding $\mi{D}$.
This leads to the following question:\ \emph{could holographic data on $\mi{D}$ uniquely determine the vacuum aAdS spacetime near some proper subset $\mi{D}' \subset \mi{D}$}---in particular when $\mi{D}$ is sufficiently large, and when $\mi{D}'$ is sufficiently far from any corners in $\partial \mc{I}^+ (p) \cap \partial \mc{I}^- (q)$?

At the same time, one may ask whether this refined GNCC can also be formulated for more general domains $\mi{D} \subset \mi{I}$.
More specifically, one can formulate the following:

\begin{problem} \label{prb.egncc}
Consider the setting of Theorem \ref{thm.correspondence}.
Show that if $\mi{D}' \subseteq \mi{D}$ satisfies some (yet to be formulated) ``refined GNCC" relative to $\mi{D}$, then $( \mi{M}, g )$ is uniquely determined (up to isometry) near $\mi{D}'$ by the holographic data on $\mi{D}$ (up to gauge equivalence).
\end{problem}

Keeping with our intuitions, an optimal formulation of such an extended GNCC would be one that directly characterizes null geodesic trajectories near the conformal boundary.
Such a criterion would fully confirm the belief that the only impediment to unique continuation for waves from the conformal boundary is the near-boundary geometric optics solutions of Alinhac and Baouendi \cite{alin_baou:non_unique}.
However, a proof of such a statement may require novel ideas from microlocal analysis.

\subsection{Global Correspondences}

Recall that our main result, Theorem \ref{thm.correspondence}, is ``local" in nature, in the sense that the vacuum spacetime is only uniquely determined near a given conformal boundary region $\mi{D}$.
This is due to our general setup, which does not provide any information on the global spacetime geometry.
However, this leaves open the question of \emph{whether a more global unique continuation result can be established if more additional assumptions are imposed}.

For example, one can consider aAdS spacetimes $( \mi{M}, g )$ that are global perturbations, in some sense, of a Kerr-AdS spacetime.
One can then ask \emph{whether any given holographic data $( \mi{I}, \gb{0}, \gb{n} )$ determines the spacetime in the full domain of outer communications, or if additional conditions are needed to rule out bifurcating counterexamples}.
Also, if uniqueness holds, then another question of interest is \emph{whether one has a one-to-one correspondence between holographic data on the boundary and some (appropriately conceived) holographic data on the black hole horizon}.

In \cite[Section 6]{hol_shao:uc_ads}, the Carleman estimates of that paper were used to show that the linearized EVE on AdS spacetime (formulated as Bianchi equations for spin-$2$ fields) are globally characterized by their holographic boundary data over a sufficiently long timespan.
Upcoming work by McGill and the second author will extend this to the nonlinear setting---roughly, under additional global topological assumptions, AdS spacetime is globally uniquely determined, as a solution to the EVE, by its holographic data.
An interesting next step would be to explore whether these analyses can be extended to black hole aAdS spacetimes, such as Schwarzschild-AdS and Kerr-AdS.

\subsection{Nonlinear Counterexamples}

As mentioned before, all the intuitions for the GNCC being the crucial criterion for Theorem \ref{thm.correspondence} are based on the observation that violations of the GNCC lead to counterexamples to unique continuation for wave equations.
Certainly, it would be more preferable to \emph{construct counterexamples to unique continuation for the nonlinear EVE itself when the GNCC fails to hold}.
However, this problem is wide open and is expected to be very difficult.

\subsection{Nature of the Correspondence}

While Theorem \ref{thm.correspondence} establishes that there exists a one-to-one correspondence between vacuum aAdS spacetimes (near the conformal boundary) and some space of holographic data, it says little about the nature of the correspondence, leaving open the question of how the gravitational dynamics and the conformal field theories are connected.

One can pose a number of interesting problems here that are both wide open and challenging.
For instance, one such problem would be to \emph{determine which holographic data $( \mi{I}, \gb{0}, \gb{n} )$ can be realized as the conformal boundary data for a vacuum aAdS spacetime}.
Moreover, given holographic data that is associated to a vacuum aAdS spacetime, another challenge would be to \emph{reconstruct or approximate the aAdS spacetime that induces the holographic data}.

\raggedright
\raggedbottom
\bibliographystyle{amsplain}
\bibliography{articles, books, misc}

\end{document}